\documentclass[a4paper,onecolumn,10pt]{quantumarticle}
\pdfoutput=1
\usepackage[utf8]{inputenc}
\usepackage[T1]{fontenc}
\usepackage[english]{babel}
\usepackage{amsmath}
\usepackage{amssymb}
\usepackage{amsthm}
\usepackage{commath}
\usepackage{braket}
\usepackage{multicol}
\usepackage{multirow}
\usepackage{indentfirst}
\usepackage{multirow}
\usepackage{float}
\usepackage{tikz}
\usepackage{placeins}
\usepackage{algorithm}
\usepackage{algpseudocode}
\usepackage{chngcntr}
\usepackage{subcaption}
\usepackage{pgfplots} \pgfplotsset{compat=1.18}
\pgfplotsset{
    layers/layersettings/.define layer set={
        background2,
        background,
        main,
        foreground
    }{ },
    set layers=layersettings,
}
\usepackage{pgfplotstable}
\usepackage[numbers, sort&compress]{natbib}
\usepackage{hyperref}
\usepgfplotslibrary{colormaps}
\usepackage{microtype}
\usepackage{thmtools}
\usepackage{thm-restate}
\usepackage{booktabs}

\newtheorem{theorem}{Theorem}
\newtheorem*{theorem*}{Theorem}
\newtheorem{corollary}{Corollary}[theorem]
\newtheorem{proposition}{Proposition}
\newtheorem{definition}{Definition}
\newtheorem{problem}{Problem}
\newtheorem{lemma}{Lemma}
\newtheorem*{lemma*}{Lemma}

\newcommand\blfootnote[1]{%
  \begingroup
  \renewcommand\thefootnote{}\footnote{#1}%
  \addtocounter{footnote}{-1}%
  \endgroup
}

\pgfplotstableset{
    create on use/Gamma/.style={
        create col/expr={\thisrow{circuit_id} * pi / 60}
    }
}

\pgfplotstableset{
    create on use/TimeInSec/.style={
        create col/expr={\thisrow{wall_time_ns} / 1000000000}
    }
}

\newcommand{\qkeResult}[4]{
    \begin{tikzpicture}
        \begin{axis}[
            width=#2cm, 
            height=#3cm,
            enlargelimits=false,
            axis on top,
            scale only axis,
            xticklabel pos=upper, 
            xtick pos=upper,      
            y dir=reverse,        
            xlabel={$x_j$},
            ylabel={$x_i$},
            xtick={0,1,2,3,4,5,6,7,8,9},
            ytick={0,1,2,3,4,5,6,7,8,9},
            tick style={draw=none},
            tick label style={font=\small},
            xticklabels={$x_0$, $x_1$, $x_2$, $x_3$, $x_4$, $x_5$, $x_6$, $x_7$, $x_8$, $x_9$},
            yticklabels={$x_0$, $x_1$, $x_2$, $x_3$, $x_4$, $x_5$, $x_6$, $x_7$, $x_8$, $x_9$},
            point meta min=0,
            point meta max=1,
            colormap name=viridis,
            colorbar,
            colorbar style={
                width=0.23cm,
                ytick={
                    0,
                    0.2,
                    0.4,
                    0.6,
                    0.8,
                    1
                },
                yticklabels={
                    $0$,
                    $0.2$,
                    $0.4$,
                    $0.6$,
                    $0.8$,
                    $1$
                },
                yticklabel style={
                    font=\small
                },
                ylabel={$K(x_i,x_j)$},
                ylabel style={
                    font=\small
                }
            }
        ]
            \addplot [
                matrix plot,
                mesh/cols=10,
                point meta=explicit
            ] table [
                x expr={mod(\thisrow{circuit_id}, 10)},
                y expr={floor(\thisrow{circuit_id} / 10)},
                meta=result,
                col sep=comma
            ] {#1};
        \end{axis}
        \node[
            anchor=south west,
            font=\normalsize\bfseries
        ]
        at ([xshift=-1pt,yshift=14pt]current axis.north west)
        {#4};
    \end{tikzpicture}
}

\newcommand{\qkeTruth}[4]{
    \begin{tikzpicture}
        \begin{axis}[
            width=#2cm, 
            height=#3cm,
            enlargelimits=false,
            axis on top,
            scale only axis,
            xticklabel pos=upper, 
            xtick pos=upper,      
            y dir=reverse,        
            xlabel={$x_j$},
            ylabel={$x_i$},
            xtick={0,1,2,3,4,5,6,7,8,9},
            ytick={0,1,2,3,4,5,6,7,8,9},
            tick style={draw=none},
            tick label style={font=\small},
            xticklabels={$x_0$, $x_1$, $x_2$, $x_3$, $x_4$, $x_5$, $x_6$, $x_7$, $x_8$, $x_9$},
            yticklabels={$x_0$, $x_1$, $x_2$, $x_3$, $x_4$, $x_5$, $x_6$, $x_7$, $x_8$, $x_9$},
            point meta min=0,
            point meta max=1,
            colormap name=viridis,
            colorbar,
            colorbar style={
                width=0.23cm,
                ytick={
                    0,
                    0.2,
                    0.4,
                    0.6,
                    0.8,
                    1
                },
                yticklabels={
                    $0$,
                    $0.2$,
                    $0.4$,
                    $0.6$,
                    $0.8$,
                    $1$
                },
                yticklabel style={
                    font=\small
                },
                ylabel={$K(x_i,x_j)$},
                ylabel style={
                    font=\small
                }
            }
        ]
            \addplot [
                matrix plot,
                mesh/cols=10,
                point meta=explicit
            ] table [
                x expr={mod(\thisrow{circuit_id}, 10)},
                y expr={floor(\thisrow{circuit_id} / 10)},
                meta=truth,
                col sep=comma
            ] {#1};
        \end{axis}
        \node[
            anchor=south west,
            font=\normalsize\bfseries
        ]
        at ([xshift=-1pt,yshift=14pt]current axis.north west)
        {#4};
    \end{tikzpicture}
}

\newcommand{\qkeXi}[4]{
    \begin{tikzpicture}
        \begin{axis}[
            width=#2cm, 
            height=#3cm,
            enlargelimits=false,
            axis on top,
            scale only axis,
            xticklabel pos=upper, 
            xtick pos=upper,      
            y dir=reverse,        
            xlabel={$x_j$},
            ylabel={$x_i$},
            xtick={0,1,2,3,4,5,6,7,8,9},
            ytick={0,1,2,3,4,5,6,7,8,9},
            tick style={draw=none},
            tick label style={font=\small},
            xticklabels={$x_0$, $x_1$, $x_2$, $x_3$, $x_4$, $x_5$, $x_6$, $x_7$, $x_8$, $x_9$},
            yticklabels={$x_0$, $x_1$, $x_2$, $x_3$, $x_4$, $x_5$, $x_6$, $x_7$, $x_8$, $x_9$},
            colormap name=viridis,
            colorbar,
            colorbar style={
                width=0.23cm,
                ylabel={$\xi_\star(x_i,x_j)$},
                ylabel style={
                    font=\small
                },
                every y tick scale label/.style={
                    at={(1,1)},         
                    anchor=south west,
                    inner sep=0pt,
                    xshift=-5pt,
                    yshift=2pt,
                }
            }
        ]
            \addplot [
                matrix plot,
                mesh/cols=10,
                point meta=explicit
            ] table [
                x expr={mod(\thisrow{circuit_id}, 10)},
                y expr={floor(\thisrow{circuit_id} / 10)},
                meta=xi,
                col sep=comma
            ] {#1};
        \end{axis}
        \node[
            anchor=south west,
            font=\normalsize\bfseries
        ]
        at ([xshift=-1pt,yshift=14pt]current axis.north west)
        {#4};
    \end{tikzpicture}
}

\newcommand{\qaoaXi}[4]{
    \pgfplotstableread[col sep=comma]{#1}\result
    \begin{tikzpicture}
        \begin{axis}[
            width=#2cm,
            height=#3cm,
            xlabel={$\gamma$},
            ylabel={$\xi_\star$},
            xticklabels={0,$\pi/8$,$\pi/4$,$3\pi/8$,$\pi/2$},
            xtick={0,0.3926991, 0.7853982, 1.1780972, 1.5707963},
            xmax=1.62,
            xmin=-0.05,
            scale only axis
            ]
            \addplot[blue, smooth, mark=*, mark size=1pt] table [x=Gamma,y=xi, col sep=comma] \result;
            \node[
                anchor=north west,            font=\normalsize\bfseries
            ]
            at (rel axis cs:0.025,0.975)
            {#4};
        \end{axis}
    \end{tikzpicture}
}

\newcommand{\qkestddev}[4]{
    \begin{tikzpicture}
        \begin{axis}[
            width=#2cm, 
            height=#3cm,
            enlargelimits=false,
            axis on top,
            scale only axis,
            xticklabel pos=upper, 
            xtick pos=upper,      
            y dir=reverse,        
            xlabel={$x_j$},
            ylabel={$x_i$},
            xtick={0,1,2,3,4,5,6,7,8,9},
            ytick={0,1,2,3,4,5,6,7,8,9},
            tick style={draw=none},
            tick label style={font=\small},
            xticklabels={$x_0$, $x_1$, $x_2$, $x_3$, $x_4$, $x_5$, $x_6$, $x_7$, $x_8$, $x_9$},
            yticklabels={$x_0$, $x_1$, $x_2$, $x_3$, $x_4$, $x_5$, $x_6$, $x_7$, $x_8$, $x_9$},
            colorbar,
            colormap name=viridis,            
            colorbar style={
                width=0.23cm,
                yticklabel style={
                    font=\small
                },
                ylabel={$\hat{\eta}^2(x_i, x_j)$},
                ylabel style={
                    font=\small
                },
                y tick scale label style={
                    xshift=-8pt,  
                    yshift=0pt,   
                    anchor=south west
                }
            }
        ]
        \addplot [
                matrix plot,
                mesh/cols=10,
                point meta=explicit
            ] table [
                x expr={mod(\thisrow{circuit_id}, 10)},
                y expr={floor(\thisrow{circuit_id} / 10)},
                meta=variance_hat,
                col sep=comma
            ] {#1};
        \end{axis}
        \node[
            anchor=south west,
            font=\normalsize\bfseries
        ]
        at ([xshift=-1pt,yshift=14pt]current axis.north west)
        {#4};
    \end{tikzpicture}
}

\newcommand{\graphscatterA}[3]{
    \pgfplotstableread[col sep=comma]{#1}\result
    \begin{tikzpicture}     
        \begin{axis}[
            width=#2cm,
            height=#3cm,
            scale only axis,
            ymode = log,
            log basis y = {10},
            xmode = log,
            log basis x = {10},
            ticklabel style={
                font=\small
            },
            xlabel={$\xi_\star$}, 
            ylabel={Sample count},
            legend to name=sharedlegend,
            legend columns=-1,
            legend cell align=left,
            legend image post style={
                mark size=2.2pt
            },
            legend style={
                draw=none,
                fill=none,
                font=\small,
                /tikz/every even column/.append style={
                    column sep=1.3em
                }
            }
        ]
        \addplot[
            color = blue!80!black, 
            mark=o, 
            only marks, 
            mark size=1.5pt, 
            restrict expr to domain={
                \thisrow{xi}==1 ? 0 : 1}{0.5:1.5} 
        ] 
        table[
            x=xi, y=h_old, col sep=comma
        ] \result;
        \addlegendentry{$h_{\mathrm{Hoeffding}}$}
        \addplot[
            color = red!75!yellow, mark=square*, only marks, mark size=1.5pt, restrict expr to domain={\thisrow{xi}==1 ? 0 : 1}{0.5:1.5} 
        ] 
        table[
            x=xi, y=h_new, col sep=comma
        ] \result;
        \addlegendentry{$h_{\mathrm{CARVE}}$}
 
        \addplot[
            black, thick, loosely dashed, forget plot, domain=1e3:1e4, samples=2
        ] 
        {10 * x} 
        node[
            pos=0.5, below=5pt, font=\small
        ] 
        {$h\propto\xi_\star$};
            
        \addplot[
            black, thick, loosely dashed, forget plot, domain=1e3:1e4, samples=2
        ] 
        {1000*x^2}
        node[
            pos=0.3, above=9pt, font=\small
        ] 
        {$h\propto\xi_\star^2$};

        \node[
            anchor=north west,            font=\normalsize\bfseries
        ]
        at (rel axis cs:0.025,0.975)
        {(a)};
        \end{axis}
    \end{tikzpicture}
}

\newcommand{\graphscatterB}[3]{
    \pgfplotstableread[col sep=comma]{#1}\result
    \begin{tikzpicture}
        \begin{axis}[
            width=#2cm,
            height=#3cm,
            scale only axis,
            ymode = log,
            log basis y = {10},
            xmode = log,
            log basis x = {10},
            xtick={
                3e5,
                1e6,
                3e6
            },
            xticklabels={
                $3\!\times\!10^5$,
                $10^6$,
                $3\!\times\!10^6$
            },            
            ticklabel style={
                font=\small
            },
            xlabel={$\xi_\star$}, 
            ylabel={Sample count}
        ]            
        \addplot[
            forget plot, color = blue!80!black, mark=o, only marks, mark size=1.5pt, restrict expr to domain={\thisrow{xi}==1 ? 0 : 1}{0.5:1.5} 
        ] 
        table[
            x=xi, y=h_old, col sep=comma
        ] \result;
        \addplot[
            forget plot, color = red!75!yellow, mark=square*, only marks, mark size=1.5pt, restrict expr to domain={\thisrow{xi}==1 ? 0 : 1}{0.5:1.5} 
        ] 
        table[
            x=xi, y=h_new, col sep=comma
        ] \result;
            
        \addplot[
            black, thick, loosely dashed, forget plot, domain=5e5:1e6, samples=2
        ] 
        {200 * x} 
        node[
            pos=0.5, above=3pt, font=\small
        ] 
        {$h\propto\xi_\star$};
            
        \addplot[
            black, thick, loosely dashed, forget plot, domain=5e5:1e6, samples=2
        ] 
        {20 * x^2}
        node[
            pos=0.5, below=3pt, font=\small
        ] 
        {$h\propto\xi_\star^2$};

        \node[
            anchor=north west,                  font=\normalsize\bfseries
        ]
        at (rel axis cs:0.025,0.975)
        {(b)};
        \end{axis}
    \end{tikzpicture}
}

\newcommand{\qaoaA}[3]{
    \pgfplotstableread[col sep=comma]{#1}\result
    \begin{tikzpicture}
        \begin{axis}[
            width=#2cm,
            height=#3cm,
            xlabel={$\gamma$},
            ylabel={$E(\gamma)$},
            xticklabels={0,$\pi/8$,$\pi/4$,$3\pi/8$,$\pi/2$},
            xtick={0,0.3926991, 0.7853982, 1.1780972, 1.5707963},
            xmax=1.62,
            xmin=-0.05,
            scale only axis,
            ticklabel style={
                font=\small
            },
            legend style={
                at={(0.97,0.05)},
                anchor=south east,
                draw=none,
                fill=none,
                font=\scriptsize,
                legend cell align=left
            },
            clip mode = individual
        ]
        \addplot[
            blue!80!black, 
            smooth, 
            mark=none, 
            line width = 0.9pt
        ]
        table[
            x=Gamma, y=truth, col sep=comma
        ]\result;
        \addlegendentry{Exact $E(\gamma)$}
        \addplot[
            red!75!yellow, 
            densely dashed, 
            mark=o, 
            mark size=1.8pt, 
            line width = 0.8pt,
            mark options={
                solid,
                draw=red!75!yellow,
                line width=0.6pt
            }
        ] 
        table[
            x=Gamma, y=result, col sep=comma
        ]\result;
        \addlegendentry{Estimated $E(\gamma)$}
        \node[
            anchor=north west,                  font=\normalsize\bfseries
        ]
        at (rel axis cs:0.05,0.975)
        {(a)};
        \end{axis}
    \end{tikzpicture}
}

\newcommand{\qaoaB}[3]{
    \pgfplotstableread[col sep=comma]{#1}\result
    \begin{tikzpicture}
        \begin{axis}[
            width=#2cm,
            height=#3cm,
            xlabel={$\gamma$},             
            xticklabels={0,$\pi/8$,$\pi/4$,$3\pi/8$,$\pi/2$},
            xtick={0,0.3926991, 0.7853982, 1.1780972, 1.5707963},
            xmax=1.62,
            xmin=-0.05,            
            scale only axis,
            ylabel={$\hat{\eta}^2(\gamma)$},
            ticklabel style={
                font=\small
            }
        ]
        \addplot[
            blue, 
            mark=o, 
            mark size=1.8pt, 
            line width = 0.8pt,
            mark options={
                solid,
                draw=blue!80!black,
                line width=0.6pt
            },
            restrict x to domain=0.01:1.56
        ] 
        table[
            x=Gamma, y=normalized_var_hat, col sep=comma
        ] \result;
        \node[
            anchor=north west,                  font=\normalsize\bfseries
        ]
        at (rel axis cs:0.075,0.975)
        {(b)};
        \end{axis}
    \end{tikzpicture}
}

\newcommand{\qaoaC}[3]{
    \pgfplotstableread[col sep=comma]{#1}\result
    \begin{tikzpicture}
        \begin{axis}[
            width=#2cm,
            height=#3cm,
            xlabel={$\gamma$},
            ylabel={$E(\gamma)$},
            xticklabels={0,$\pi/8$,$\pi/4$,$3\pi/8$,$\pi/2$},
            xtick={0,0.3926991, 0.7853982, 1.1780972, 1.5707963},
            xmax=1.62,
            xmin=-0.05,
            scale only axis,
            ticklabel style={
                font=\small
            }
        ]
        \addplot[
            red!75!yellow, 
            densely dashed, 
            mark=o, 
            mark size=1.8pt, 
            line width = 0.8pt,
            mark options={
                solid,
                draw=red!75!yellow,
                line width=0.6pt
            }
        ] 
        table[
            x=Gamma, y=result, col sep=comma
        ] \result;
        \node[
            anchor=north west,                  font=\normalsize\bfseries
        ]
        at (rel axis cs:0.05,0.975)
        {(c)};
        \end{axis}
    \end{tikzpicture}
}

\newcommand{\qaoaD}[3]{
    \pgfplotstableread[col sep=comma]{#1}\result
    \begin{tikzpicture}
        \begin{axis}[
            width=#2cm,
            height=#3cm,
            xlabel={$\gamma$},             
            xticklabels={0,$\pi/8$,$\pi/4$,$3\pi/8$,$\pi/2$},
            xtick={0,0.3926991, 0.7853982, 1.1780972, 1.5707963},
            xmax=1.62,
            xmin=-0.05,
            scale only axis,
            ylabel={$\hat{\eta}^2(\gamma)$},
            ticklabel style={
                font=\small
            }
        ]
        \addplot[
            blue, 
            mark=o, 
            mark size=1.8pt, 
            line width = 0.8pt,
            mark options={
                solid,
                draw=blue!80!black,
                line width=0.6pt
            },
            restrict x to domain=0.01:1.56
        ] 
        table[
            x=Gamma,
            y=normalized_var_hat,
            col sep=comma
        ] \result;
        \node[
            anchor=north west,                  font=\normalsize\bfseries
        ]
        at (rel axis cs:0.075,0.975)
        {(d)};
        \end{axis}
    \end{tikzpicture}
}

\newcommand{\qkeA}[3]{
    \begin{tikzpicture}
        \begin{axis}[
            width=#2cm, 
            height=#3cm,
            enlargelimits=false,
            scale only axis,
            axis on top,
            xticklabel pos=upper, 
            xtick pos=upper,      
            y dir=reverse,        
            xlabel={$x_j$},
            ylabel={$x_i$},
            xtick={0,1,2,3,4,5,6,7,8,9},
            ytick={0,1,2,3,4,5,6,7,8,9},
            tick style={draw=none},
            tick label style={font=\small},
            xticklabels={$x_0$, $x_1$, $x_2$, $x_3$, $x_4$, $x_5$, $x_6$, $x_7$, $x_8$, $x_9$},
            yticklabels={$x_0$, $x_1$, $x_2$, $x_3$, $x_4$, $x_5$, $x_6$, $x_7$, $x_8$, $x_9$},
            point meta min=0,
            point meta max=1,
            colormap name=viridis,
        ]
            \addplot [
                matrix plot,
                mesh/cols=10,
                point meta=explicit
            ] table [
                x expr={mod(\thisrow{circuit_id}, 10)},
                y expr={floor(\thisrow{circuit_id} / 10)},
                meta=result,
                col sep=comma
            ] {#1};
        \end{axis}
        \node[
            anchor=south west,
            font=\normalsize\bfseries
        ]
        at ([xshift=-1pt,yshift=14pt]current axis.north west)
        {(a)};
    \end{tikzpicture}
}

\newcommand{\qkeB}[3]{
    \begin{tikzpicture}
        \begin{axis}[
            width=#2cm, 
            height=#3cm,
            enlargelimits=false,
            axis on top,
            scale only axis,
            xticklabel pos=upper, 
            xtick pos=upper,      
            y dir=reverse,        
            xlabel={$x_j$},
            ylabel={$x_i$},
            xtick={0,1,2,3,4,5,6,7,8,9},
            ytick={0,1,2,3,4,5,6,7,8,9},
            tick style={draw=none},
            tick label style={font=\small},
            xticklabels={$x_0$, $x_1$, $x_2$, $x_3$, $x_4$, $x_5$, $x_6$, $x_7$, $x_8$, $x_9$},
            yticklabels={$x_0$, $x_1$, $x_2$, $x_3$, $x_4$, $x_5$, $x_6$, $x_7$, $x_8$, $x_9$},
            point meta min=0,
            point meta max=1,
            colormap name=viridis,
            colorbar,
            colorbar style={
                width=0.23cm,
                ytick={
                    0,
                    0.2,
                    0.4,
                    0.6,
                    0.8,
                    1
                },
                yticklabels={
                    $0$,
                    $0.2$,
                    $0.4$,
                    $0.6$,
                    $0.8$,
                    $1$
                },
                yticklabel style={
                    font=\small
                },
                ylabel={$K(x_i,x_j)$},
                ylabel style={
                    font=\small
                }
            }
        ]
            \addplot [
                matrix plot,
                mesh/cols=10,
                point meta=explicit
            ] table [
                x expr={mod(\thisrow{circuit_id}, 10)},
                y expr={floor(\thisrow{circuit_id} / 10)},
                meta=truth,
                col sep=comma
            ] {#1};
        \end{axis}
        \node[
            anchor=south west,
            font=\normalsize\bfseries
        ]
        at ([xshift=-1pt,yshift=14pt]current axis.north west)
        {(b)};
    \end{tikzpicture}
}

\newcommand{\graphscatterC}[4]{
    \pgfplotstableread[col sep=comma]{#1}\result
    \begin{tikzpicture}
        \begin{axis}[
            width=#2cm,
            height=#3cm,
            scale only axis,
            ymode = log,
            log basis y = {10},
            xmode = log,
            log basis x = {10},
            xlabel={$\xi_\star$}, ylabel={Sample count},
            legend style={
                at={(0.4,0.8 )}, 
                anchor=east, 
                draw=none,
                fill=none,
                font=\small,
                /tikz/every even column/.append style={
                    column sep=1.3em
                }
            }]
            \addplot[
                color = blue!80!black, 
                mark=o, 
                only marks, 
                mark size=1.5pt, 
                restrict expr to domain={
                    \thisrow{xi}==1 ? 0 : 1}{0.5:1.5} 
            ] 
            table[
                x=xi, y=h_old, col sep=comma
            ] \result;
            \addlegendentry{$h_{\mathrm{Hoeffding}}$}
            \addplot[
                color = red!75!yellow, mark=square*, only marks, mark size=1.5pt, restrict expr to domain={\thisrow{xi}==1 ? 0 : 1}{0.5:1.5} 
            ] 
            table[
                x=xi, y=h_new, col sep=comma
            ] \result;
            \addlegendentry{$h_{\mathrm{CARVE}}$}
            
            \addplot[
                black, thick, loosely dashed, forget plot, domain=4:13, samples=2
            ] 
            {30 * x}
            node[
                pos=0.5, below = 5pt, font=\small
            ] 
            {$h\propto\xi_\star$};
            \addplot[
                black, thick, loosely dashed, forget plot, domain=4:13, samples=2
            ] 
            {170 * x * x}
            node[
                pos=0.5, above = 7pt, font=\small
            ] 
            {$h\propto\xi_\star^2$};
            
            \legend{$h_{\mathrm{Hoeffding}}$, $h_{\mathrm{CARVE}}$}
        \end{axis}
    \end{tikzpicture}
    \caption{#4}
}

\newcommand{\graphscatterD}[3]{
    \pgfplotstableread[col sep=comma]{#1}\result
    \begin{tikzpicture}     
        \begin{axis}[
            width=#2cm,
            height=#3cm,
            scale only axis,
            ymode = log,
            log basis y = {10},
            xmode = log,
            log basis x = {10},
            ticklabel style={
                font=\small
            },
            xlabel={$\xi_\star$}, 
            ylabel={Sample count},
            legend to name=sharedlegend2,
            legend columns=-1,
            legend cell align=left,
            legend image post style={
                mark size=2.2pt
            },
            legend style={
                draw=none,
                fill=none,
                font=\small,
                /tikz/every even column/.append style={
                    column sep=1.3em
                }
            }
        ]
        \addplot[
            color = blue!80!black, 
            mark=o, 
            only marks, 
            mark size=1.5pt, 
            restrict expr to domain={
                \thisrow{xi}==1 ? 0 : 1}{0.5:1.5} 
        ] 
        table[
            x=xi, y=h_old, col sep=comma
        ] \result;
        \addlegendentry{$h_{\mathrm{Hoeffding}}$}
        \addplot[
            color = red!75!yellow, mark=square*, only marks, mark size=1.5pt, restrict expr to domain={\thisrow{xi}==1 ? 0 : 1}{0.5:1.5} 
        ] 
        table[
            x=xi, y=h_new, col sep=comma
        ] \result;
        \addlegendentry{$h_{\mathrm{CARVE}}$}
 
        \addplot[
            black, thick, loosely dashed, forget plot, domain=1.3:1.45, samples=2
        ] 
        {70 * x}
        node[
            pos=0.7, below = 3pt, font=\small
        ] 
        {$h\propto\xi_\star$};

        \addplot[
            black, thick, loosely dashed, forget plot, domain=1.3:1.45, samples=2
        ] 
        {130 * x * x}
        node[
            pos=0.3, above = 5pt, font=\small
        ] 
        {$h\propto\xi_\star^2$};

        \node[
            anchor=north west,            font=\normalsize\bfseries
        ]
        at (rel axis cs:0.025,0.975)
        {(a)};
        \end{axis}
    \end{tikzpicture}
}

\newcommand{\graphscatterE}[3]{
    \pgfplotstableread[col sep=comma]{#1}\result
    \begin{tikzpicture}
        \begin{axis}[
            width=#2cm,
            height=#3cm,
            scale only axis,
            ymode = log,
            log basis y = {10},
            xmode = log,
            log basis x = {10},
            ticklabel style={
                font=\small
            },
            xlabel={$\xi_\star$}, 
            ylabel={Sample count}
        ]            
        \addplot[
            forget plot, color = blue!80!black, mark=o, only marks, mark size=1.5pt, restrict expr to domain={\thisrow{xi}==1 ? 0 : 1}{0.5:1.5} 
        ] 
        table[
            x=xi, y=h_old, col sep=comma
        ] \result;
        \addplot[
            forget plot, color = red!75!yellow, mark=square*, only marks, mark size=1.5pt, restrict expr to domain={\thisrow{xi}==1 ? 0 : 1}{0.5:1.5} 
        ] 
        table[
            x=xi, y=h_new, col sep=comma
        ] \result;
            
        \addplot[
            black, thick, loosely dashed, forget plot, domain=2.5:3, samples=2
        ] 
        {100 * x}
        node[
            pos=0.5, above = 5pt, font=\small
        ] 
        {$h\propto\xi_\star$};
            
        \addplot[
            black, thick, loosely dashed, forget plot, domain=2.5:3, samples=2
        ] 
        {150 * x * x}
        node[
            pos=0.5, above = 5pt, font=\small
        ] 
        {$h\propto\xi_\star^2$};

        \node[
            anchor=north west,                  font=\normalsize\bfseries
        ]
        at (rel axis cs:0.025,0.975)
        {(b)};
        \end{axis}
    \end{tikzpicture}
}

\DeclareMathOperator{\Tr}{Tr}
\DeclareMathOperator{\poly}{poly}

\title{Efficient expectation value estimation for quantum circuits via extended stabilizer frameworks and adaptive variance estimation}
\author{Yunseo~Hwang\texorpdfstring{${}^*$}{*}}
\affiliation{Department of Computer Science and Engineering, Seoul National University, Seoul 08826, Republic of Korea}
\affiliation{NextQuantum Center, Seoul National University, Seoul 08826, Republic of Korea}
\author{Giwon~Song\texorpdfstring{${}^*$}{*}}
\affiliation{Department of Electrical and Computer Engineering, Seoul National University, Seoul 08826, Republic of Korea}
\author{Kyoung~Keun~Park}
\affiliation{Department of Computer Science and Engineering, Seoul National University, Seoul 08826, Republic of Korea}
\affiliation{NextQuantum Center, Seoul National University, Seoul 08826, Republic of Korea}
\author{Hyukjoon~Kwon\texorpdfstring{${}^\dagger$}{\dag}}
\affiliation{School of Computational Sciences, Korea Institute for Advanced Study, Seoul 02455, Republic of Korea}
\affiliation{NextQuantum Center, Seoul National University, Seoul 08826, Republic of Korea}
\author{Taehyun~Kim\texorpdfstring{${}^\dagger$}{\dag}}
\affiliation{Department of Computer Science and Engineering, Seoul National University, Seoul 08826, Republic of Korea}
\affiliation{NextQuantum Center, Seoul National University, Seoul 08826, Republic of Korea}
\affiliation{Institute of Applied Physics, Seoul National University, Seoul 08826, Republic of Korea}
\affiliation{Institute of Computer Technology, Seoul National University, Seoul 08826, Republic of Korea}
\affiliation{Automation and Systems Research Institute, Seoul National University, Seoul 08826, Republic of Korea}

\begin{document}
\blfootnote{\texorpdfstring{${}^*$}{*}These authors contributed equally to this work.}
\blfootnote{\texorpdfstring{${}^\dagger$}{\dag}Corresponding authors: hjkwon@kias.re.kr, taehyun@snu.ac.kr}

\maketitle
\begin{abstract}
    Classical simulability of quantum circuits plays a central role in characterizing and quantifying quantum computational advantage. Typically, the presence of non-Clifford gates introduces an exponential runtime overhead for classically estimating expectation values of observables. In this work, we introduce an efficient simulation framework that integrates the sum-over-Clifford method with quasi-probability-based stabilizer simulation to directly estimate expectation values. To optimize efficiency, we incorporate a circuit-adaptive reduction via variance estimation protocol, which terminates simulations once an empirical confidence bound is sufficient to guarantee the prescribed error tolerance. Our method circumvents overly conservative Hoeffding's inequality, thereby reducing the dependence on the gate-wise stabilizer extent from quadratic to linear in the low-variance regime, while maintaining the same space complexity as previous approaches. We explicitly prove the sample-complexity reduction for random quantum circuits and $T$-doped Clifford circuits and also empirically validate these advantages in the quantum approximate optimization algorithm and quantum kernel method.
\end{abstract}

\section{Introduction}
Continued advances in classical simulation algorithms are essential for benchmarking quantum hardware \cite{MPS:2003, bravyi:2016} and for understanding the boundaries of quantum computational advantage \cite{XU20254104}. In applications such as quantum chemistry and machine learning, a central computational task is the efficient estimation of observable expectation values \cite{cerezo2021variational, peruzzo2014variational, mcclean2016theory, zhao2020measurement, huggins2021efficient, temme2017error, kandala2017hardware, farhi2014quantum, biamonte2017quantum, begusic:2024, kim:2023, QSVM, NISQalgo}. Consequently, developing classical methods that estimate observables efficiently remains an important direction in quantum computing research. A variety of techniques have been developed for efficient classical simulation of restricted classes of quantum circuits \cite{gottesman1997stabilizer, gottesman:1998, valiant2002quantum, terhal2002classical, jozsa2008matchgates, shi2006classical}. Inspired by these techniques, classically demanding but practically tractable algorithms are developed \cite{koukoulekidis2022faster, ipek2026phase, wang2022possibilistic, cichy2025classical, peres2024non, hahn2025bridging,  hahn2025classical, pashayan2022fast, hamaguchi2025faster, burgholzer2021random, dias2026optimal, yashin2025further, hakkaku2021comparative, reardon2024improved, hakkaku2021sampling, piveteau2025simulating,ballarin2025optimal, yuan2024virtual, huang2025nonstabilizerness, haug2023stabilizer, zhang2026enhancing, paviglianiti2025estimating}, and they are used in quantifying quantum circuits \cite{saxena2022quantifying}, estimating physical properties \cite{hakkaku2022quantifying, wang2023universal, marshall2023simulation}, and executing quantum-classical hybrid algorithms \cite{XU20254104, miniskar2026q, jing2025circuit, luthra2025unlocking, siri2025deep, saxena2024error}. 

For near-term and fault-tolerant quantum computing, circuits that are predominantly Clifford but contain a small number of non-Clifford operations (or ``magic'') form a major region of simulation \cite{Howard_2017}. While Clifford operations are classically efficiently simulable \cite{gottesman1997stabilizer, gottesman:1998, aaronson:2004}, non-Clifford gates such as $T$ gates or $R_Z(\theta)$ rotations introduce an exponential runtime overhead. To analyze these circuits, methods whose simulation costs are governed by magic monotones rather than entanglement structures---complementing matrix product state approaches \cite{MPS:2003, MPS:2023, aziz2026classicalsimulationslowmagic, Howard_2017, Seddon_2019}---have been developed \cite{bravyi:2016, bravyi:2019, SP:2017, DFS:2020, DS:2022, pashayan:2015}. Within these methods, stabilizer-rank based algorithms and quasi-probability based algorithms are prominent. While stabilizer-rank methods achieve a linear dependence on the magic monotone via fast norm estimation, quasi-probabilistic methods remain bottlenecked by quadratic overheads \cite{DFS:2020}. This is because the required number of samples is traditionally dictated by worst-case bounds, such as Hoeffding's inequality \cite{hoeffding:1963}. These worst-case bounds fail to capture the low-variance reality of many practical circuits, artificially forcing simulations into a massive number of Monte Carlo trajectories and stalling scalability.

In this work, we introduce a circuit-adaptive reduction via variance estimation~(CARVE) protocol, which is a generalized tool to overcome sampling bottlenecks in broader quasi-probabilistic propagation algorithms. By dynamically monitoring empirical variance rather than relying on overly conservative Hoeffding's inequality, our two-stage evaluation protocol circumvents traditional limitations, terminating simulations once an empirical confidence bound is sufficient to guarantee the prescribed error tolerance and confidence level. We instantiate this generalized methodology in extended stabilizer simulation for estimating nonstabilizer circuit expectations~(ESSENCE): our classical simulation algorithm that adapts the quasi-probabilistic version of extended stabilizer simulation~(ESS) framework and its sum-over-Clifford~(SoC) representation \cite{bravyi:2016, bravyi:2019} to directly sample stabilizer branches and evaluate corresponding cross-terms. This avoids massive channel-decomposition searches required by general dyadic-channel approaches \cite{DS:2022} and maintains a highly efficient space complexity of $\mathcal{O}(n^2)$. CARVE then acts as the variance-adaptive engine. Our variance-suppression analysis mathematically demonstrates that low-variance cases are representative of practical circuits, rather than rare corner cases. 

Our approach reduces the runtime dependence on the magic monotone from a quadratic bound to a linear one in the low-variance regime, as summarized in Table~\ref{tab:complexity}. We empirically validate these theoretical advantages, and the broader utility of our generalized sampling approach, through simulations on the quantum approximate optimization algorithm~(QAOA) ansatz for the MaxE3Lin2 problem \cite{farhi2014quantum} and quantum kernel circuits used in quantum support vector machines~(QSVM) \cite{QSVM}.

\begin{table*}[ht]
    \centering
    \caption{Classical resource requirements for estimating the expectation value of an $n$-qubit quantum circuit consisting $w$ gates with gate-wise stabilizer extent $\xi_\star$ and a Pauli-sparse observable or stabilizer projector, within error $\epsilon\|O\|$ and confidence $1-\delta$.}
    \label{tab:complexity}
    \vspace{0.5em} 
    
    \begin{tabular}{llcc} 
        \toprule 
        \multicolumn{2}{l}{Method} & Time complexity & Space complexity \\
        \midrule 
        
        \multirow{2}{*}{ESSENCE + CARVE} 
        & worst case & $\mathcal{O}\left(\xi_\star^2\epsilon^{-2}\log(\delta^{-1})w\poly(n)\right)$ & $\mathcal{O}(n^2)$ \\
        & best case & $\mathcal{O}\left(\xi_\star\epsilon^{-1}\log(\delta^{-1})w\poly(n)\right)$ & $\mathcal{O}(n^2)$ \\
        
        \bottomrule 
    \end{tabular}
\end{table*}

\section{ESSENCE Framework}
\subsection{Problem Definition}
For algorithms that estimate expectation values through sampling, asymptotic convergence alone is not sufficient for certifying their outputs. We require a finite-sample guarantee: for prescribed error tolerance $\epsilon$ and failure probability $\delta$, the estimator should return an estimate that is within $\epsilon$ of the true expectation value with probability at least $1-\delta$. Such a formulation makes the required sample complexity explicit through concentration bounds and gives a predictable runtime guarantee. Thus, we define the task,  the expectation value estimation with parameters $(\epsilon, \delta)$~(EVE$(\epsilon, \delta)$), as follows.

\begin{problem}[EVE$(\epsilon, \delta)$] \label{problem:eve}
    Let $\epsilon > 0, 0 < \delta < 1$. Given an $n$-qubit unitary circuit $C$ and a Hermitian observable $O$, construct a randomized classical estimator $\widehat{\mu}_{C, O}\in\mathbb{R}$ of $\mu_{C,O} = \braket{0^{\otimes n}|C^\dagger O C|0^{\otimes n}}$ such that
    \begin{equation*}
        \mathbb{P}\left[\abs{\widehat{\mu}_{C,O} - \mu_{C, O}} \ge \epsilon \|O\|\right] \le \delta,
    \end{equation*}
    where $\|O\|$ denotes the operator norm of $O$, and the probability is taken over the internal randomness of the estimator.
\end{problem}
In the baseline ESSENCE analysis below, we first derive a worst-case sufficient number of stabilizer-pair samples, governed by the stabilizer extent $\xi$ of ESS \cite{bravyi:2019}; CARVE later refines this conservative budget using an independent variance-estimation stage, all while preserving the strict $(\epsilon,\delta)$ guarantee.

\subsection{Stabilizer Extents} \label{sec:stabilizer-extents}

To quantify the classical simulation cost of non-Clifford operations within the ESSENCE framework, we define the stabilizer extent for quantum states, unitary operations, and gate decompositions. Let $\text{STAB}_n$ denote the set of $n$-qubit pure stabilizer states.

\begin{definition}[State Stabilizer Extent \cite{bravyi:2019}]
    The stabilizer extent $\xi(\ket{\psi})$ of a pure state $\ket{\psi} \in \mathbb{C}^{2^n}$ is defined as the minimum squared $\ell_1$-norm of the coefficients over all linear decompositions into stabilizer states:
    \begin{equation}
        \xi(\ket{\psi}) = \min \left\{ \left( \sum_{k} |c_k| \right)^2 \;\middle|\; \ket{\psi} = \sum_{k} c_k \ket{\phi_k}, \;\; \ket{\phi_k} \in \mathrm{STAB}_n \right\}.
    \end{equation}
\end{definition}

\begin{definition}[Global Stabilizer Extent \cite{bravyi:2019}]
    The global stabilizer extent $\xi(C)$ of an $n$-qubit unitary operation or circuit $C$ is defined as the minimum squared $\ell_1$-norm of the coefficients over all linear decompositions into Clifford unitaries:
    \begin{equation}
        \xi(C) = \min \left\{ \left( \sum_{k} |d_k| \right)^2 \;\middle|\; C = \sum_{k} d_k V_k, \;\; V_k \text{ is Clifford for all }k\right\}.
    \end{equation}
\end{definition}

The quantity $\xi(C)$ represents the theoretical lower bound on the simulation cost of ESSENCE given an optimal global Clifford decomposition. However, computing $\xi(C)$ for arbitrary multi-qubit circuits is computationally intractable.

\begin{definition}[Gate-Wise Stabilizer Extent]
    Let a circuit $C = U^{(w)} \cdots U^{(2)} U^{(1)}$ be specified as a sequence of $w$ unitaries $U^{(r)}$, where each $U^{(r)}$ admits a known optimal SoC decomposition $U^{(r)} = \sum_{j} c_{rj} V_{rj}$ into Clifford unitaries $V_{rj}$ that achieves its stabilizer extent $\xi(U^{(r)})$. The gate-wise stabilizer extent $\xi_\star$ is defined as the product of the stabilizer extents of the individual gates:
    \begin{equation}
        \xi_\star(C) = \prod_{r=1}^{w} \xi\left(U^{(r)}\right) = \prod_{r=1}^{w} \left( \sum_{j} |c_{rj}| \right)^2.
    \end{equation}
\end{definition}

Each computational trajectory $i=(j_1, \cdots, j_w)$ through the decomposition tree corresponds to selecting a specific Clifford gate sequence $V_{1j_1}, \cdots, V_{wj_w}$, producing a total trajectory coefficient $\alpha_i$ and an output state $\ket{\phi_i} \in \text{STAB}_n$ as
\begin{equation}
    \alpha_i = \prod_{r=1}^w c_{rj_r}, \qquad \ket{\phi_i} = \left(\prod_{r=1}^w V_{rj_r}\right)\ket{0^{\otimes n}}.
\end{equation}
By construction, the output state expands as $\ket{\psi} = C\ket{0^{\otimes n}} = \sum_i \alpha_i \ket{\phi_i}$, and the sum of coefficient magnitudes across all trajectories satisfies
\begin{equation}
    \|\boldsymbol{\alpha}\|_1^2 = \left( \sum_i |\alpha_i| \right)^2 = \xi_\star.
\end{equation}

It is important to note that the partitioning of a given circuit into unitary operations $U^{(r)}$ is flexible, and different decompositions of the same circuit can yield different values of $\xi_\star(C)$. A single $U^{(r)}$ need not correspond to an individual physical gate; it can represent a compiled block of multiple physical gates, effectively reducing the overall sequence length $w$. In the limiting case where the entire circuit is treated as a single operation ($w=1$), the gate-wise extent recovers the global extent: $\xi_\star(C) = \xi(C)$.

\subsection{Sampling Inner Products of Stabilizer States} \label{sec:sample-value}

Using the gate-wise trajectory parameters established in Section~\ref{sec:stabilizer-extents}, the sampling probability distribution over trajectories and the normalized phase-adjusted stabilizer states are defined \cite{bravyi:2019} as
\begin{equation}\label{eqn:sample-trajectory}
    p_i = \frac{|\alpha_i|}{\sqrt{\xi_\star}}, \qquad \ket{\widetilde{\phi}_i} = \frac{\alpha_i}{|\alpha_i|} \ket{\phi_i}.
\end{equation}
Here, we estimate the expectation value of a Hermitian observable $O$ by
\begin{equation}\label{eqn:exp-prob-sel}
    \mu_{C,O} = \braket{\psi|O|\psi} = \sum_{i,j} \alpha_j^*\alpha_i \braket{\phi_j|O|\phi_i} = \xi_\star \sum_{i,j} p_i p_j \braket{\widetilde{\phi}_{j}|O|\widetilde{\phi}_{i}}.
\end{equation}

Let $\ket{\varphi_1}$ and $\ket{\varphi_2}$ be two independent samples from the distribution $\{p_i,\ket{\widetilde{\phi}_i}\}$. Because $O$ is Hermitian, $\mu_{C,O}$ is real and the random variable $\xi_\star \mathfrak{Re}\braket{\varphi_1|O|\varphi_2}$ is an unbiased estimator of $\mu_{C,O}$.
We define the procedure \textsc{SampleValue}$(C, O)$ which returns the value $\mathfrak{Re}\braket{\varphi_1|O|\varphi_2}$ according to the distribution.
ESSENCE therefore estimates the expectation value by averaging $X_1, X_2, \cdots, X_h$ resulting from $h$ independent iterations of $\textsc{SampleValue}(C,O)$:
\begin{equation}
    \widehat{\mu}_{C,O} =
    \frac{\xi_\star}{h} \sum_{\ell=1}^{h} X_\ell.
\end{equation}
We can obtain the required number of samples $h = 2\xi_\star^2 \epsilon^{-2} \log(2\delta^{-1})$ by the Hoeffding's inequality \cite{hoeffding:1963}, as $X_\ell$ is bound by $[-\|O\|, \|O\|]$. This procedure is summarized in Algorithm~\ref{alg:ESSENCE}.

Conceptually, as detailed in Appendix~\ref{sec:dcs-equivalence}, this sampling procedure is equivalent to a dyadic channel simulator~(DCS) \cite{DS:2022} process running with a specific decomposition over dyadic stabilizer-preserving maps. By leveraging SoC, ESSENCE sidesteps the basis optimization overhead that traditionally limits DCS.

\begin{algorithm}[H]
    \caption{Solving \textsc{EVE}($\epsilon, \delta$) using ESSENCE with Hoeffding bound}
    \label{alg:ESSENCE}
    \begin{algorithmic}
        \Procedure{ExpVal}{$C, O$}
        \State $a \gets 0, \quad \xi_\star \gets \textsc{GetXi}(C), \quad h \gets 2\xi_\star^2\epsilon^{-2}\log(2\delta^{-1})$
        \For {$j=1$ to $h$}
        \State $a \gets a + \xi_\star\cdot \textsc{SampleValue}(C,O)$
        \EndFor
        \State \Return $a/h$
        \EndProcedure
    \end{algorithmic}
\end{algorithm}

The $\textsc{SampleValue}(C,O)$ subroutine requires sampling the cross-term $\langle\varphi_1|O|\varphi_2\rangle$ of a pair of stabilizer states. 
The calculation of each term proceeds in two steps. First, we invoke a subroutine that samples stabilizer states $\ket{\varphi_1}$ and $\ket{\varphi_2}$ independently over the trajectories of Eq.~\eqref{eqn:sample-trajectory}. Second, we evaluate the value $\braket{\varphi_1|O|\varphi_2}$ according to the type of observable. If $O=\ket{\phi}\bra{\phi}$ is a rank-one stabilizer projector, then the cross-term factorizes as $\braket{\varphi_1|O|\varphi_2} = \braket{\varphi_1|\phi}\braket{\phi|\varphi_2}$, which can be computed using stabilizer inner products. If $O$ is a general Hermitian observable $O=\sum_{i=1}^{k} c_iP_i$ decomposed into $k$ distinct Pauli operators $P_i$ with nonzero real $c_i$, then we apply the same sampled stabilizer pair to all Pauli terms:
\begin{equation}
    \braket{\varphi_1|O|\varphi_2}
    = \sum_{i=1}^{k} c_i\braket{\varphi_1|P_i|\varphi_2}
\end{equation}
As $P_i\ket{\varphi_2}$ is a stabilizer state, each term is again computed using stabilizer inner products. In Appendix~\ref{sec:algorithm-details}, we present a detailed procedure of \textsc{SampleValue} using CH-form \cite{bravyi:2019} in $\mathcal{O}(n^2)$ memory and $\mathcal{O}(n^4w + n^3k)$ time, including our inner product algorithm between the stabilizer states in $\mathcal{O}(n^3)$ time. Combining these costs with the Hoeffding sample count gives the following theorem and corollary.

\begin{theorem}\label{thm:ESSENCE}
Assume that an $n$-qubit circuit $C=U^{(w)}\cdots U^{(2)} U^{(1)}$ is given where the optimal Clifford decomposition achieving the stabilizer extent is known for each $U^{(r)}$. Then, \textnormal{\textsc{ExpVal}} solves EVE$(\epsilon, \delta)$ for Pauli or rank-one stabilizer projector $O$ with space complexity $\mathcal{O}(n^2)$ and time complexity $\mathcal{O}(\xi_\star^2\epsilon^{-2} \log(\delta^{-1})(n^4w))$, for the gate-wise stabilizer extent $\xi_\star$ of $C$.
\end{theorem}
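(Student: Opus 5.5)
The proof splits into correctness (unbiasedness plus concentration) and a cost count. The cost count relies on the \textsc{SampleValue} complexity stated in Appendix~\ref{sec:algorithm-details}: $\mathcal{O}(n^2)$ memory and $\mathcal{O}(n^4w+n^3k)$ time.

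\emph{Unbiasedness.} First check that $p_i=|\alpha_i|/\sqrt{\xi_\star}$ is a probability distribution. This follows because $\sum_i|\alpha_i|=\|\boldsymbol\alpha\|_1=\sqrt{\xi_\star}$, by the product structure of the trajectory coefficients and the optimality of each gate's SoC decomposition. Sampling a trajectory gate by gate is consistent with $p_i$. At each layer $r$, choose $j_r$ with probability $|c_{rj_r}|/\sum_j|c_{rj}|$; the product over $r$ gives exactly $p_i$, so the sampling costs $\mathcal{O}(w)$ draws and no enumeration of the exponentially many trajectories.

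Let $\ket{\varphi_1},\ket{\varphi_2}$ be independent draws. By Eq.~\eqref{eqn:exp-prob-sel},
\[
\mathbb{E}\big[\xi_\star\braket{\varphi_1|O|\varphi_2}\big]=\mu_{C,O}.
\]
Since $\mu_{C,O}$ is real, taking real parts shows that $\xi_\star X_\ell$ is unbiased.

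\emph{Concentration.} By Cauchy--Schwarz and normalization of stabilizer states, $|X_\ell|\le|\braket{\varphi_1|O|\varphi_2}|\le\|O\|$. Hence the $\xi_\star X_\ell$ are i.i.d.\ in $[-\xi_\star\|O\|,\xi_\star\|O\|]$, an interval of length $2\xi_\star\|O\|$. Hoeffding's inequality gives
\[
\mathbb{P}\big[|\widehat\mu-\mu|\ge\epsilon\|O\|\big]\le 2\exp\!\big(-h\epsilon^2/(2\xi_\star^2)\big).
\]
With $h=2\xi_\star^2\epsilon^{-2}\log(2\delta^{-1})$ this is at most $\delta$, so \textsc{ExpVal} solves EVE$(\epsilon,\delta)$.

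\emph{Complexity.} For a Pauli observable $k=1$, and for a rank-one stabilizer projector $\braket{\varphi_1|\phi}\braket{\phi|\varphi_2}$ costs two $\mathcal{O}(n^3)$ inner products. In both cases one call to \textsc{SampleValue} costs $\mathcal{O}(n^4w+n^3)=\mathcal{O}(n^4w)$. Multiplying by $h$, and using $\log(2\delta^{-1})=\mathcal{O}(\log\delta^{-1})$ for $\delta<1$ bounded away from $1$, gives the stated time bound. $\xi_\star$ itself is obtained by \textsc{GetXi} in $\mathcal{O}(w)$ arithmetic from the known decompositions.

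For space, each sample is processed and discarded, and only the running sum $a$ and a constant number of CH-forms are kept. Each CH-form has $\mathcal{O}(n^2)$ bits, so space is $\mathcal{O}(n^2)$.

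\emph{Main obstacle.} The substantive step is the per-sample cost: updating CH-forms through $w$ Clifford gates, possibly compiled blocks, in $\mathcal{O}(n^4w)$, and computing the stabilizer inner product with its phase in $\mathcal{O}(n^3)$. I take both from the appendix. If the appendix routine only gives magnitudes, I would instead use the phase-sensitive CH-form overlap of \cite{bravyi:2019}, which also runs in $\mathcal{O}(n^3)$.
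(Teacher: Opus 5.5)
Your proposal is correct and follows the paper's argument: $\xi_\star X_\ell$ is unbiased, Hoeffding applied on the range $[-\xi_\star\|O\|,\xi_\star\|O\|]$ gives $h=2\xi_\star^2\epsilon^{-2}\log(2\delta^{-1})$, and multiplying by the $\mathcal{O}(n^4w)$ per-sample cost from the appendix gives the time bound, with $\mathcal{O}(n^2)$ space for the CH-forms. You spell out details the paper leaves to the surrounding text (normalization of $p_i$, gate-by-gate trajectory sampling, the explicit Hoeffding computation, and the caveat that $\log(2\delta^{-1})=\mathcal{O}(\log\delta^{-1})$ needs $\delta$ bounded away from $1$), and these details are all sound.
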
 
\begin{proof}
    Since each sample requires $\mathcal{O}(n^4w)$ time, evaluating $h=2\xi_\star^2 \epsilon^{-2}\log(2\delta^{-1})$ samples requires $\mathcal{O}(hn^4w)$ time and $\mathcal{O}(n^2)$ space.
\end{proof}

\begin{corollary}
    Assume that an $n$-qubit circuit $C=U^{(w)}\cdots U^{(2)} U^{(1)}$ is given where the optimal Clifford decomposition achieving the stabilizer extent is known for each $U^{(r)}$. 
    Assume $O=\sum_{i=1}^k{c_iP_i}$ is Pauli-sparse Hermitian, where $P_i$ is Pauli and $k = \mathcal{O}(\poly(n))$.
    Then, \textnormal{\textsc{ExpVal}} solves EVE$(\epsilon, \delta)$ for $O$ with space complexity $\mathcal{O}(\poly(n))$ and time complexity $\mathcal{O}(\xi_\star^2\epsilon^{-2} \log(\delta^{-1})w\poly(n))$.
\end{corollary}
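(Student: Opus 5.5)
The plan is to reduce the corollary to Theorem~\ref{thm:ESSENCE} by linearity over the Pauli decomposition. Correctness and the sample count are unchanged; only the per-sample cost and the bound on each sample need rechecking. The estimator is $\widehat{\mu}_{C,O}=\frac{\xi_\star}{h}\sum_\ell X_\ell$ with $X_\ell=\mathfrak{Re}\braket{\varphi_1|O|\varphi_2}$. By Eq.~\eqref{eqn:exp-prob-sel} and the independence of $\ket{\varphi_1},\ket{\varphi_2}$, each $\xi_\star X_\ell$ has mean $\mu_{C,O}$ for any Hermitian $O$, and the Pauli form of $O$ plays no role in unbiasedness.

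Next I bound the range of each sample. Since $\ket{\varphi_1},\ket{\varphi_2}$ are normalized, $|X_\ell|\le|\braket{\varphi_1|O|\varphi_2}|\le\|O\|$ by Cauchy--Schwarz. Hoeffding's inequality applied to the $h$ i.i.d.\ variables $\xi_\star X_\ell\in[-\xi_\star\|O\|,\xi_\star\|O\|]$ then gives
\begin{equation*}
\mathbb{P}\left[|\widehat{\mu}_{C,O}-\mu_{C,O}|\ge\epsilon\|O\|\right]\le 2\exp\!\left(-\frac{h\epsilon^2}{2\xi_\star^2}\right)\le\delta
\end{equation*}
for $h=2\xi_\star^2\epsilon^{-2}\log(2\delta^{-1})$. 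The point to stress is that the bound uses $\|O\|$ and not $\sum_i|c_i|$, so the sample count matches Theorem~\ref{thm:ESSENCE} exactly and the Pauli sparsity enters only through the cost of each sample.

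For the per-sample cost, I invoke the Appendix~\ref{sec:algorithm-details} accounting. Sampling the two trajectories and propagating the CH-forms through $w$ Clifford factors costs $\mathcal{O}(n^4w)$. Computing $\sum_i c_i\braket{\varphi_1|P_i|\varphi_2}$ costs $\mathcal{O}(n^3)$ per Pauli term: apply $P_i$ to the CH-form in $\mathcal{O}(n^2)$, then take one stabilizer inner product in $\mathcal{O}(n^3)$. The total per sample is $\mathcal{O}(n^4w+n^3k)=\mathcal{O}(w\poly(n))$ for $k=\poly(n)$, using $w\ge1$. Multiplying by $h$ gives the stated time $\mathcal{O}(\xi_\star^2\epsilon^{-2}\log(\delta^{-1})w\poly(n))$.

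For space, the CH-forms take $\mathcal{O}(n^2)$, a temporary copy of $P_i\ket{\varphi_2}$ is reused across terms, and storing the description of $O$ takes $\mathcal{O}(kn)=\poly(n)$, giving $\mathcal{O}(\poly(n))$ overall.

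The only delicate point is the range bound on $X_\ell$. A naive term-by-term bound would give $\sum_i|c_i|$, which can be much larger than $\|O\|$ and would not match the $\epsilon\|O\|$ error target. I handle this by bounding the full cross-term directly with the operator norm, as above, rather than summing bounds over the Pauli terms. The remaining steps are bookkeeping.
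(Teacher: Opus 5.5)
Your proposal is correct and follows the paper's route. The paper's proof only observes that each of the $k=\mathcal{O}(\poly(n))$ Pauli terms is evaluated in polynomial time per sampled stabilizer pair, so the sample count and bounds of Theorem~\ref{thm:ESSENCE} carry over; your explicit Hoeffding check with the range bound $|X_\ell|\le\|O\|$ (rather than $\sum_i|c_i|$) and your $\mathcal{O}(n^4w+n^3k)$ per-sample accounting make precise what the paper leaves implicit.
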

\begin{proof}
    For each sampled stabilizer pair, all $k=\mathcal{O}(\poly(n))$ Pauli elements can be evaluated in polynomial time. Hence the total time and space complexities remain $\mathcal{O}(\xi_\star^2\epsilon^{-2}    \log(\delta^{-1})w\poly(n))$ and $\mathcal{O}(\poly(n))$, respectively.
\end{proof}

\section{CARVE Protocol}
\subsection{Variance Estimation}
A common drawback of Hoeffding-based sample-size prescriptions is that the resulting bounds are often overly conservative. The worst-case bounds fail to capture the low-variance reality of many practical circuits, which forces the simulation into an artificially high number of Monte Carlo trajectories. We therefore estimate the empirical variance and use it to obtain a tighter Bernstein-type stopping rule, by using Welford's online algorithm \cite{welford:1962} for memory efficiency. We call this procedure the CARVE protocol. CARVE operates in two stages: it first collects a small batch of $\kappa$ pilot samples to estimate the empirical standard deviation $s$. Then, it collects additional samples based on $s$ and combines them with the initial measurements to compute the final estimate.

Suppose that $L_O$ is a lower bound on the operator norm satisfying $0 < L_O \le \|O\|$.
In practice, for a Pauli-sparse observable whose canonical expansion is $O=\sum_{i=1}^kc_iP_i$, the orthogonality of distinct Pauli operators ($\Tr(P_iP_j)=2^n\delta_{ij}$) gives $2^{-n}\Tr(O^2)=\sum_i c_i^2$. Since $2^{-n}\Tr(O^2) \leq \|O\|^2$, we use the efficiently computable lower bound
\begin{equation}\label{eq:pauli-lower-bound}
    L_O = \|\mathbf{c}\|_2 =
    \sqrt{\sum_{i}c_i^2}.
\end{equation}

For the first stage, CARVE collects $\kappa$ pilot samples to compute the empirical variance $s^2$ of $ \textsc{SampleValue}$ in Algorithm~\ref{alg:sample-value}.  
We establish a high-probability upper bound for the operator-normalized true standard deviation:
\begin{equation}\label{eq:sigma-max}
    \sigma_{M} = \min\left(1, \frac{s}{L_O} + \sqrt{\frac{8\log(2\delta^{-1})}{\kappa-1}}\right).
\end{equation}

For the second stage, the protocol defines the total required sample count $h_\mathrm{CARVE}$:
\begin{equation}\label{eq:h-new}
    h_\mathrm{CARVE} = \max(\kappa, h_\kappa),
    \qquad h_\kappa = \left\lceil 2(\xi_\star^2\sigma_{M}^2+(\xi_\star +1)\epsilon/3)\epsilon^{-2}\log(4\delta^{-1}) \right\rceil
\end{equation}
The definitions of $\sigma_M$ and $h_\mathrm{CARVE}$ guarantee to solve $\textsc{EVE}(\epsilon, \delta)$ for any choice of $\kappa$. The detailed proof is provided in Appendix~\ref{sec:CARVE-proof-appendix}.

It remains to choose the pilot-batch size $\kappa$. Let 
\begin{equation}
    h_{\mathrm{Hoeffding}} = \left\lceil2\xi_\star^2\epsilon^{-2}\log(2\delta^{-1})   \right\rceil
\end{equation}
denote the Hoeffding sample count. We then choose $\kappa$ to reduce the total sample count in the low-variance regime $s/L_O \ll 1$. Then, ignoring contributions from $s/L_O$ and linear contributions from $\xi_\star$ in Eqs.~\eqref{eq:sigma-max} and \eqref{eq:h-new} give
\begin{equation}\label{eq:bernstein-result}
    \sigma_{M} \approx \sqrt{\frac{8\log(2\delta^{-1})}{\kappa}}, \qquad h_\kappa \approx \frac{16\xi_\star^2\epsilon^{-2} \log(2\delta^{-1}) \log(4\delta^{-1})}{\kappa} = \frac{8\log(4\delta^{-1})}{\kappa}h_{\mathrm{Hoeffding}}.
\end{equation} 
Balancing $\kappa \approx h_\kappa$ makes us pre-assign $\kappa$ as
\begin{equation}\label{eq:pilot}
    \kappa = \left\lceil\sqrt{8 \log(4\delta^{-1})h_{\mathrm{Hoeffding}}}\right\rceil = \left\lceil4\xi_\star\epsilon^{-1}\sqrt{\log(4\delta^{-1})\log(2\delta^{-1})}\right\rceil,
\end{equation}
minimizing the final iteration count to $h_{\mathrm{CARVE}} = \mathcal{O}(\xi_\star\epsilon^{-1}\log(\delta^{-1}))$. 

Algorithm~\ref{alg:final} outlines the complete procedure combining ESSENCE with CARVE to solve $\textsc{EVE}(\epsilon, \delta)$. Even when $s/L_O$ exceeds $1$, setting $\sigma_M = 1$ recovers the leading-order sample complexity by the Hoeffding's inequality. For clarity, we formally define the relative variance governing CARVE.
\begin{definition}[Relative Variance]
    The relative variance $\eta^2$ of a circuit $C$ and an observable $O$ is defined as 
    \begin{equation}
        \eta^2 = \frac{\sigma^2}{\|O\|^2},
    \end{equation}
    where $\sigma$ is the true standard deviation of the output from $\textnormal{\textsc{SampleValue}}(C, O)$. If $\sigma$ and the operator norm $\|O\|$ are unknown, its conservative empirical estimate $\hat{\eta}^2$ is given by
    \begin{equation}
        \hat{\eta}^2 = \frac{s^2}{L_O^2 },
    \end{equation}
    where $s$ is the empirical standard deviation collected over pilot samples, and $L_O$ is a lower bound on $\|O\|$ satisfying $0 < L_O \le \|O\|$.
\end{definition}

This formulation demonstrates that $h_{\mathrm{CARVE}}$ interpolates monotonically with respect to $\hat{\eta}$ between $\mathcal{O}(\xi_\star\epsilon^{-1}\log(\delta^{-1}))$ and $\mathcal{O}(\xi_\star^2\epsilon^{-2}\log(\delta^{-1}))$ as Eqs.~\eqref{eq:h-new}, \eqref{eq:bernstein-result}, and \eqref{eq:pilot} give
\begin{equation}
    h_{\mathrm{CARVE}} = \mathcal{O}\left(\left[\xi_\star\epsilon^{-1} + \min(\hat{\eta}^2, 1)\xi_\star^2\epsilon^{-2}\right]\log(\delta^{-1})\right).
\end{equation}

Furthermore, the implications of this variance suppression is not only restricted to ESSENCE. CARVE extends to bounded, unbiased, i.i.d. quasiprobability estimators with a known sample-range bound. As a result, it serves as a generalized tool to overcome the sampling bottlenecks that limit current classical expectation value simulation frameworks.

\begin{algorithm}[H]
    \caption{Solving \textsc{EVE}($\epsilon, \delta$) using ESSENCE with CARVE protocol}
    \label{alg:final}
    \begin{algorithmic}
        \Procedure{Welford}{$C, O$}
        \State Initialize $x \gets 0$, $t \gets 0$, $\mu \gets 0$, $\Delta \gets 0$, $M_2 \gets 0$
        \For {$j=1$ to $\kappa$}
            \State $x \gets \textsc{SampleValue}(C,O)$
            \State $t \gets t + 1$
            \State $\Delta \gets x - \mu$
            \State $\mu \gets \mu + \Delta / t$
            \State $M_2 \gets M_2 + \Delta \cdot (x - \mu)$
        \EndFor
        \State \Return $(\mu, \sqrt{M_2 / (\kappa - 1)})$
        \EndProcedure
        \Procedure{CarveExpVal}{$C, O, L_O$}
        \State $\xi_\star \gets \textsc{GetXi}(C)$
        \State $h \gets 2\xi_\star^2\epsilon^{-2}\log(2\delta^{-1})$
        \State $\kappa \gets \lceil \sqrt{8\log(4\delta^{-1})h}\rceil$
        \State $(\mu, s) \gets \textsc{Welford}(C, O)$ with $\kappa$ iterations
        \State $\hat{\eta} \gets s/L_O$
        \State $\sigma_M \gets \min(1, (\hat{\eta} + \sqrt{8\log(2\delta^{-1})/(\kappa-1)}))$
        \State $h \gets \max(\kappa, \lceil(\xi_\star^2 \sigma_M^2+(1+\xi_\star)\epsilon/3)2\epsilon^{-2}\log(4\delta^{-1})\rceil)$
        \State $a \gets \kappa \cdot \xi_\star \cdot \mu$
        \For {$j=\kappa + 1$ to $h$}
        \State $a \gets a + \xi_\star \cdot \textsc{SampleValue}(C,O)$
        \EndFor
        \State \Return $a/h$
        \EndProcedure
    \end{algorithmic}
\end{algorithm}

\subsection{Performance Analysis}
We have shown that the time complexity scales down from $\mathcal{O}(\xi_\star^2)$ to $\mathcal{O}(\xi_\star)$ in the best-case scenario. To demonstrate that this favorable scaling is not a rare anomaly, we establish rigorous theoretical upper bounds on this expected variance suppression in two key groups of circuits: idealized global unitaries and practical fault-tolerant architectures.

  We first consider the idealized case of global Haar-random unitaries. The detailed proof is provided in Appendix~\ref{sec:haar-CARVE}.

\begin{restatable}[Variance Suppression in Single Unitary]{theorem}{varianceThm}
\label{thm:variance-estimation}
    Let $U$ be a Haar random $n$-qubit unitary, and let $O$ be any non-identity Pauli. Assume that the optimal Clifford decomposition for $U$ is always given. Then, the expected relative variance $\eta^2 = \sigma^2/\|O\|^2$ of the single gate $U$ and the observable $O$ satisfies 
    \begin{equation}
        \mathbb{E}_{U \sim \text{Haar}} [\eta^2] \le \frac{2^n}{4^n - 1}.
    \end{equation}
\end{restatable}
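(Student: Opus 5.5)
The plan is to bound the variance by the second moment and then average over a Clifford twirl. The Clifford twirl should turn the fixed Pauli $O$ into a uniformly random non-identity Pauli. The target bound is exactly what the Pauli-basis completeness relation gives, averaged over the $4^n-1$ non-identity Paulis.

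\textbf{Step 1 (variance to second moment).} Since $O$ is a Pauli, $\|O\|=1$, so $\eta^2=\sigma^2$. The output of \textsc{SampleValue}$(C,O)$ is $X=\mathfrak{Re}\braket{\varphi_1|O|\varphi_2}$. Hence
\[
\sigma^2\le\mathbb{E}[X^2]\le\mathbb{E}_{\varphi_1,\varphi_2}\abs{\braket{\varphi_1|O|\varphi_2}}^2=\sum_{i,j}p_ip_j\abs{\braket{\phi_j|O|\phi_i}}^2 .
\]
The phases in $\ket{\widetilde{\phi}_i}$ drop out under the modulus. Here $p_i=\abs{d_i}/\sqrt{\xi(U)}$ and $\ket{\phi_i}=V_i\ket{0^{\otimes n}}$, where $U=\sum_i d_iV_i$ is the optimal decomposition.

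\textbf{Step 2 (Clifford covariance).} The Haar measure is invariant under $U\mapsto WU$ for any Clifford $W$. Moreover, if $U=\sum_i d_iV_i$ is optimal, then $WU=\sum_i d_i(WV_i)$ is also optimal, because left multiplication by a Clifford is a bijection on Cliffords and so preserves the extent.

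I would fix the assumed ``optimal decomposition oracle'' to be covariant in this sense. Alternatively, one can note that the bound of Step 1 holds for any optimal decomposition, so we may choose a covariant one. Then for $W$ drawn uniformly from the Clifford group, independently of $U$, we get
\[
\mathbb{E}_U\,\sigma^2\le\mathbb{E}_U\sum_{i,j}p_ip_j\,\mathbb{E}_W\abs{\braket{\phi_j|W^\dagger OW|\phi_i}}^2 .
\]
Conjugation by a uniformly random Clifford maps a fixed non-identity Pauli to a uniformly random non-identity Pauli, up to a sign that is irrelevant under the modulus. This uses the transitivity of the Clifford group on non-identity Paulis.

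\textbf{Step 3 (Pauli completeness).} For any normalized $\ket{a},\ket{b}$, the identity $\sum_{P}P\ket{b}\bra{b}P=2^n\,\Tr(\ket{b}\bra{b})\,I$, with the sum over all $4^n$ Paulis, gives
\[
\sum_P\abs{\braket{a|P|b}}^2=2^n .
\]
Removing the identity term $\abs{\braket{a|b}}^2\ge0$ and averaging over the $4^n-1$ non-identity Paulis yields
\[
\mathbb{E}_P\abs{\braket{a|P|b}}^2\le\frac{2^n}{4^n-1}.
\]
Inserting this into Step 2 and using $\sum_{i,j}p_ip_j=1$ gives the claim.

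\textbf{Main obstacle.} The delicate point is Step 2. The sampling distribution $\{p_i\}$ and the states $\phi_i$ depend on $U$ through the optimal decomposition, which is not unique. We must therefore justify that the decomposition can be chosen covariantly, meaning that the decomposition for $WU$ is $\{d_i,WV_i\}$. This ensures that the Haar average really does induce the Pauli twirl on $O$ while leaving the weights $p_ip_j$ unchanged. I would handle this by defining the oracle on Clifford-coset representatives and extending it by left multiplication, which is measurable and covariant by construction. This is the point that needs care; the rest is the standard Pauli-basis computation.
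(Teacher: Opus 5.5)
Your proposal is correct and follows essentially the same route as the paper: use left-invariance of the Haar measure to write $U=CV$ with $C$ a uniformly random Clifford, bound $\sigma^2$ by the second moment $\sum_{i,j}p_ip_j\lvert\langle\phi_i|O|\phi_j\rangle\rvert^2$, twirl $O$ into a uniformly random non-identity Pauli (sign irrelevant under the modulus), and finish with Pauli completeness after dropping the identity term. The covariance of the decomposition that you flag as the delicate point is exactly what the paper assumes without comment when it states that the $p_i$ ``remain invariant'' under $CV=\sum_i\alpha_i(CK_i)$, so your treatment of that step is, if anything, more careful than the paper's.
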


While Theorem~\ref{thm:variance-estimation} establishes the theoretical limits of variance suppression under idealized global randomness, practical fault-tolerant quantum computation is often modeled using highly structured circuits composed predominantly of Clifford operations interleaved with a discrete number of $T$-gates. Specifically, we consider sparse-$T$ circuits which contains exactly one $T$-gate per Clifford layer. We demonstrate that CARVE exponentially suppresses the relative variance with respect to both the qubit count $n$ and the $T$-gate count $t$ under this specific model as follows:
\begin{restatable}[Variance Suppression in Clifford + Sparse T circuits]{theorem}{varianceNISQthm}
\label{thm:nisq-thm}
    For a random $n$-qubit circuit $C$ defined as
    \begin{equation}
        C = U^{(t)}T_{k_t}U^{(t-1)}T_{k_{t-1}}\cdots U^{(1)}T_{k_1}U^{(0)},
    \end{equation}
    where every Clifford gate $U^{(r)}$ and qubit index $k_r \in [n]$ is sampled independently and uniformly, the expected relative variance $\eta^2 = \sigma^2 / \|O\|^2$ of the circuit $C$ and any observable $O$ satisfies
    \begin{equation}
        \mathbb{E}_{C}[\eta^2] < 0.5^n + 0.75^t.
    \end{equation}
\end{restatable}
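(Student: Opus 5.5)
We bound the variance by the second moment. Write $X=\mathfrak{Re}\braket{\varphi_1|O|\varphi_2}$, so that $\sigma^2\le\mathbb{E}[X^2]$. By independence of $\varphi_1,\varphi_2$,
\begin{equation*}
\mathbb{E}[X^2]\le \mathbb{E}\,\abs{\braket{\varphi_1|O|\varphi_2}}^2=\sum_{i,j}p_ip_j\abs{\braket{\phi_j|O|\phi_i}}^2 .
\end{equation*}
Each trajectory is a product of the Clifford branches $V_{rj}$ of the SoC decompositions. The Clifford layers $U^{(r)}$ contribute a single branch with coefficient $1$, and each $T_{k}$ is decomposed with its optimal two-term Clifford decomposition $T=aI+bZ$ (up to phase), acting on qubit $k$. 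A trajectory is therefore labelled by a string $x\in\{0,1\}^t$ recording whether $I$ or $Z_{k_r}$ was chosen at layer $r$, with $p_x=\prod_r q_{x_r}$, where $q_0=\abs{a}/(\abs{a}+\abs{b})$ and $q_1=\abs{b}/(\abs{a}+\abs{b})$. Writing $W_x=U^{(t)}Z_{k_t}^{x_t}\cdots Z_{k_1}^{x_1}U^{(0)}$, the quantity to bound is
\begin{equation*}
\mathbb{E}_C\sum_{x,y}p_xp_y\abs{\bra{0}W_y^\dagger O W_x\ket{0}}^2 .
\end{equation*}

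We split the sum according to whether $x=y$ or $x\neq y$.

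\textbf{Diagonal pairs $x=y$.} The term equals $\abs{\braket{\phi|O|\phi}}^2$ for the stabilizer state $\phi=W_x\ket 0$. Since $U^{(t)}$ is a uniformly random Clifford independent of everything before it, $\phi$ is a uniformly random stabilizer state. Expanding $O$ in Paulis, and using that a random stabilizer state has each non-identity Pauli with expectation $\pm1$ with probability $(2^n-1)/(4^n-1)$ and $0$ otherwise, gives
\begin{equation*}
\mathbb{E}\abs{\braket{\phi|O|\phi}}^2\le \|O\|^2\cdot\mathcal{O}(2^{-n}),
\end{equation*}
up to a contribution from the identity component of $O$. I would handle that identity part separately: it contributes only through the mean and cancels in $\sigma^2$, or else it is absorbed by centering. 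This should yield the $0.5^n$ term.

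\textbf{Off-diagonal pairs $x\neq y$.} Let $r$ be the last layer at which $x_r\ne y_r$. Then
\begin{equation*}
W_y^\dagger O W_x=A^\dagger Z_{k_r}\,\bigl(\text{conjugated }O\bigr)\cdots ,
\end{equation*}
and the random Clifford $U^{(r)}$ together with the later layers twirls the inserted $Z_{k_r}$ into a uniformly random non-identity Pauli. The resulting cross term is then bounded by the mean over random Paulis $P$ of $\abs{\braket{\phi|O'P|\phi}}^2$, which is again $\mathcal{O}(2^{-n})\|O\|^2$ or small. The cleaner route, however, is to bound the total off-diagonal weight by the probability that $x\neq y$:
\begin{equation*}
\sum_{x\ne y}p_xp_y=1-\Bigl(q_0^2+q_1^2\Bigr)^t .
\end{equation*}
This gives the wrong direction, so instead I would use $\sum_{x=y}p_x^2=(q_0^2+q_1^2)^t$. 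For $T$ we have $q_0=\cos(\pi/8)/(\cos(\pi/8)+\sin(\pi/8))\approx0.707$, so $q_0^2+q_1^2\approx 0.586<0.75$. This indicates the intended split is the reverse of the one above. The pairs that collide ($x=y$, or more generally $x,y$ differing only on layers where the twirl is ineffective) carry weight at most $0.75^t$ and are bounded trivially by $\|O\|^2$. The genuinely distinct pairs are twirled by a random Clifford into random Pauli overlaps, which contribute $\mathcal{O}(2^{-n})$.

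The main obstacle is this off-diagonal twirling step. The paths $x$ and $y$ share the random Cliffords, so the terms are correlated, and the random site $k_r$ matters because two different Z-insertions on the same qubit can cancel. I would first try to prove by induction on $t$ that, averaging over $U^{(r)}$ and $k_r$, each layer either multiplies the "coherent" weight by at most $3/4$ or converts it into an incoherent random-Pauli contribution bounded by $2^{-n}$. I would then sum the resulting geometric series to get strict inequality, with $\|O\|$ normalization throughout.
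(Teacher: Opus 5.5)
There is a genuine gap, and it starts in the setup. The decomposition $T\propto\cos\frac{\pi}{8}\,I-i\sin\frac{\pi}{8}\,Z$ is not optimal. Its squared $\ell_1$-norm is $1+1/\sqrt{2}\approx 1.71$, whereas $\xi(T)=4-2\sqrt{2}\approx 1.17$ is attained by $T\propto I+e^{-i\pi/4}S$, so ESSENCE applies $I$ or $S_{k_r}$ with probability $1/2$ each. With these branches your final split cannot work. The $(3/4)^t$ is not the weight of any class of ``colliding'' pairs: true collisions $x=y$ carry only $\sum_x p_x^2=2^{-t}$. It comes instead from the \emph{size} of overlaps between distinct branches. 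Two paths that differ at a single layer, with common prefix state $\ket{\psi}$, satisfy $\abs{\braket{\psi|S_k|\psi}}^2=\frac12\bigl(1+\braket{\psi|Z_k|\psi}^2\bigr)\ge\frac12$. For $O=I$, which no Clifford twirl moves, the distinct pairs therefore contribute roughly $(3/4)^t-2^{-t}$ in expectation, not $\mathcal{O}(2^{-n})$. Your picture of random-Pauli overlaps of size $2^{-n}$ is an artifact of the Pauli ($I/Z$) branches. The identity part of $O$ also does not cancel. $X=\mathfrak{Re}\braket{\varphi_1|O|\varphi_2}$ pairs two independent branches, so $c_I I$ contributes the fluctuating term $c_I\,\mathfrak{Re}\braket{\varphi_1|\varphi_2}$, and in any case you had already passed to the uncentred $\mathbb{E}[X^2]$. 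Finally, the layer-wise induction meant to produce the factor $3/4$ is only announced, not carried out.

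The idea that closes the argument is to stop splitting pairs altogether. With $\rho_b=\sum_xp_x\ket{\phi_x}\bra{\phi_x}$ and $\bar O=O/\|O\|$, one has $\sum_{x,y}p_xp_y\abs{\bra{\phi_y}\bar O\ket{\phi_x}}^2=\Tr(\rho_b\bar O\rho_b\bar O)\le\Tr(\rho_b^2)$ by Cauchy--Schwarz and $\|\bar O\|=1$. This removes $O$, including its trace part, and all cross-term correlations at once. The ensemble $\rho_b$ is the output on $\ket{0^{\otimes n}}\bra{0^{\otimes n}}$ of the channel in which each $T_{k_r}$ becomes $\rho\mapsto\frac12\rho+\frac12S_{k_r}\rho S_{k_r}^\dagger$. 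This is $\mathcal{E}_{Z_{k_r}}$, where $\mathcal{E}_P(\rho)=\frac12\rho+\frac14(I-iP)\rho(I+iP)$. Pushing the cumulative Cliffords $W^{(r)}=U^{(r)}\cdots U^{(0)}$ through gives $\mathcal{E}_{P^{(t)}}\circ\cdots\circ\mathcal{E}_{P^{(1)}}$ followed by $W^{(t)}$, with $P^{(r)}=W^{(r-1)\dagger}Z_{k_r}W^{(r-1)}$. These $P^{(r)}$ are i.i.d.\ uniform signed non-identity Paulis because the $W^{(r)}$ are i.i.d.\ uniform, so $k_r$ plays no role. In the Pauli expansion of $\rho$, $\mathcal{E}_P$ fixes the components commuting with $P$ and halves the squared weight of the anticommuting ones. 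A uniform $P$ anticommutes with a fixed $Q\neq I$ with probability $2^{2n-1}/(4^n-1)$, hence $\mathbb{E}_P\Tr(\mathcal{E}_P(\rho)^2)\le 2^{-n}+\frac34(\Tr\rho^2-2^{-n})$. This bound is linear in the input purity, and $P^{(r)}$ is independent of the earlier layers. Iterating from purity $1$ therefore gives $\mathbb{E}_C\Tr(\rho_b^2)\le 2^{-n}+(3/4)^t(1-2^{-n})<0.5^n+0.75^t$. Your intuition that each layer either shrinks the ``coherent'' weight by $3/4$ or sends it to a $2^{-n}$ floor is exactly this recursion. It becomes provable only once the purity of the branch ensemble is the tracked quantity.
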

The proof of Theorem~\ref{thm:nisq-thm} is given in Appendix~\ref{sec:clifford-t-CARVE}. Since the relative variance rapidly converges to zero, it suggests that ESSENCE with CARVE achieves a substantial algorithmic advantage for a wide range of practical quantum circuits. For instance, even at a modest scale of $n=30$ qubits and $t=10$ T-gates, the expected relative variance is strictly bounded above by $0.057$. It remains an open problem for circuits with dense $T$-gate configuration. In the following section, we provide numerical evidence that variance suppression translates into practice even for moderate-rank circuits. 
    
\section{Simulation Results}\label{sec:results}
For all simulated results, we fixed the error confidence parameters to $\epsilon = 0.2$ and $\delta = 0.2$, ensuring that the total error is bounded by $\epsilon\|O\|$ with a probability of at least $1-\delta$. Simulations were executed using an Intel Core Ultra 9 285K CPU, by distributing independent executions of \textsc{SampleValue} in parallel across 24 cores. The source code used to generate these results is provided in the Code and Data Availability section. 
More detailed simulation results can be found in Appendix~\ref{sec:results-appendix}.

\begin{figure}[H]
    \centering
    \ref*{sharedlegend}
    \vspace{0.0cm}
    
    \begin{subfigure}{.48\textwidth}
        \centering
        \phantomsubcaption
        \label{fig:qaoa}
        \graphscatterA{data/details_N60D4.csv}{5}{4}
    \end{subfigure}%
    \hfill
    \begin{subfigure}{.48\textwidth}
        \centering
        \phantomsubcaption
        \label{fig:qke}
        \graphscatterB{data/details_QKE_n64.csv}{5}{4}
    \end{subfigure}
    \caption{Sample counts $h_{\mathrm{Hoeffding}}$ and $h_\mathrm{CARVE}$ as functions of $\xi_\star$ on logarithmic axes. Dashed segments indicate the reference scalings $h\propto\xi_\star$ and $h\propto\xi_\star^2$. (a) QAOA benchmark for the N60D4 MaxE3LIN2 instance with fixed $\beta=\pi/4$, $\epsilon=0.2$, and $\delta=0.2$. (b) Quantum-kernel benchmark for 100 kernel values $K(x',x)$ constructed from 10 independent random vectors  sampled uniformly from $[0,2\pi]^{64}$.}
    \label{fig:result}
\end{figure}

\subsection{Evaluation on the QAOA Ansatz}\label{sec:results-qaoa}

We evaluated the QAOA ansatz \cite{farhi2014quantum} applied to the MaxE3LIN2 problem. It provides an ideal benchmarking environment because the resulting quantum circuits are dominated by Clifford gates with a moderate count of non-Clifford $R_z$ rotations and the observable is Pauli-sparse, making it highly suitable for ESSENCE. The objective of the problem is to find a Boolean assignment $z \in \set{-1, 1}^n$ that maximizes the cost function:
\begin{equation}
    \mathrm{cost}(z_1, z_2, \dots, z_n) = \sum_{1 \le u < v < w \le n} d_{uvw} z_u z_v z_w
\end{equation}
where $d_{uvw} \in \{0, \pm 1\}$. The MaxE3LIN2 problem is said to have a degree $D$ if every variable $z_u$ appears in exactly $D$ nonzero terms. We denote a problem instance with $n$ qubits and degree $D$ as N$n$D$D$.

The QAOA approach involves defining a cost Hamiltonian $O_c = \sum d_{uvw} Z_u Z_v Z_w$ and generating a parameterized quantum state via the circuit $U(\beta, \gamma)$:
\begin{equation}
    U(\beta, \gamma) = e^{-i\beta (X_1 + X_2 + \cdots+X_n)} e^{-i(\gamma/2) O_c} H^{\otimes n}
\end{equation}
The corresponding QAOA objective value is the expectation value of the cost Hamiltonian in the variational state:
\begin{equation}
    E(\beta, \gamma) = \braket{0^{\otimes n} | U(\beta, \gamma)^\dagger O_c U(\beta, \gamma) | 0^{\otimes n}}
\end{equation}

Following established conventions for this benchmark \cite{bravyi:2019}, we fixed $\beta = \pi/4$ and analyzed the function $E(\gamma) := E(\pi/4, \gamma)$. At this fixed $\beta$, the circuit consists entirely of Clifford operations and exactly $nD/3$ non-Clifford $Z$-rotations given by $e^{i(\gamma/2)Z}$. Due to the symmetry of the expectation value, where $E(\gamma) = -E(-\gamma)$ and $E(\gamma) = \pm E(\pi - \gamma)$, evaluating the range $\gamma \in [0, \pi/2]$ is sufficient to fully characterize the circuit. For our simulation, we evaluated $E(\gamma)$ at 31 equally spaced points, $\gamma = \pi l/60$ for $l = 0, 1, \dots, 30$. In CARVE, we used Eq.~\eqref{eq:pauli-lower-bound} for the operator norm lower bound.

\begin{figure}[htbp]
    \centering
    \begin{subfigure}{.48\textwidth}
        \centering
        \phantomsubcaption
        \label{fig:qaoa-show-n20-obj}
        \qaoaA{data/details_N20D3.csv}{5}{4}
    \end{subfigure}
    \hfill
    \begin{subfigure}{.48\textwidth}
        \centering
        \phantomsubcaption
        \label{fig:qaoa-show-n20-var}
        \qaoaB{data/details_N20D3.csv}{5}{4}
    \end{subfigure}
    \vspace{1em}

    \centering
    \begin{subfigure}{.48\textwidth}
        \centering
        \phantomsubcaption
        \label{fig:qaoa-show-n60-obj}
        \qaoaC{data/details_N60D4.csv}{5}{4}
    \end{subfigure}
    \hfill
    \begin{subfigure}{.48\textwidth}
        \centering
        \phantomsubcaption
        \label{fig:qaoa-show-n60-var}
        \qaoaD{data/details_N60D4.csv}{5}{4}
    \end{subfigure}        
    \caption{
    QAOA results for MaxE3LIN2 instances N20D3 and N60D4.
    (a) Estimated and exact objective value $E(\gamma)$ for the N20D3 instance. (b) Conservative empirical relative variance $\hat{\eta}^2$ for the N20D3 instance. (c) Estimated objective value $E(\gamma)$ for the N60D4 instance. (d) Conservative empirical relative variance $\hat{\eta}^2$ for the N60D4 instance.}
    \label{fig:qaoa-show}
\end{figure}

For the result of the N60D4 problem depicted in Fig.~\ref{fig:qaoa}, CARVE achieves a quadratic reduction in the number of sampled stabilizer pairs required to estimate the expectation value when compared to the Hoeffding's. Moreover, the computational runtime exhibits a linear scaling with the gate-wise stabilizer extent, $\xi_\star$. Simulating the circuit across the parameter sweep took approximately 9 hours, peaking at 4,500 seconds per configuration. If the simulation had instead relied on Hoeffding bounds, extrapolations based on the required iterations indicate the N60D4 simulation would have taken approximately 424 years. This demonstrates that the bounds were overly conservative in practice. By tightening these bounds, we obtained a significant reduction in practical computational overhead. Furthermore, for the N20D3 instance depicted in Fig.~\ref{fig:qaoa-show-n20-obj}, ESSENCE with CARVE estimates matched the state-vector ground truth within the prescribed tolerance $\epsilon\|O\|$. Figures~\ref{fig:qaoa-show-n20-var} and \ref{fig:qaoa-show-n60-var} show that conservative empirical relative variance $\hat{\eta}^2 = s^2/L_O^2$ is suppressed well. Normalizing by the exact norm $\|O\|$ yields an even smaller value. This empirically supports Theorem~\ref{thm:nisq-thm}.

\subsection{Evaluation of Quantum Kernels}\label{sec:results-qsvm}

To further demonstrate the versatility of our framework, we benchmark its performance on quantum kernel estimation, a foundational subroutine for QSVM \cite{QSVM}. In QSVM, a quantum kernel computes a similarity measure between classical data points by mapping them into a high-dimensional quantum Hilbert space. 
Mathematically, the quantum kernel element between two data points, $x$ and $x'$, is defined by the overlap of their corresponding quantum feature states:
\begin{equation}
    K(x, x') = \abs{\braket{\phi(x)|\phi(x')}}^2
\end{equation}
These feature states are prepared by applying a parameterized unitary data encoding circuit $U(x)$ to an initial state, such that $\ket{\phi(x)} = U(x)\ket{+^{\otimes n}}$. 
Equivalently, by defining the composite circuit
\begin{equation}
    V(x, x') = H^{\otimes n}U^\dagger(x)U(x')H^{\otimes n},
\end{equation}
the same kernel element can be written as the expectation value of the projector $\Pi_0=\ket{0^{\otimes n}}\bra{0^{\otimes n}}$, yielding
\begin{equation}
    K(x, x') = \braket{0^{\otimes n}|V(x, x')^\dagger \Pi_0 V(x, x') | 0^{\otimes n}}.
\end{equation}

This observable-driven formulation aligns naturally with the operational design of ESSENCE. At the same time, classical simulation of quantum-kernel circuits can become costly when the feature map contains many data-dependent non-Clifford rotations.
To benchmark this scenario, we employed a Pauli feature map utilizing $Z$ and $ZZ$ rotations for the encoder $U$. We generated 10 independent classical data vectors, $x \in [0, 2\pi]^n$, with each component $x[j]$ drawn from a uniform distribution $\mathcal{U}(0, 2\pi)$. We then encoded each sample into an $n$-qubit feature state via the unitary:
\begin{equation}
    U(x) = \exp\left( i \sum_{j=0}^{n-1} x[j] Z_j + i \sum_{j=0}^{n-2} (\pi - x[j])(\pi - x[j+1]) Z_j Z_{j+1} \right).
\end{equation}
From these 10 generated samples $x_0, x_1, \cdots, x_9$, we computed the kernel elements $K(x_i, x_j)$ for all 100 ordered pairs.

\begin{figure}[H]
    \centering
    \begin{subfigure}[t]{.48\textwidth}
        \centering 
        \phantomsubcaption
        \label{fig:qke-show-n2-res}
        \qkeA{data/details_QKE_n2.csv}{4.5}{4.5}
    \end{subfigure}
    \hfill
    \begin{subfigure}[t]{.48\textwidth}
        \centering
        \phantomsubcaption
        \label{fig:qke-show-n2-truth}
        \qkeB{data/details_QKE_n2.csv}{4.5}{4.5}
    \end{subfigure}
    \caption{
        Quantum kernel matrices $K(x_i, x_j)$ for 10 independent vectors $x_i$ drawn uniformly from $[0,2\pi]^2$ in the two-qubit setting ($n=2$).
        (a) Estimated kernel matrix.
        (b) Exact kernel matrix.
        Rows and columns use the same ordering of data points, and both panels use the same linear color scale from 0 to 1.}
    \label{fig:qke-show}
\end{figure}

\begin{figure}[H]
    \centering
    \begin{subfigure}[t]{.48\textwidth}
        \centering
        \phantomsubcaption
        \label{fig:qke-show-n2-stddev}
        \qkestddev{data/details_QKE_n2.csv}{4.5}{4.5}{(a)}
    \end{subfigure}
    \hfill
    \begin{subfigure}[t]{.48\textwidth}
        \centering
        \phantomsubcaption
        \label{fig:qke-show-n64-stddev}
        \qkestddev{data/details_QKE_n64.csv}{4.5}{4.5}{(b)}
    \end{subfigure}
    \caption{
    Empirical relative variance matrices $\hat{\eta}^2(x_i, x_j)$ for (a) two qubits ($n=2$) and (b) 64 qubits ($n=64$).
    Rows and columns correspond to the same ordering of data points. The panels use independent linear color scales, as indicated by their respective colorbars.}
    \label{fig:qke-stddev}
\end{figure}

As illustrated in Fig.~\ref{fig:qke}, the quadratic sample reduction and linear runtime scaling with $\xi_\star$ established in the QAOA benchmark persist here. For the $n=2$ case depicted in Fig.~\ref{fig:qke-show}, a comparison between the estimated heatmaps and the exact state-vector ground truth confirms the accuracy of our approach, with entrywise absolute errors remaining below $\epsilon = 0.2$. Furthermore, Fig.~\ref{fig:qke-stddev} demonstrates that the empirical relative variance $\hat{\eta}^2 =s^2/\|O\|^2$ is consistently small for both $n=2$ and $n=64$, and indicates a decreasing trend as the system size $n$ increases.

\section{Conclusion}
In this work, we introduced ESSENCE, an efficient simulation framework that mitigates the severe computational bottlenecks associated with estimating expectation values in quantum circuits containing non-Clifford gates. While stochastic methods like DCS offer theoretical pathways, their practical utility has historically been restricted by exponential decomposition scaling and conservative sampling bounds. ESSENCE addresses these limitations by integrating SoC to expectation value estimation.

We further incorporated the CARVE protocol, which allows the simulation to dynamically monitor empirical sample variance rather than relying on overly pessimistic Hoeffding's inequality. By doing so, the simulation terminates once its empirical confidence bound certifies the prescribed $(\epsilon,\delta)$ accuracy, effectively reducing unnecessary Monte Carlo iterations. These advancements maintain a practical space complexity of $\mathcal{O}(n^2)$ while significantly reducing the computational time complexity in practical cases. Our simulations on the QAOA ansatz for MaxE3LIN2 and on quantum-kernel circuits used in QSVMs support these theoretical and algorithmic advantages.

Moreover, CARVE can be broadly generalized. The empirical variance estimation subroutine we developed is not limited to our specific decomposition method, but is adaptable to similar sampling-based simulations involving stabilizer states. Consequently, this approach provides a generalized tool to overcome computational bottlenecks in broader quasi-probabilistic propagation algorithms well beyond the immediate scope of ESSENCE.

\section*{Code and Data Availability}
The source code, datasets, and scripts required to reproduce the QAOA and quantum kernel estimation benchmarks presented in this work are openly available on GitHub at \url{https://github.com/snu-quiqcl/essence_carve}.

\section*{Acknowledgements}
This work was supported by the National Research Foundation of Korea~(NRF) grants (No. RS-2024-00442855, No. RS-2024-00413957) and the Institute for Information \& Communications Technology Planning \& Evaluation~(IITP) grant (No. RS-2022-II221040), all funded by the Korean government~(MSIT). H.K. is supported by KIAS individual grant No. CG085302 at the Korea Institute for Advanced Study.

\bibliography{refs}

\begin{thebibliography}{10}

\bibitem{MPS:2003}
Guifr{\'e} Vidal.
\newblock ``Efficient classical simulation of slightly entangled quantum computations''.
\newblock \href{https://dx.doi.org/10.1103/PhysRevLett.91.147902}{Phys. Rev. Lett. {\bf 91}, 147902}~(2003).

\bibitem{bravyi:2016}
Sergey Bravyi and David Gosset.
\newblock ``Improved classical simulation of quantum circuits dominated by {Clifford} gates''.
\newblock \href{https://dx.doi.org/10.1103/PhysRevLett.116.250501}{Phys. Rev. Lett. {\bf 116}, 250501}~(2016).

\bibitem{XU20254104}
Xiaosi Xu, Simon Benjamin, Jianxin Chen, Jinzhao Sun, Xiao Yuan, and Pan Zhang.
\newblock ``A {Herculean} task: classical simulation of quantum computers''.
\newblock \href{https://dx.doi.org/10.1016/j.scib.2025.10.016}{Sci. Bull. {\bf 70}, 4104--4112}~(2025).

\bibitem{cerezo2021variational}
M.~Cerezo, Andrew Arrasmith, Ryan Babbush, Simon~C. Benjamin, Suguru Endo, Keisuke Fujii, Jarrod~R. McClean, Kosuke Mitarai, Xiao Yuan, Lukasz Cincio, and Patrick~J. Coles.
\newblock ``Variational quantum algorithms''.
\newblock \href{https://dx.doi.org/10.1038/s42254-021-00348-9}{Nat. Rev. Phys. {\bf 3}, 625--644}~(2021).

\bibitem{peruzzo2014variational}
Alberto Peruzzo, Jarrod McClean, Peter Shadbolt, Man-Hong Yung, Xiao-Qi Zhou, Peter~J. Love, Al{\'a}n Aspuru-Guzik, and Jeremy~L. O'Brien.
\newblock ``A variational eigenvalue solver on a photonic quantum processor''.
\newblock \href{https://dx.doi.org/10.1038/ncomms5213}{Nat. Commun. {\bf 5}, 4213}~(2014).

\bibitem{mcclean2016theory}
Jarrod~R. McClean, Jonathan Romero, Ryan Babbush, and Al{\'a}n Aspuru-Guzik.
\newblock ``The theory of variational hybrid quantum-classical algorithms''.
\newblock \href{https://dx.doi.org/10.1088/1367-2630/18/2/023023}{New J. Phys. {\bf 18}, 023023}~(2016).

\bibitem{zhao2020measurement}
Andrew Zhao, Andrew Tranter, William~M. Kirby, Shu~Fay Ung, Akimasa Miyake, and Peter~J. Love.
\newblock ``Measurement reduction in variational quantum algorithms''.
\newblock \href{https://dx.doi.org/10.1103/PhysRevA.101.062322}{Phys. Rev. A {\bf 101}, 062322}~(2020).

\bibitem{huggins2021efficient}
William~J. Huggins, Jarrod~R. McClean, Nicholas~C. Rubin, Zhang Jiang, Nathan Wiebe, K.~Birgitta Whaley, and Ryan Babbush.
\newblock ``Efficient and noise resilient measurements for quantum chemistry on near-term quantum computers''.
\newblock \href{https://dx.doi.org/10.1038/s41534-020-00341-7}{npj Quantum Inf. {\bf 7}, 23}~(2021).

\bibitem{temme2017error}
Kristan Temme, Sergey Bravyi, and Jay~M. Gambetta.
\newblock ``Error mitigation for short-depth quantum circuits''.
\newblock \href{https://dx.doi.org/10.1103/PhysRevLett.119.180509}{Phys. Rev. Lett. {\bf 119}, 180509}~(2017).

\bibitem{kandala2017hardware}
Abhinav Kandala, Antonio Mezzacapo, Kristan Temme, Maika Takita, Markus Brink, Jerry~M. Chow, and Jay~M. Gambetta.
\newblock ``Hardware-efficient variational quantum eigensolver for small molecules and quantum magnets''.
\newblock \href{https://dx.doi.org/10.1038/nature23879}{Nature {\bf 549}, 242--246}~(2017).

\bibitem{farhi2014quantum}
Edward Farhi, Jeffrey Goldstone, and Sam Gutmann.
\newblock ``A quantum approximate optimization algorithm''~(2014).
\newblock  \href{http://arxiv.org/abs/1411.4028}{arXiv:1411.4028}.

\bibitem{biamonte2017quantum}
Jacob Biamonte, Peter Wittek, Nicola Pancotti, Patrick Rebentrost, Nathan Wiebe, and Seth Lloyd.
\newblock ``Quantum machine learning''.
\newblock \href{https://dx.doi.org/10.1038/nature23474}{Nature {\bf 549}, 195--202}~(2017).

\bibitem{begusic:2024}
Tomislav Begu\v{s}i\'{c}, Johnnie Gray, and Garnet Kin-Lic Chan.
\newblock ``Fast and converged classical simulations of evidence for the utility of quantum computing before fault tolerance''.
\newblock \href{https://dx.doi.org/10.1126/sciadv.adk4321}{Sci. Adv. {\bf 10}, eadk4321}~(2024).

\bibitem{kim:2023}
Youngseok Kim, Andrew Eddins, Sajant Anand, Ken~Xuan Wei, Ewout van~den Berg, Sami Rosenblatt, Hasan Nayfeh, Yantao Wu, Michael Zaletel, Kristan Temme, and Abhinav Kandala.
\newblock ``Evidence for the utility of quantum computing before fault tolerance''.
\newblock \href{https://dx.doi.org/10.1038/s41586-023-06096-3}{Nature {\bf 618}, 500--505}~(2023).

\bibitem{QSVM}
Vojt{\v{e}}ch Havl{\'\i}{\v{c}}ek, Antonio~D. C{\'o}rcoles, Kristan Temme, Aram~W. Harrow, Abhinav Kandala, Jerry~M. Chow, and Jay~M. Gambetta.
\newblock ``Supervised learning with quantum-enhanced feature spaces''.
\newblock \href{https://dx.doi.org/10.1038/s41586-019-0980-2}{Nature {\bf 567}, 209--212}~(2019).

\bibitem{NISQalgo}
Kishor Bharti, Alba Cervera-Lierta, Thi~Ha Kyaw, Tobias Haug, Sumner Alperin-Lea, Abhinav Anand, Matthias Degroote, Hermanni Heimonen, Jakob~S. Kottmann, Tim Menke, Wai-Keong Mok, Sukin Sim, Leong-Chuan Kwek, and Al{\'a}n Aspuru-Guzik.
\newblock ``Noisy intermediate-scale quantum algorithms''.
\newblock \href{https://dx.doi.org/10.1103/RevModPhys.94.015004}{Rev. Mod. Phys. {\bf 94}, 015004}~(2022).

\bibitem{gottesman1997stabilizer}
Daniel Gottesman.
\newblock ``Stabilizer codes and quantum error correction''.
\newblock \href{https://dx.doi.org/10.48550/arXiv.quant-ph/9705052}{PhD thesis}.
\newblock California Institute of Technology.
\newblock ~(1997).
\newblock  \href{http://arxiv.org/abs/quant-ph/9705052}{arXiv:quant-ph/9705052}.

\bibitem{gottesman:1998}
Daniel Gottesman.
\newblock ``The {Heisenberg} representation of quantum computers''~(1998).
\newblock  \href{http://arxiv.org/abs/quant-ph/9807006}{arXiv:quant-ph/9807006}.

\bibitem{valiant2002quantum}
Leslie~G. Valiant.
\newblock ``Quantum circuits that can be simulated classically in polynomial time''.
\newblock \href{https://dx.doi.org/10.1137/S0097539700377025}{SIAM J. Comput. {\bf 31}, 1229--1254}~(2002).

\bibitem{terhal2002classical}
Barbara~M. Terhal and David~P. DiVincenzo.
\newblock ``Classical simulation of noninteracting-fermion quantum circuits''.
\newblock \href{https://dx.doi.org/10.1103/PhysRevA.65.032325}{Phys. Rev. A {\bf 65}, 032325}~(2002).

\bibitem{jozsa2008matchgates}
Richard Jozsa and Akimasa Miyake.
\newblock ``Matchgates and classical simulation of quantum circuits''.
\newblock \href{https://dx.doi.org/10.1098/rspa.2008.0189}{Proc. R. Soc. A {\bf 464}, 3089--3106}~(2008).

\bibitem{shi2006classical}
Y.-Y. Shi, L.-M. Duan, and G.~Vidal.
\newblock ``Classical simulation of quantum many-body systems with a tree tensor network''.
\newblock \href{https://dx.doi.org/10.1103/PhysRevA.74.022320}{Phys. Rev. A {\bf 74}, 022320}~(2006).

\bibitem{koukoulekidis2022faster}
Nikolaos Koukoulekidis, Hyukjoon Kwon, Hyejung~H. Jee, David Jennings, and M.~S. Kim.
\newblock ``Faster born probability estimation via gate merging and frame optimisation''.
\newblock \href{https://dx.doi.org/10.22331/q-2022-10-13-838}{Quantum {\bf 6}, 838}~(2022).

\bibitem{ipek2026phase}
Selman Ipek, Atak~Talay Yucel, Farzad Shahi, Cagdas Ozdemir, and Cihan Okay.
\newblock ``Phase-space tableau simulation for quantum computation''.
\newblock \href{https://dx.doi.org/10.1103/PhysRevA.113.032409}{Phys. Rev. A {\bf 113}, 032409}~(2026).

\bibitem{wang2022possibilistic}
Daochen Wang.
\newblock ``Possibilistic simulation of quantum circuits by classical circuits''.
\newblock \href{https://dx.doi.org/10.1103/PhysRevA.106.062430}{Phys. Rev. A {\bf 106}, 062430}~(2022).

\bibitem{cichy2025classical}
Simon Cichy, Paul~K. Faehrmann, Lennart Bittel, Jens Eisert, and Hakop Pashayan.
\newblock ``Classical simulation of noisy quantum circuits via locally entanglement-optimal unravelings''~(2025).
\newblock  \href{http://arxiv.org/abs/2508.05745}{arXiv:2508.05745}.

\bibitem{peres2024non}
Filipa C.~R. Peres, Rafael Wagner, and Ernesto~F. Galv{\~a}o.
\newblock ``Non-stabilizerness and entanglement from cat-state injection''.
\newblock \href{https://dx.doi.org/10.1088/1367-2630/ad1b80}{New J. Phys. {\bf 26}, 013051}~(2024).

\bibitem{hahn2025bridging}
Oliver Hahn, Giulia Ferrini, and Ryuji Takagi.
\newblock ``Bridging magic and non-{Gaussian} resources via {Gottesman-Kitaev-Preskill} encoding''.
\newblock \href{https://dx.doi.org/10.1103/PRXQuantum.6.010330}{PRX Quantum {\bf 6}, 010330}~(2025).

\bibitem{hahn2025classical}
Oliver Hahn, Ryuji Takagi, Giulia Ferrini, and Hayata Yamasaki.
\newblock ``Classical simulation and quantum resource theory of non-{Gaussian} optics''.
\newblock \href{https://dx.doi.org/10.22331/q-2025-10-13-1881}{Quantum {\bf 9}, 1881}~(2025).

\bibitem{pashayan2022fast}
Hakop Pashayan, Oliver Reardon-Smith, Kamil Korzekwa, and Stephen~D. Bartlett.
\newblock ``Fast estimation of outcome probabilities for quantum circuits''.
\newblock \href{https://dx.doi.org/10.1103/PRXQuantum.3.020361}{PRX Quantum {\bf 3}, 020361}~(2022).

\bibitem{hamaguchi2025faster}
Hiroki Hamaguchi, Kou Hamada, Naoki Marumo, and Nobuyuki Yoshioka.
\newblock ``Faster computation of nonstabilizerness''.
\newblock \href{https://dx.doi.org/10.1103/PhysRevApplied.23.014069}{Phys. Rev. Appl. {\bf 23}, 014069}~(2025).

\bibitem{burgholzer2021random}
Lukas Burgholzer, Richard Kueng, and Robert Wille.
\newblock ``Random stimuli generation for the verification of quantum circuits''.
\newblock In Proc. 26th Asia South Pac. Des. Autom. Conf.
\newblock \href{https://dx.doi.org/10.1145/3394885.3431590}{Pages 767--772}.
\newblock ~(2021).

\bibitem{dias2026optimal}
Beatriz Dias, Jan~Lukas Bosse, and James~R. Seddon.
\newblock ``Optimal and improved gate decompositions for accelerated classical simulation of near-{Gaussian} fermionic circuits''~(2026).
\newblock  \href{http://arxiv.org/abs/2603.18869}{arXiv:2603.18869}.

\bibitem{yashin2025further}
Vsevolod~I. Yashin, Vladimir~V. Yatsulevich, Aleksey~K. Fedorov, and Evgeniy~O. Kiktenko.
\newblock ``Further improvements to stabilizer simulation theory: classical rewriting of {CSS}-preserving stabilizer circuits, quadratic form expansions of stabilizer operations, and framed hidden variable models''~(2025).
\newblock  \href{http://arxiv.org/abs/2511.05478}{arXiv:2511.05478}.

\bibitem{hakkaku2021comparative}
Shigeo Hakkaku and Keisuke Fujii.
\newblock ``Comparative study of sampling-based simulation costs of noisy quantum circuits''.
\newblock \href{https://dx.doi.org/10.1103/PhysRevApplied.15.064027}{Phys. Rev. Appl. {\bf 15}, 064027}~(2021).

\bibitem{reardon2024improved}
Oliver Reardon-Smith, Micha{\l} Oszmaniec, and Kamil Korzekwa.
\newblock ``Improved simulation of quantum circuits dominated by free fermionic operations''.
\newblock \href{https://dx.doi.org/10.22331/q-2024-12-04-1549}{Quantum {\bf 8}, 1549}~(2024).

\bibitem{hakkaku2021sampling}
Shigeo Hakkaku, Kosuke Mitarai, and Keisuke Fujii.
\newblock ``Sampling-based quasiprobability simulation for fault-tolerant quantum error correction on the surface codes under coherent noise''.
\newblock \href{https://dx.doi.org/10.1103/PhysRevResearch.3.043130}{Phys. Rev. Res. {\bf 3}, 043130}~(2021).

\bibitem{piveteau2025simulating}
Christophe Piveteau.
\newblock ``Simulating quantum circuits with restricted quantum computers''~(2025).
\newblock  \href{http://arxiv.org/abs/2503.21773}{arXiv:2503.21773}.

\bibitem{ballarin2025optimal}
Marco Ballarin, Pietro Silvi, Simone Montangero, and Daniel Jaschke.
\newblock ``Optimal sampling of tensor networks targeting wave function's fast decaying tails''.
\newblock \href{https://dx.doi.org/10.48550/arXiv.2401.10330}{Quantum {\bf 9}, 1714}~(2025).

\bibitem{yuan2024virtual}
Xiao Yuan, Bartosz Regula, Ryuji Takagi, and Mile Gu.
\newblock ``Virtual quantum resource distillation''.
\newblock \href{https://dx.doi.org/10.1103/PhysRevLett.132.050203}{Phys. Rev. Lett. {\bf 132}, 050203}~(2024).

\bibitem{huang2025nonstabilizerness}
Jiale Huang, Xiangjian Qian, and Mingpu Qin.
\newblock ``Nonstabilizerness entanglement entropy: A measure of hardness in the classical simulation of quantum many-body systems with tensor network states''.
\newblock \href{https://dx.doi.org/10.1103/PhysRevA.112.012425}{Phys. Rev. A {\bf 112}, 012425}~(2025).

\bibitem{haug2023stabilizer}
Tobias Haug and Lorenzo Piroli.
\newblock ``Stabilizer entropies and nonstabilizerness monotones''.
\newblock \href{https://dx.doi.org/10.22331/q-2023-08-28-1092}{Quantum {\bf 7}, 1092}~(2023).

\bibitem{zhang2026enhancing}
Ruiqi Zhang, Fuchuan Wei, and Zhaohui Wei.
\newblock ``Enhancing classical simulation with noisy quantum devices''~(2026).
\newblock  \href{http://arxiv.org/abs/2601.08772}{arXiv:2601.08772}.

\bibitem{paviglianiti2025estimating}
Alessio Paviglianiti, Guglielmo Lami, Mario Collura, and Alessandro Silva.
\newblock ``Estimating nonstabilizerness dynamics without simulating it''.
\newblock \href{https://dx.doi.org/10.1103/PRXQuantum.6.030320}{PRX Quantum {\bf 6}, 030320}~(2025).

\bibitem{saxena2022quantifying}
Gaurav Saxena and Gilad Gour.
\newblock ``Quantifying multiqubit magic channels with completely stabilizer-preserving operations''.
\newblock \href{https://dx.doi.org/10.1103/PhysRevA.106.042422}{Phys. Rev. A {\bf 106}, 042422}~(2022).

\bibitem{hakkaku2022quantifying}
Shigeo Hakkaku, Yuichiro Tashima, Kosuke Mitarai, Wataru Mizukami, and Keisuke Fujii.
\newblock ``Quantifying fermionic nonlinearity of quantum circuits''.
\newblock \href{https://dx.doi.org/10.1103/PhysRevResearch.4.043100}{Phys. Rev. Res. {\bf 4}, 043100}~(2022).

\bibitem{wang2023universal}
Dong-Sheng Wang.
\newblock ``Universal resources for quantum computing''.
\newblock \href{https://dx.doi.org/10.1088/1572-9494/ad07d6}{Commun. Theor. Phys. {\bf 75}, 125101}~(2023).

\bibitem{marshall2023simulation}
Jeffrey Marshall and Namit Anand.
\newblock ``Simulation of quantum optics by coherent state decomposition''.
\newblock \href{https://dx.doi.org/10.1364/OPTICAQ.504311}{Optica Quantum {\bf 1}, 78--93}~(2023).

\bibitem{miniskar2026q}
Narasinga~Rao Miniskar, Mohammad Alaul~Haque Monil, Elaine Wong, Vicente~Leyton Ortega, Jeffrey~S. Vetter, Seth~R. Johnson, and Travis Humble.
\newblock ``Q-{IRIS}: The evolution of the {IRIS} task-based runtime to enable classical-quantum workflows''.
\newblock In Proc. Supercomput. Asia Int. Conf. High Perform. Comput. Asia Pac. Reg. Workshop.
\newblock \href{https://dx.doi.org/10.1145/3784828.3785240}{Pages 323--329}.
\newblock ~(2026).

\bibitem{jing2025circuit}
Mingrui Jing, Chengkai Zhu, and Xin Wang.
\newblock ``Circuit knitting facing exponential sampling-overhead scaling bounded by entanglement cost''.
\newblock \href{https://dx.doi.org/10.1103/PhysRevA.111.012433}{Phys. Rev. A {\bf 111}, 012433}~(2025).

\bibitem{luthra2025unlocking}
Surabhi Luthra, Alexandra~E Moylett, Dan~E. Browne, and Earl~T. Campbell.
\newblock ``Unlocking early fault-tolerant quantum computing with mitigated magic dilution''.
\newblock \href{https://dx.doi.org/10.1088/2058-9565/ae0aef}{Quantum Sci. Technol. {\bf 10}, 045066}~(2025).

\bibitem{siri2025deep}
Pallakonda Siri, Anitha G., and Maddukuri Reshma Naga~Venkata Durga.
\newblock ``A deep learning driven quantum hybrid approach for early breast cancer detection''.
\newblock In 5th Int. Conf. Artif. Intell. Signal Process.
\newblock \href{https://dx.doi.org/10.1109/AISP68263.2025.11396250}{Pages 1--5}.
\newblock IEEE~(2025).

\bibitem{saxena2024error}
Gaurav Saxena and Thi~Ha Kyaw.
\newblock ``Error mitigation by restricted evolution''~(2024).
\newblock  \href{http://arxiv.org/abs/2409.06636}{arXiv:2409.06636}.

\bibitem{Howard_2017}
Mark Howard and Earl Campbell.
\newblock ``Application of a resource theory for magic states to fault-tolerant quantum computing''.
\newblock \href{https://dx.doi.org/10.1103/physrevlett.118.090501}{Phys. Rev. Lett. {\bf 118}, 090501}~(2017).

\bibitem{aaronson:2004}
Scott Aaronson and Daniel Gottesman.
\newblock ``Improved simulation of stabilizer circuits''.
\newblock \href{https://dx.doi.org/10.1103/physreva.70.052328}{Phys. Rev. A {\bf 70}, 052328}~(2004).

\bibitem{MPS:2023}
Sangchul Oh and Sabre Kais.
\newblock ``Comparison of quantum advantage experiments using random circuit sampling''.
\newblock \href{https://dx.doi.org/10.1103/PhysRevA.107.022610}{Phys. Rev. A {\bf 107}, 022610}~(2023).

\bibitem{aziz2026classicalsimulationslowmagic}
Kemal Aziz, Haining Pan, Michael~J. Gullans, and J.~H. Pixley.
\newblock ``Classical simulations of low magic quantum dynamics''~(2026).
\newblock  \href{http://arxiv.org/abs/2508.20252}{arXiv:2508.20252}.

\bibitem{Seddon_2019}
James~R. Seddon and Earl~T. Campbell.
\newblock ``Quantifying magic for multi-qubit operations''.
\newblock \href{https://dx.doi.org/10.1098/rspa.2019.0251}{Proc. R. Soc. A {\bf 475}, 20190251}~(2019).

\bibitem{bravyi:2019}
Sergey Bravyi, Dan Browne, Padraic Calpin, Earl Campbell, David Gosset, and Mark Howard.
\newblock ``Simulation of quantum circuits by low-rank stabilizer decompositions''.
\newblock \href{https://dx.doi.org/10.22331/q-2019-09-02-181}{Quantum {\bf 3}, 181}~(2019).

\bibitem{SP:2017}
Ryan~S. Bennink, Erik~M. Ferragut, Travis~S. Humble, Jason~A. Laska, James~J. Nutaro, Mark~G. Pleszkoch, and Raphael~C. Pooser.
\newblock ``Unbiased simulation of near-{Clifford} quantum circuits''.
\newblock \href{https://dx.doi.org/10.1103/PhysRevA.95.062337}{Phys. Rev. A {\bf 95}, 062337}~(2017).

\bibitem{DFS:2020}
James~R. Seddon, Bartosz Regula, Hakop Pashayan, Yingkai Ouyang, and Earl~T. Campbell.
\newblock ``Quantifying quantum speedups: Improved classical simulation from tighter magic monotones''.
\newblock \href{https://dx.doi.org/10.1103/prxquantum.2.010345}{PRX Quantum {\bf 2}, 010345}~(2021).

\bibitem{DS:2022}
James~Robert Seddon.
\newblock ``Advancing classical simulators by measuring the magic of quantum computation''.
\newblock Phd thesis.
\newblock University College London.
\newblock ~(2022).
\newblock  url:~\url{https://discovery.ucl.ac.uk/id/eprint/10146361/}.

\bibitem{pashayan:2015}
Hakop Pashayan, Joel~J. Wallman, and Stephen~D. Bartlett.
\newblock ``Estimating outcome probabilities of quantum circuits using quasiprobabilities''.
\newblock \href{https://dx.doi.org/10.1103/PhysRevLett.115.070501}{Phys. Rev. Lett. {\bf 115}, 070501}~(2015).

\bibitem{hoeffding:1963}
Wassily Hoeffding.
\newblock ``Probability inequalities for sums of bounded random variables''.
\newblock \href{https://dx.doi.org/10.2307/2282952}{J. Am. Stat. Assoc. {\bf 58}, 13--30}~(1963).

\bibitem{welford:1962}
B.~P. Welford.
\newblock ``{Note on a method for calculating corrected sums of squares and products}''.
\newblock \href{https://dx.doi.org/10.1080/00401706.1962.10490022}{Technometrics {\bf 4}, 419--420}~(1962).

\bibitem{howard2021time}
Steven~R. Howard, Aaditya Ramdas, Jon McAuliffe, and Jasjeet Sekhon.
\newblock ``{Time-uniform, nonparametric, nonasymptotic confidence sequences}''.
\newblock \href{https://dx.doi.org/10.1214/20-AOS1991}{Ann. Stat. {\bf 49}, 1055--1080}~(2021).

\bibitem{maurer:2009}
Andreas Maurer and Massimiliano Pontil.
\newblock ``Empirical {Bernstein} bounds and sample-variance penalization''.
\newblock In Proc. 22nd Annu. Conf. Learn. Theory.
\newblock ~(2009).
\newblock  \href{http://arxiv.org/abs/0907.3740}{arXiv:0907.3740}.

\bibitem{Dankert}
Christoph Dankert, Richard Cleve, Joseph Emerson, and Etera Livine.
\newblock ``Exact and approximate unitary 2-designs and their application to fidelity estimation''.
\newblock \href{https://dx.doi.org/10.1103/PhysRevA.80.012304}{Phys. Rev. A {\bf 80}, 012304}~(2009).

\end{thebibliography}
\appendix
\clearpage

\section{Algorithm Details}\label{sec:algorithm-details}
This appendix provides bitwise implementation details used in Section~\ref{sec:sample-value}. 
Appendix~\ref{sec:CH-form} explains CH-form \cite{bravyi:2019} to define $\textsc{SampleValue}$ in Algorithm~\ref{alg:ESSENCE}. Appendix~\ref{sec:InnerProd} describes our inner product procedure using the CH-form. Appendix~\ref{sec:Uc-product} proves the C-type product rule, and Appendix~\ref{sec:binary-matrix-multiplication} presents the bitwise implementation.

Table~\ref{tab:subroutine-complexity} summarizes the time complexities of the subroutines and algorithms used in our work.

\begin{table*}[htbp]
    \centering
    \caption{Time complexities of algorithms that compute expectation values.}
    \label{tab:subroutine-complexity}
    \vspace{0.5em}
    \begin{tabular}{ll}
        \toprule
        Procedure & Time complexity \\
        \midrule
        \textsc{StateSampler} \cite{bravyi:2019} & $\mathcal{O}(n^4w)$ \\
        \textsc{BitInner} \cite{bravyi:2019} & $\mathcal{O}(n^2)$ \\
        \textsc{GetXi} \cite{bravyi:2019} & $\mathcal{O}(w)$ \\
        \textsc{ATransposeB} & $\mathcal{O}(n^3)$ \\
        \textsc{UcProduct} & $5 \cdot \textsc{ATransposeB} + \mathcal{O}(n^3)$ \\
        \textsc{InnerProduct} & $\textsc{UcProduct} + \mathcal{O}(n) \cdot \mathcal{O}(n^2) + \textsc{BitInner}$ \\
        \textsc{SampleValue}, projector $O$ & $2 \cdot \textsc{StateSampler} + 2\cdot\textsc{InnerProduct}$ \\
        \textsc{SampleValue}, other $O$ & $2 \cdot \textsc{StateSampler} + k\cdot[\textsc{InnerProduct} + \mathcal{O}(n^2)]$ \\        
        \textsc{ExpVal} & $h_{\mathrm{Hoeffding}}\cdot\textsc{SampleValue} + \textsc{GetXi}$ \\      
        \textsc{Welford} & $\kappa\cdot\textsc{SampleValue}$ \\
        \textsc{CarveExpVal} & $h_{\mathrm{CARVE}}\cdot\textsc{SampleValue}+ \textsc{GetXi}$ \\
        \bottomrule
    \end{tabular}
\end{table*}

\subsection{Sampling and Processing Stabilizer States in CH-form} \label{sec:CH-form}
For each $n$-qubit stabilizer state $\ket{\phi}$, there exists a CH-form representation $\ket{\phi} = \omega U_CU_H\ket{s}$, where $\omega \in \mathbb{C}$, $s \in \mathbb{Z}_{2}^n$, and $U_C$ and $U_H$ are C-type and H-type Clifford operators, respectively. Specifically, $U_C$ satisfies $U_C\ket{0}^{\otimes n} = \ket{0}^{\otimes n}$, and $U_H$ is a tensor product of Hadamard gates. For computational efficiency, $U_C$ is represented by the tuple $(F, G, M, \gamma)$, where $F$, $G$, and $M$ are $n\times n$ binary matrices, and $\gamma \in \mathbb{Z}_4^n$. They are defined as values satisfying
\begin{equation}
        U_C^\dagger Z_p U_C = \prod_{j=1}^n Z_j^{G_{p,j}},
        \qquad U_C^\dagger X_p U_C = i^{\gamma_p} \prod_{j=1}^n X_j^{F_{p,j}}Z_j^{M_{p,j}}.
\end{equation}
Similarly, $U_H$ is represented by the binary vector $v \in \mathbb{Z}_2^n$, defined as
\begin{equation}
    U_H = \prod_{j=1}^n H_j^{v_j}.
\end{equation}
In summary, a state in CH-form is stored and specified by the tuple $(F,G,M,\gamma,v,s,\omega)$, requiring $\mathcal{O}(n^2)$ memory.

Applying a general Clifford gate to the state requires $\mathcal{O}(n^4)$ time. This includes decomposing the gate into $O(n^2)$ basic blocks --- Pauli, Hadamard, S, and CNOT gates --- that takes up to $\mathcal{O}(n^2)$ time to update via a look-up table (\textsc{LookUp}). Based on these foundational operations, \textsc{StateSampler} that samples stabilizer states under Eq.~\eqref{eqn:sample-trajectory} requires $\mathcal{O}(n^4w)$ time to sample one state applying an $n$-qubit circuit with $w$ gates. This leaves at most $k$ iterations of the function \textsc{InnerProduct} to be characterized to define \textsc{SampleValue} as following.

\begin{algorithm}[H]
    \caption{One real-valued ESSENCE sample}
    \label{alg:sample-value}
    \begin{algorithmic}
    \Procedure{SampleValue}{$C,O$}
        \State $\varphi_1 \gets \textsc{StateSampler}(C)$
        \State $\varphi_2 \gets \textsc{StateSampler}(C)$
        \If {$O=\ket{\phi}\bra{\phi}$ is a stabilizer projector}
            \State $x \gets \mathfrak{Re}\left(\textsc{InnerProduct}(\varphi_1,\phi)\cdot \textsc{InnerProduct}(\phi,\varphi_2)\right)$
        \ElsIf {$O=\sum_{i=1}^{k}c_iP_i$}
            \State Initialize $x \gets 0$
            \For {$i = 1$ to $k$}
                \State $x \gets x + c_i \cdot \mathfrak{Re}\left(\textsc{InnerProduct}(\varphi_1,\textsc{LookUp}[P_i](\varphi_2))\right)$
            \EndFor
        \EndIf
        \State \Return $x$
    \EndProcedure
    \end{algorithmic}
\end{algorithm}

\subsection{Inner Product of Stabilizer States} \label{sec:InnerProd}
The inner product of stabilizer states $\ket{\phi_1}$ and $\ket{\phi_2}$,
\begin{equation}
    \braket{\phi_1|\phi_2} = \omega_1^*\omega_2\braket{s_1|U_H^{[1]}U_C^{[1]\dagger}U_C^{[2]}U_H^{[2]}|s_2},
\end{equation}
can be computed by first constructing the combined C-type operator $U_C = U_C^{[1]\dagger}U_C^{[2]}$ then applying Hadamard gates to the state $U_C U_H^{[2]}\ket{s_2}$ according to the H-type operator $U_H^{[1]}$, and finally evaluating the inner product (\textsc{BitInner}) between the bitstring $\ket{s_1}$ and the state $U_H^{[1]} U_C U_H^{[2]} \ket{s_2}$ with the method provided in \cite{bravyi:2019}. This is summarized in Algorithm~\ref{alg:inner-product}.

\begin{algorithm}[H]
    \caption{Inner product of stabilizer states in CH-form}
    \label{alg:inner-product}
    \begin{algorithmic}
        \Procedure{InnerProduct}{$\phi_1, \phi_2$}
        \State $(F^{[1]}, G^{[1]}, M^{[1]}, \gamma^{[1]}, v^{[1]}, s_1, \omega_1) \gets \phi_1$
        \State $(F^{[2]}, G^{[2]}, M^{[2]}, \gamma^{[2]}, v^{[2]}, s_2, \omega_2) \gets \phi_2$
        \State $(F, G, M, \gamma) \gets \textsc{UcProduct}((F^{[1]}, G^{[1]}, M^{[1]}, \gamma^{[1]}), (F^{[2]}, G^{[2]}, M^{[2]}, \gamma^{[2]}))$
        \State $\phi_{12} \gets (F, G, M, \gamma, v^{[2]}, s_2, \omega_2)$
        \For {$p=1$ to $n$}
        \If {$v_p^{[1]} = 1$}
        \State $\phi_{12} \gets \textsc{LookUp}[H_p](\phi_{12})$
        \EndIf
        \EndFor
        \State \Return $\omega_1^* \cdot \textsc{BitInner}(s_1, \phi_{12})$
        \EndProcedure
    \end{algorithmic}
\end{algorithm}
It takes $\mathcal{O}(n^3)$ time applying Hadamard gate and calculating the inner product with a bitstring, including the procedure to find the parameters $(F, G, M, \gamma)$ from \textsc{UcProduct} justified by the following Sections~\ref{sec:Uc-product} and \ref{sec:binary-matrix-multiplication}.

\subsection{Product and Inverse of C-type Matrices} \label{sec:Uc-product}
\subsubsection{Product of C-type Matrices}
For two C-type matrices $U_C^{[1]}$ and $U_C^{[2]}$ where each matrix is represented by $(F^{[1]}, G^{[1]}, M^{[1]}, \gamma^{[1]})$ and $(F^{[2]}, G^{[2]}, M^{[2]}, \gamma^{[2]})$, respectively, the values $(F, G, M, \gamma)$ that represents $U_C^{[1]}U_C^{[2]}$ are
\begin{equation}
    \begin{split}
        (F, G, M, \gamma) &= (F^{[1]}, G^{[1]}, M^{[1]}, \gamma^{[1]}) * (F^{[2]}, G^{[2]}, M^{[2]}, \gamma^{[2]})\\
        &= (F^{[1]}F^{[2]}, G^{[1]}G^{[2]}, F^{[1]}M^{[2]} + M^{[1]}G^{[2]}, \gamma^{[1]} + F^{[1]}\gamma^{[2]} + 2S)
    \end{split}
\end{equation}
where $S$ is the binary vector representing sign induced by anti-commutativity of $X$ and $Z$, which is equal to
\begin{equation}\label{eqn:sign-vector}
    S_p = \sum_{r=1}^n S_{pr} \bmod 2,
\end{equation}
an accumulated sum of values
\begin{equation}\label{eqn:sign-matrix}
    S_{pr} = \sum_{q=1}^n F^{[1]}_{pq}F^{[2]}_{qr} \sum_{q'=1}^{q-1} F_{pq'}^{[1]}M_{q'r}^{[2]}.
\end{equation}
It is because
\begin{equation}
    \begin{split}
        U_C^\dagger X_p U_C = &U_C^{[2]\dagger}U_C^{[1]\dagger}X_pU_C^{[1]}U_C^{[2]} = U_C^{[2]\dagger}\left(i^{\gamma_p^{[1]}}\prod_qX_q^{F_{pq}^{[1]}}Z_q^{M_{pq}^{[1]}}\right)U_C^{[2]}\\
        = &i^{\gamma_p^{[1]}}\prod_q \left(i^{F_{pq}^{[1]}\gamma_q^{[2]}}\prod_r X_r^{F_{pq}^{[1]}F_{qr}^{[2]}}Z_r^{F_{pq}^{[1]}M_{qr}^{[2]}}\right)\prod_q \left(\prod_{r}Z_r^{M_{pq}^{[1]}G_{qr}^{[2]}}\right)\\
        = &i^{\gamma_p^{[1]} + \sum_q F_{pq}^{[1]}\gamma_q^{[2]}}\cdot \prod_r \left((-1)^{S_{pr}}\cdot \prod_q X_r^{F_{pq}^{[1]}F_{qr}^{[2]}}Z_r^{F_{pq}^{[1]}M_{qr}^{[2]}}\right)\prod_r \left(Z_r^{\sum_q M_{pq}^{[1]}G_{qr}^{[2]}}\right)\\
        = &i^{\gamma_p^{[1]} + \sum_q F_{pq}^{[1]}\gamma_q^{[2]}}\cdot (-1)^{\sum_r{S_{pr}}}\cdot\prod_r \left(\prod_q X_r^{F_{pq}^{[1]}F_{qr}^{[2]}}Z_r^{F_{pq}^{[1]}M_{qr}^{[2]}}\right)\prod_r \left(Z_r^{\sum_q M_{pq}^{[1]}G_{qr}^{[2]}}\right)\\
        = & i^{\gamma_p^{[1]} + (F^{[1]}\gamma^{[2]})_p +2S_{p}} \cdot \prod_r X_r^{(F^{[1]}F^{[2]})_{pr}}Z_r^{(F^{[1]}M^{[2]})_{pr} + (M^{[1]}G^{[2]})_{pr}}
    \end{split}
\end{equation}
and
\begin{equation}
    \begin{split}
        U_C^\dagger Z_p U_C = &U_C^{[2]\dagger}U_C^{[1]\dagger}Z_pU_C^{[1]}U_C^{[2]} = U_C^{[2]\dagger}\left(\prod_qZ_q^{G_{pq}^{[1]}}\right)U_C^{[2]}\\
        = &\prod_r Z_r^{\sum_q G_{pq}^{[1]}G_{qr}^{[2]}}\\
        = &\prod_rZ_r^{(G^{[1]}G^{[2]})_{pr}}.
    \end{split}
\end{equation}

\subsubsection{Inverse of C-type Matrices}
For a C-type matrix $U_C$ specified by $(F, G, M, \gamma)$, the inverse $U_C^\dagger$ would have $(\tilde{F}, \tilde{G}, \tilde{M}, \tilde{\gamma})$ where 
\begin{equation}
    (\tilde{F}, \tilde{G}, \tilde{M}, \tilde{\gamma}) * (F, G, M, \gamma) = (\tilde{F}F, \tilde{G}G, \tilde{F}M + \tilde{M}G, \tilde{\gamma} + \tilde{F}\gamma + 2\tilde{S}) = (I, I, O, 0^n)
\end{equation}
which results in
\begin{equation}\label{eqn:Uc-inv}
    (F, G, M, \gamma)^{-1} = (F^{-1}, G^{-1}, F^{-1}MG^{-1}, -F^{-1}\gamma + 2\tilde{S}) = (G^\top, F^\top, G^\top MF^\top, -G^\top\gamma + 2\tilde{S})
\end{equation}
using the relation $FG^\top \equiv I \pmod{2}$ by its definition. Here, $\tilde{S}$ is obtained from Eq.~\eqref{eqn:sign-vector} via the following relation:
\begin{equation}\label{eqn:sign-inv}
    (G^\top, F^\top, G^\top MF^\top, -G^\top\gamma) * (F, G, M, \gamma) = (I, I, O, 2\tilde{S})
\end{equation}
or equivalently, by combining Eq.~\eqref{eqn:sign-vector} with Eq.~\eqref{eqn:Uc-inv} as follows:
\begin{equation} \label{eqn:sign-vector-inv}
    \tilde{S}_p = \sum_{r=1}^n \sum_{q=1}^n \tilde{F}_{pq}F_{qr} \sum_{q'=1}^{q-1}\tilde{F}_{pq'}M_{q'r} = \sum_{r=1}^n \sum_{q=1}^n G_{qp}F_{qr} \sum_{q'=1}^{q-1}G_{q'p}M_{q'r}
\end{equation}

Finally, the parameters $(F, G, M, \gamma) = (F^{[1]}, G^{[1]}, M^{[1]}, \gamma^{[1]})^{-1} * (F^{[2]}, G^{[2]}, M^{[2]}, \gamma^{[2]})$ representing $U_C^{[1]\dagger} U_C^{[2]}$ are given by
\begin{equation}
    \begin{cases}F = G^{[1]\top}F^{[2]}\\
    G = F^{[1]\top}G^{[2]}\\
    M = G^{[1]\top}M^{[2]} + G^{[1]\top}M^{[1]}F^{[1]\top}G^{[2]}\\
    \gamma = G^{[1]\top}(\gamma^{[2]} - \gamma^{[1]}) + 2S\end{cases}
\end{equation}
where the components of the binary vector $S$ are evaluated as
\begin{equation}
    S_p = \sum_{q=1}^n G^{[1]}_{qp}\sum_{r=1}^n \left(F^{[1]}_{qr} \sum_{q'=1}^{q-1}G^{[1]}_{q'p}M^{[1]}_{q'r} + F^{[2]}_{qr} \sum_{q'=1}^{q-1}G^{[1]}_{q'p}M^{[2]}_{q'r}\right).
\end{equation}

\subsection{Efficient Calculation of Sign Vector with bitwise AND Operation} \label{sec:binary-matrix-multiplication}
\subsubsection{Binary Matrix Product}
Previous work stores a binary matrix as a set of row vectors (or an $n$-bit value representing each row), to simplify a specific type of binary matrix product $A^\top B$ to $A \circ B$, as
\begin{equation}
    (A \circ B)_{ij} = (A^\top B)_{ij} = \sum_p A_{pi}B_{pj} = A_{\cdot i} \cdot B_{\cdot j} = |A_{\cdot i}\ \&\ B_{\cdot j}|_H \bmod 2
\end{equation}
reducing it to $\mathcal{O}(n^2)$ bitwise AND operations on $n$-bit values.
\begin{algorithm}[ht]
    \caption{Efficient binary matrix product}
    \begin{algorithmic}
        \Procedure{ATransposeB}{$A, B$}
        \State $M \gets 0^{n\times n}$
        \For {$p=1$ to $n$}
        \For {$q=1$ to $n$}
        \State $M_{pq} \gets |A_{\cdot p}\ \&\ B_{\cdot q}|_H \bmod 2$
        \EndFor
        \EndFor
        \State \Return $M$
        \EndProcedure
    \end{algorithmic}
\end{algorithm}

Using this technique, the values $(F, G, M, \gamma) = (F^{[1]}, G^{[1]}, M^{[1]}, \gamma^{[1]})^{-1} * (F^{[2]}, G^{[2]}, M^{[2]}, \gamma^{[2]})$ that represents $U_C^{[1]\dagger} U_C^{[2]}$ are calculated by
\begin{equation}
    \begin{cases}F = G^{[1]\top}F^{[2]} = G^{[1]}\circ F^{[2]}\\
    G = F^{[1]\top}G^{[2]} = F^{[1]}\circ G^{[2]}\\
     M = G^{[1]\top}M^{[2]} + G^{[1]\top}M^{ [1]}F^{[1]\top}G^{[2]} = G^{[1]} \circ M^{[2]} + G^{[1]} \circ (MT^{[1]} \circ G)\\
    \gamma = -G^{[1]\top}\gamma^{[1]} + 2\tilde{S}^{[1]} + G^{[1]\top}\gamma^{[2]} + 2S^{[1,2]} = G^{[1]} \circ (\gamma^{[2]} - \gamma^{[1]}) + 2(\tilde{S}^{[1]} + S^{[1,2]})\end{cases}
\end{equation}
where the transposed matrices such as $MT = M^\top$, $FT = F^\top$ are precalculated for desired matrix multiplication which takes $\mathcal{O}(n^2)$ bitwise AND operations on $n$-bit values. 

\subsubsection{Sign Vector Accumulation}
Substituting Eq.~\eqref{eqn:Uc-inv} to Eq.~\eqref{eqn:sign-vector} and combining with Eq.~\eqref{eqn:sign-vector-inv} gives $\tilde{S}^{[1]} + S^{[1,2]}$ as
\begin{equation}\label{eqn:sign-vector-sum}
    S_p = \tilde{S}^{[1]}_p + S^{[1,2]}_p = \sum_{q=1}^n G^{[1]}_{qp}\sum_{r=1}^n \left(F^{[1]}_{qr} \sum_{q'=1}^{q-1}G^{[1]}_{q'p}M^{[1]}_{q'r} + F^{[2]}_{qr} \sum_{q'=1}^{q-1}G^{[1]}_{q'p}M^{[2]}_{q'r}\right).
\end{equation}
To calculate $S_p$ using Eq.~\eqref{eqn:sign-vector-sum} with binary matrix multiplication for each $p$, define binary vectors $C^{[i]}_{\cdot p}(q)$ for $i = 1, 2$ which is defined recursively with $C^{[i]}_{\cdot p}(1) = 0^n$ as
\begin{equation}
    C_{rp}^{[i]}(q+1) = \sum_{q'=1}^{q}G^{[1]}_{q'p}M^{[i]}_{q'r} = G^{[1]}_{q,p}MT^{[i]}_{r,q} + C_{rp}^{[i]}(q).
\end{equation}
Thus one can also rewrite Eq.~\eqref{eqn:sign-vector-sum} by
\begin{equation}
    \sum_{q=1}^n G^{[1]}_{qp}\sum_{r=1}^n \left(F^{[1]}_{qr} C^{[1]}_{rp}(q) + F^{[2]}_{qr} C^{[2]}_{rp}(q)\right) = \sum_{q=1}^n G^{[1]}_{qp} \left(FT^{[1]}_{\cdot q} \cdot C^{[1]}_{\cdot p}(q) + FT^{[2]}_{\cdot q} \cdot C^{[2]}_{\cdot p}(q) \right).
\end{equation}
This leaves $\mathcal{O}(n^2)$ bitwise-and operations on $n$-bit values cumulating $C^{[1]}$ and $C^{[2]}$ through every $q$ for each $p$.
\begin{algorithm}[ht]
    \caption{Product of $U_C$}
    \begin{algorithmic}
        \Procedure{UcProduct}{$(F^{[1]}, G^{[1]}, M^{[1]}, \gamma^{[1]}), (F^{[2]}, G^{[2]}, M^{[2]}, \gamma^{[2]})$}
        \State $F \gets \textsc{ATransposeB}(G^{[1]}, F^{[2]})$
        \State $G \gets \textsc{ATransposeB}(F^{[1]}, G^{[2]})$
        \State $M1 \gets \textsc{ATransposeB}(G^{[1]}, M^{[2]})$
        \State $M2 \gets \textsc{ATransposeB}(G^{[1]}, \textsc{ATransposeB}(MT^{[1]}, G))$
        \State $\gamma_p \gets 0^n$
        \For {$p=1$ to $n$}
        \For {$q=1$ to $n$}
        \State $\gamma_p \gets \gamma_p +_4 G_{qp}^{[1]}(\gamma_q^{[2]} -_4 \gamma_q^{[1]})$
        \State $M_{pq} \gets M1_{pq} \oplus M2_{pq}$
        \EndFor
        \EndFor
        \For {$p=1$ to $n$}
        \State $C^{[1]} \gets 0^n$, $C^{[2]} \gets 0^n$
        \For {$q=1$ to $n$}
        \If {$G_{qp}^{[1]} = 1$}
        \If {$|C^{[1]}\ \&\ FT^{[1]}_{\cdot q}|_H \neq |C^{[2]}\ \&\ FT^{[2]}_{\cdot q}|_H \bmod 2$}
        \State $S_p \gets S_p \oplus 1$
        \EndIf
        \State $C^{[1]} \gets C^{[1]} \oplus MT_{\cdot q}^{[1]}$
        \State $C^{[2]} \gets C^{[2]} \oplus MT_{\cdot q}^{[2]}$
        \EndIf
        \EndFor
        \EndFor
        \State $\gamma \gets \gamma +_4 2S$
        \State \Return $(F, G, M, \gamma)$
        \EndProcedure
    \end{algorithmic}
\end{algorithm}

\clearpage
\section{Analysis on CARVE protocol} \label{sec:variance-estimation-appendix}
This appendix justifies the protocol and proves Theorem~\ref{thm:variance-estimation} and Theorem~\ref{thm:nisq-thm}.

\subsection{Bounding Number of Samples} \label{sec:CARVE-proof-appendix}
We first state and prove a stronger version of Bernstein's inequality.
\begin{lemma} [time-uniform Bernstein's inequality]
    Let $X_1, X_2, \cdots$ be a sequence of i.i.d. random variables, which satisfies $\mathbb{E}[X_i] = 0$, $\text{Var}(X_i)=\sigma^2$, and $\abs{X_i} \le 1$. Let $\bar{X}_t = \sum_{i=1}^{t}{X_i}/t$ be a sampled mean upon $t$. Then, for any $\epsilon > 0$, the following inequality holds:
    \begin{equation}\label{eq:time-uniform-bernstein}
        \mathbb{P}\left[\exists T \ge t : \abs{\bar{X}_T} > \epsilon \right] \le 2e^{-t\epsilon^2/(2\sigma^2 + 2\epsilon/3)}
    \end{equation}
\end{lemma}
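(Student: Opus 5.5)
The plan is the standard exponential-supermartingale argument combined with Ville's maximal inequality. Exponential supermartingales give time-uniformity for free, so no union bound over $T$ is needed.

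\textbf{Setup.} Write $S_T=\sum_{i=1}^T X_i$. I treat the upper tail $\{\exists T\ge t: S_T > T\epsilon\}$ and obtain the lower tail by applying the same argument to $-X_i$, which satisfies the same hypotheses. A union bound over the two tails then produces the factor $2$. Fix $\lambda\in(0,3)$ and define
\begin{equation*}
    \psi(\lambda)=\sigma^2(e^\lambda-1-\lambda), \qquad M_T=\exp\bigl(\lambda S_T - T\psi(\lambda)\bigr), \quad M_0=1.
\end{equation*}

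\textbf{Step 1 (moment generating function bound).} The map $x\mapsto (e^{\lambda x}-1-\lambda x)/x^2$ is nondecreasing. Hence for $x\le 1$,
\begin{equation*}
    e^{\lambda x}\le 1+\lambda x + x^2(e^\lambda-1-\lambda).
\end{equation*}
Taking expectations and using $\mathbb{E}[X_i]=0$ and $\mathbb{E}[X_i^2]=\sigma^2$ gives $\mathbb{E}[e^{\lambda X_i}]\le 1+\psi(\lambda)\le e^{\psi(\lambda)}$. By independence, $\mathbb{E}[M_{T+1}\mid X_1,\dots,X_T]\le M_T$, so $(M_T)$ is a nonnegative supermartingale with $\mathbb{E}[M_0]=1$.

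\textbf{Step 2 (reduction to a crossing event).} Suppose $\lambda$ is chosen with $g(\lambda):=\lambda\epsilon-\psi(\lambda)\ge 0$. If $S_T>T\epsilon$ for some $T\ge t$, then
\begin{equation*}
    \log M_T > T\,g(\lambda)\ge t\,g(\lambda).
\end{equation*}
Ville's maximal inequality for nonnegative supermartingales, $\mathbb{P}[\sup_T M_T\ge c]\le 1/c$, then gives
\begin{equation*}
    \mathbb{P}[\exists T\ge t:\bar X_T>\epsilon]\le e^{-t\,g(\lambda)}.
\end{equation*}
This monotonicity in $T$ is what turns the fixed-time Chernoff bound into a time-uniform one.

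\textbf{Step 3 (optimization).} I use the standard bound $e^\lambda-1-\lambda\le \lambda^2/\bigl(2(1-\lambda/3)\bigr)$ for $0\le\lambda<3$, which follows by comparing Taylor coefficients via $k!\ge 2\cdot 3^{k-2}$. Take $\lambda=\epsilon/(\sigma^2+\epsilon/3)$. This lies in $(0,3)$, and $1-\lambda/3=\sigma^2/(\sigma^2+\epsilon/3)$. A short computation then gives
\begin{equation*}
    g(\lambda)\ge \frac{\epsilon^2}{\sigma^2+\epsilon/3}-\frac{\epsilon^2}{2(\sigma^2+\epsilon/3)}=\frac{\epsilon^2}{2\sigma^2+2\epsilon/3}\ge 0.
\end{equation*}
This is exactly the claimed exponent, and it also confirms the nonnegativity required in Step 2.

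\textbf{Expected obstacle.} The only delicate points are the pointwise inequality in Step 1, which is where $\abs{X_i}\le 1$ is used (the one-sided bound $X_i\le 1$ suffices for each tail), and the edge case $\sigma=0$. When $\sigma=0$ we have $X_i=0$ almost surely and the probability is $0$, so the bound is trivial. If $\sigma=0$ is not handled separately, taking the limit $\sigma\to 0$ in the chosen $\lambda$ still keeps $\lambda<3$, so the argument goes through either way.
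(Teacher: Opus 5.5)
Your proposal is correct and follows essentially the same route as the paper: the same exponential supermartingale $M_T=\exp(\lambda S_T-T\psi(\lambda))$, Ville's inequality, the choice $\lambda=\epsilon/(\sigma^2+\epsilon/3)$, and the symmetric argument for the lower tail. The only difference is the MGF step, where you use the monotonicity of $(e^{\lambda x}-1-\lambda x)/x^2$ instead of the paper's termwise bound $\mathbb{E}[X_i^k]\le\sigma^2$. Both routes reach $1+\sigma^2(e^\lambda-1-\lambda)$ before applying $k!\ge 2\cdot 3^{k-2}$, but yours needs only the one-sided bound $X_i\le 1$ for each tail, and your separate handling of $\sigma=0$ (where $\lambda=3$) covers an edge case the paper's proof skips.
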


\begin{proof}
    To establish the time-uniform bound over all $T \ge t$, we utilize the framework of $\psi$-yielding supermartingales formalized by Howard et al. (2020) \cite{howard2021time}. Using the Taylor series expansion for $\lambda \in [0, 3)$, we obtain:
    \begin{equation}
        \mathbb{E}[e^{\lambda X_i}] = 1 + \lambda \mathbb{E}[X_i] + \sum_{k=2}^{\infty}{\frac{\lambda^k\mathbb{E}[X_i^k]}{k!}}.
    \end{equation}
    As $\mathbb{E}[X_i] = 0$ and $\abs{X_i} \le 1$, the inequality $\mathbb{E}[X_i^k] \le \mathbb{E}[X_i^2] = \sigma^2$ holds for every $k \ge 2$ making
    \begin{equation}
        \begin{split}
            \mathbb{E}[e^{\lambda X_i}] \le 1 + \sum_{k=2}^{\infty}{\frac{\lambda^k\sigma^2}{k!}} \le 1 + \sum_{k=2}^{\infty}{\frac{\lambda^k\sigma^2}{2 \cdot 3^{k-2}}} = 1 + \frac{\lambda^2 \sigma^2}{2}\cdot \frac{1}{1-\lambda/3} \le \exp\left(\frac{\lambda^2\sigma^2}{2(1-\lambda/3)}\right).
        \end{split}
    \end{equation}
    Therefore,
    \begin{equation}
        \mathbb{E}\left[e^{\lambda X_i - \psi(\lambda)}\right] \le 1
    \end{equation}
    for $0 \le \lambda < 3$ where
    \begin{equation}
        \psi(\lambda) =  \frac{\lambda^2 \sigma^2 / 2}{1 - \lambda/3}.
    \end{equation}

    Let $S_T = \sum_{i=1}^T X_i$ be the partial sum and $\mathcal{F}_T = \sigma(X_1, X_2, \dots, X_T)$ be the natural filtration generated by the sequence, with $\mathcal{F}_0$ being the trivial $\sigma$-algebra. Then the discrete stochastic process 
    \begin{equation}
        M_T(\lambda) = \exp(\lambda S_T - T\psi(\lambda))
    \end{equation}
    is a non-negative supermartingale, as $M_0(\lambda) = 1$ and
    \begin{equation}
        \mathbb{E}[M_{T+1}(\lambda) \mid \mathcal{F}_T] = M_T(\lambda) \cdot \mathbb{E}[\exp(\lambda X_{T+1} - \psi(\lambda))] \le M_T(\lambda)
    \end{equation}
    by observing the conditional expectation.

    We now analyze the probability of the sample mean exceeding $\epsilon$. The event $\{\exists T \ge t : \bar{X}_T > \epsilon\}$ is algebraically equivalent to:
    \begin{equation}
        \{\exists T \ge t : \lambda S_T - T\psi(\lambda) > T(\lambda \epsilon - \psi(\lambda))\}.
    \end{equation}
    If we restrict $\lambda$ such that $\lambda\epsilon > \psi(\lambda)$, then the right-hand side is strictly positive and increasing with $T$. This allows us to establish the following subset relation:
    \begin{equation}
        \{\exists T \ge t : \bar{X}_T > \epsilon\} \subset \left\{\sup_{T \ge t} M_T(\lambda) > \exp(t(\lambda\epsilon - \psi(\lambda)))\right\}.
    \end{equation}
    Applying Ville's inequality to the non-negative supermartingale $M_T$, we get:
    \begin{equation}\label{eqn:ville-inequality}
        \mathbb{P}(\exists T \ge t : \bar{X}_T > \epsilon) \le \frac{\mathbb{E}[M_0]}{\exp\left(t(\lambda\epsilon - \psi(\lambda))\right)} = \exp\left(-t(\lambda\epsilon - \psi(\lambda))\right).
    \end{equation}
    
    We substitute the following value for $\lambda$:
    \begin{equation}
        \lambda = \frac{\epsilon}{\sigma^2 + \epsilon/3}
    \end{equation}
    yielding the corresponding value for $\psi(\lambda)$:
    \begin{equation}
        \psi(\lambda) = \frac{\epsilon^2}{2(\sigma^2+\epsilon/3)}.
    \end{equation}
    These satisfy both $0 < \lambda < 3$ and $\lambda\epsilon > \psi(\lambda)$. Substituting these into Eq.~\eqref{eqn:ville-inequality}, we obtain the one-sided inequality:
    \begin{equation}
        \mathbb{P}(\exists T \ge t : \bar{X}_T > \epsilon) \le \exp\left( -\frac{t\epsilon^2}{2\sigma^2 + 2\epsilon/3} \right)
    \end{equation}
    while the other side can be treated by the same logic applied to $Y_T = -X_T$ since $Y_T$ satisfies the exact same conditions as $X_T$. Therefore,
    \begin{equation}
        \mathbb{P}\left(\exists T \ge t : |\bar{X}_T| > \epsilon \right) \le 2\exp\left(-\frac{t\epsilon^2}{2\sigma^2 + 2\epsilon/3}\right).
    \end{equation}
\end{proof}

Now that the bounds are established, we can use the lemma to justify CARVE.

\begin{proposition}
    \textnormal{\textsc{CarveExpVal}} solves EVE$(\epsilon, \delta)$.
\end{proposition}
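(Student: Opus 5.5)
The proof splits the failure probability $\delta$ into two halves. One half covers the event that the pilot-stage upper bound $\sigma_M$ fails. The other half covers the event that the final average deviates by more than $\epsilon\|O\|$ even though $\sigma_M$ is valid.

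\textbf{Normalization.} Let $X_\ell = \textsc{SampleValue}(C,O)$, which are i.i.d.\ with $|X_\ell|\le\|O\|$ and $\xi_\star\mathbb{E}[X_\ell]=\mu_{C,O}$. Define
\[
Y_\ell = \frac{\xi_\star X_\ell - \mu_{C,O}}{\xi_\star\|O\|+|\mu_{C,O}|}.
\]
Then $Y_\ell$ has mean zero and $|Y_\ell|\le 1$. Since $|\mu_{C,O}|\le\|O\|$, the scaling denominator is at most $(\xi_\star+1)\|O\|$. The event $|\widehat\mu_{C,O}-\mu_{C,O}|\ge\epsilon\|O\|$ is contained in $|\bar Y_h|\ge\epsilon'$, where $\epsilon'=\epsilon/(\xi_\star+1)$. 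The variance of $Y_\ell$ is $\sigma_Y^2 = \xi_\star^2\sigma^2/(\text{denominator})^2 \le \eta^2\xi_\star^2/(\xi_\star+1)^2$.

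I expect a bookkeeping mismatch here. The paper's $h_\kappa$ contains $2(\xi_\star^2\sigma_M^2+(\xi_\star+1)\epsilon/3)\epsilon^{-2}$. The cleanest route to exactly that form is to rescale by $\|O\|$ and $\xi_\star+1$ separately: take $|Y_\ell|\le(\xi_\star+1)$ in units of $\|O\|$ and apply the Bernstein argument with range $b$. This gives the exponent $t\epsilon^2/(2\xi_\star^2\sigma^2/\|O\|^2+2b\epsilon/3)$. I will therefore restate the time-uniform Bernstein lemma with a general range $b$, by scaling, and take $b=\xi_\star+1$. With that choice the exponent matches $h_\kappa$ once $\eta\le\sigma_M$.

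\textbf{Step 1: validity of $\sigma_M$.} I will show that $\mathbb{P}[\eta>\sigma_M]\le\delta/2$. Since $\eta\le 1$ always (because $|X|\le\|O\|$), the $\min(1,\cdot)$ in the definition of $\sigma_M$ is harmless. For the other branch, $s/L_O\ge s/\|O\|$, so it suffices to show that
\[
\sigma/\|O\| \le s/\|O\| + \sqrt{8\log(2\delta^{-1})/(\kappa-1)}
\]
holds with probability at least $1-\delta/2$. I will invoke the Maurer--Pontil empirical-Bernstein lower-tail bound for the sample standard deviation of $[0,1]$-valued variables, $\sigma\le s+\sqrt{2\log(1/\delta')/(\kappa-1)}$. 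Applying it to $(X+\|O\|)/(2\|O\|)$ rescales by a factor $2$, which gives the constant $\sqrt{8\log(\cdot)/(\kappa-1)}$. Choosing $\delta'=\delta/2$ produces $\log(2\delta^{-1})$.

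The main obstacle in this step is independence. The pilot samples are reused in the final average, so $h$ is a random stopping index that depends on the data. This is exactly why the paper proves the time-uniform version of the lemma.

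\textbf{Step 2: concentration.} On the event $\eta\le\sigma_M$, the data-dependent count satisfies
\[
h_{\mathrm{CARVE}}\ge h^\ast := \left\lceil 2(\xi_\star^2\eta^2+(\xi_\star+1)\epsilon/3)\epsilon^{-2}\log(4\delta^{-1})\right\rceil,
\]
and $h^\ast$ is deterministic. By the time-uniform Bernstein lemma with $t=h^\ast$, we have
\[
\mathbb{P}[\exists T\ge h^\ast : |\bar Y_T|>\epsilon] \le 2e^{-\log(4\delta^{-1})} = \delta/2.
\]
This holds simultaneously for all $T\ge h^\ast$, so it holds in particular for the random $T=h_{\mathrm{CARVE}}$, regardless of how that index depends on the pilot samples. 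The returned value $a/h$ equals $\xi_\star\bar X_h$, because the Welford mean multiplied by $\kappa$ recovers the pilot sum.

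A union bound over the two failure events gives total failure probability at most $\delta$. I expect the delicate points to be matching the constants in the range parameter of the Bernstein lemma and justifying the empirical-variance bound.
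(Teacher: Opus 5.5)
Your proposal is correct and follows the paper's proof essentially step for step. Like the paper, you apply Maurer--Pontil at level $\delta/2$ to get the event $\{\eta\le\sigma_M\}$, then apply the time-uniform Bernstein lemma from the deterministic index $N_\sigma$ (your $h^\ast$) to absorb the data-dependent stopping index, and finish with a union bound. One slip: in your first normalization by $\xi_\star\|O\|+|\mu_{C,O}|$, the variance inequality points the wrong way, since a smaller denominator gives $\sigma_Y^2\ge\xi_\star^2\eta^2/(\xi_\star+1)^2$. This does not affect the argument, because the range-$b=\xi_\star+1$ rescaling of $(\xi_\star X_\ell-\mu_{C,O})/\|O\|$ that you actually adopt reproduces the exponent of $h_\kappa$ exactly.
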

\begin{proof}
Let $X_1,X_2,\cdots$ denote the real-valued outputs of 
$\textsc{SampleValue}(C,O)$. 
All random variables satisfy:
\begin{equation}
    \mathbb{E}[\xi_\star X_i]=\mu_{C,O},\qquad 
    |X_i|\le \|O\|,\qquad 
    |\xi_\star X_i-\mu_{C,O}|\le (\xi_\star+1)\|O\|,
\end{equation}
where $\mu_{C,O}=\bra{0^{\otimes n}}C^\dagger O C\ket{0^{\otimes n}}$ is the true expectation value of the circuit $C$.

By Theorem~10 of Maurer et al. \cite{maurer:2009}, if we estimate the empirical variance $s^2$ using $\kappa$ samples of a random variable bounded within $[-\|O\|, \|O\|]$, the true variance $\sigma^2$ satisfies the following inequality:
\begin{equation} \label{eq:var-ineq}
    \mathbb{P}\left[\sigma > s + \|O\| \cdot \sqrt{\frac{8\log(\delta^{-1})}{\kappa-1}}\right] \le \delta.
\end{equation}

Suppose that a lower bound $0 < L_O \le \|O\|$ is given. CARVE measures $\kappa$ samples to compute an empirical variance $s^2$ of \textsc{SampleValue}. Using this, it defines $\sigma_M$ and $h_\mathrm{CARVE}$:
\begin{equation} 
    \sigma_M = \min\left(1, \frac{s}{L_O} +  \sqrt{\frac{8\log(2\delta^{-1})}{\kappa-1}}\right),
\end{equation}
\begin{equation}
    h_\mathrm{CARVE} = \max\left(\kappa, \left\lceil 2\left(\xi_\star^2\sigma_M^2+\frac{(\xi_\star + 1)\epsilon}{3}\right)\epsilon^{-2}\log(4\delta^{-1}) \right\rceil\right)
\end{equation}

We must now verify the guarantee of reusing these samples. 
Let $N_\sigma$ be the required number of samples given the true variance $\sigma^2$ of \textsc{SampleValue}:
\begin{equation}
    N_\sigma = \lceil 2(\xi_\star^2 \sigma^2/\|O\|^2 + (\xi_\star + 1)\epsilon/3)\epsilon^{-2}\log(4\delta^{-1})\rceil
\end{equation}
and define two success events
\begin{equation}
    A=\{\sigma/\|O\| \le \sigma_M\},\qquad
    B=\left\{\forall n\ge N_\sigma:\abs{\xi_\star \bar X_n-\mu_{C,O}}\le \epsilon\|O\|\right\}.
\end{equation}
Equation \eqref{eq:var-ineq} gives $\mathbb{P}(A^C)\le \delta/2$, and Eq.~\eqref{eq:time-uniform-bernstein} gives $\mathbb{P}(B^C)\le \delta/2$. If the event $A \cap B$ occur, then $h_\mathrm{CARVE} \ge N_\sigma$, which guarantees that $|\xi_\star \bar{X}_{h_\mathrm{CARVE}} - \mu_{C,O}| \le \epsilon\|O\|$. The probability that the error exceeds $\epsilon$ is bounded by the probability that either $A$ or $B$ fails:
\begin{equation}
    \mathbb{P}\left[\abs{\xi_\star \bar X_{h_\mathrm{CARVE}}-\mu_{C,O}}>\epsilon\|O\|\right]
    \le \mathbb{P}\left[(A\cap B)^C\right]
    \le \mathbb{P}(A^C)+\mathbb{P}(B^C)
    \le \delta.
\end{equation}
Thus, the required accuracy and confidence bounds are satisfied.
\end{proof}

\subsection{Efficiency Analysis via Haar-random Circuits} \label{sec:haar-CARVE}

\varianceThm*
\begin{proof}
    We first observe that drawing a Haar random unitary $U$ is mathematically equivalent to the process below:
    \begin{enumerate}
        \item Draw a Haar-random unitary $V \sim \text{Haar}$.
        \item Draw a uniformly random Clifford gate $C \sim \mathcal{C}_n$.
        \item Set $U = CV$.
    \end{enumerate}

    As the Haar measure is left-invariant, the product $CV$ remains Haar-random. This invariance allows us to rewrite the expectation value of any function $f$ as:
    \begin{equation}
        \mathbb{E}_{U \sim \text{Haar}} [ f(U) ] = \mathbb{E}_{V \sim \text{Haar}} \left[ \mathbb{E}_{C \sim \mathcal{C}_n} [ f(CV) ] \right]
    \end{equation}
    Let $V = \sum_i \alpha_i K_i$ be the optimal Clifford decomposition of $V$. We are given the following variance bound:
    \begin{equation}
        \sigma^2(V) \le \sum_{i, j} p_i p_j |\bra{0}  K_i^\dagger O K_j \ket{0} |^2
    \end{equation}
    Let us define $f(V)$ as the summation term in the inequality:
    \begin{equation}
        f(V) = \sum_{i, j} p_i p_j |\bra{0} K_i^\dagger O K_j \ket{0}|^2 = \sum_{i, j} p_i p_j |\bra{\phi_i}O\ket{\phi_j}|^2
    \end{equation}
    where $\ket{\phi_i} = K_i \ket{0}$ and $\ket{\phi_j} = K_j \ket{0}$.
    
    Since $C \in \mathcal{C}_n$ is a Clifford operator, the product $CV$ has the decomposition $CV = \sum_i \alpha_i (CK_i)$. The probability coefficients $p_i$ and $p_j$ remain invariant under this global Clifford multiplication. Thus, the inner expectation over $\mathcal{C}_n$ becomes:
    \begin{equation}
        \mathbb{E}_{C \sim \mathcal{C}_n} [f(CV)] = \sum_{i, j} p_i p_j \mathbb{E}_{C \sim \mathcal{C}_n} \left[ \abs{\bra{\phi_i} C^\dagger O C \ket{\phi_j}}^2 \right]
    \end{equation}
    Assuming $O$ is a non-identity Pauli operator, conjugating it by a uniformly random Clifford $C$ maps it uniformly across the set of all $4^n - 1$ non-identity Pauli operators. Therefore, the expectation over $C$ simplifies to an average over the set of all non-identity Paulis $P \neq I$:
    \begin{equation}
        \mathbb{E}_{C \sim \mathcal{C}_n} \left[ \abs{\bra{\phi_i} C^\dagger O C \ket{\phi_j}}^2 \right] = \frac{1}{4^n - 1} \sum_{P \neq I} \abs{\bra{\phi_i} P \ket{\phi_j}}^2
    \end{equation}
    Using the Pauli operator completeness relation $\sum_P P \ket{\phi_j}\bra{\phi_j} P = 2^n I$, we can evaluate the sum over all Paulis:
    \begin{equation}
        \sum_{P} \abs{\bra{\phi_i} P \ket{\phi_j}}^2 = \sum_{P} \bra{\phi_i} P \ket{\phi_j}\bra{\phi_j} P \ket{\phi_i} = 2^n \braket{\phi_i | \phi_i} = 2^n
    \end{equation}
    Separating the identity term gives a strict upper bound:
    \begin{equation}
        \sum_{P \neq I} \abs{\bra{\phi_i} P \ket{\phi_j}}^2 = 2^n - \abs{\braket{\phi_i | \phi_j}}^2 \le 2^n
    \end{equation}
    Substituting this result back into the expectation value, and recognizing that $\sum_{i, j} p_i p_j = 1$, we obtain:
    \begin{equation}
        \mathbb{E}_{C \sim \mathcal{C}_n} [f(CV)] \le \sum_{i, j} p_i p_j \left( \frac{1}{4^n - 1} \sum_{P \neq I} \abs{\bra{\phi_i} P \ket{\phi_j}}^2 \right) \le \frac{2^n}{4^n - 1}   
    \end{equation}
    As this bound is a constant strictly independent of $V$, taking the outer expectation over $V \sim \text{Haar}$ leaves the bound unchanged, which completes the proof.
    
    \textit{Remark}: This application of Clifford twirling draws conceptual inspiration from the unitary 2-design properties outlined by Dankert et al. \cite{Dankert}.
\end{proof}

\subsection{Efficiency Analysis via Clifford + T circuits}\label{sec:clifford-t-CARVE}
In this section, we analyze the relative variance under the circuit model of Clifford + T gates. We begin by establishing a fundamental connection between the algorithm's relative variance and the purity of the sampled stabilizer branches ensemble.

\begin{lemma}\label{lem:purity-bound}
    Suppose the ESSENCE for an observable $O$ samples states $\ket{\phi_i}$ with probability $p_i$ via the subprocedure \textnormal{\textsc{StateSampler}}. The relative variance $\eta^2 = \sigma^2/\|O\|^2$ satisfies
    \begin{equation}
        \eta^2\le \Tr(\rho_b^2), \qquad \rho_b=\sum_ip_i\ket{\phi_i}\bra{\phi_i}.
    \end{equation}
    That is, the relative variance of ESSENCE is upper bounded by purity of branch ensemble.
\end{lemma}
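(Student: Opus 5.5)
We bound the variance by the second moment and then rewrite the second moment as a trace involving two copies of the branch ensemble $\rho_b$. Let $X=\mathfrak{Re}\braket{\varphi_1|O|\varphi_2}$ be the output of \textsc{SampleValue}, where $\ket{\varphi_1}$ and $\ket{\varphi_2}$ are drawn independently from $\{p_i,\ket{\widetilde{\phi}_i}\}$. The phases $\alpha_i/|\alpha_i|$ drop out of $|\braket{\varphi_1|O|\varphi_2}|^2$, so
\begin{equation*}
    \sigma^2 \le \mathbb{E}[X^2] \le \mathbb{E}\left[\abs{\braket{\varphi_1|O|\varphi_2}}^2\right] = \sum_{i,j} p_i p_j \abs{\braket{\phi_i|O|\phi_j}}^2 .
\end{equation*}
This is the same bound used in the proof of Theorem~\ref{thm:variance-estimation}.

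Next, rewrite the double sum as a trace:
\begin{equation*}
    \sum_{i,j}p_ip_j\braket{\phi_i|O|\phi_j}\braket{\phi_j|O^\dagger|\phi_i} = \Tr\left(\rho_b O \rho_b O^\dagger\right) = \Tr\left(\rho_b O \rho_b O\right),
\end{equation*}
where the last equality uses that $O$ is Hermitian. It therefore suffices to show
\begin{equation*}
    \Tr(\rho_b O\rho_b O)\le \|O\|^2\Tr(\rho_b^2).
\end{equation*}

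The first route is Cauchy--Schwarz for the Hilbert--Schmidt inner product. Set $A=\rho_b^{1/2}O\rho_b^{1/2}$, which is Hermitian. Then
\begin{equation*}
    \Tr(\rho_b O\rho_b O)=\Tr(A^2)=\|A\|_2^2 .
\end{equation*}
Apply the submultiplicativity bound $\|XYZ\|_2\le \|X\|_4\,\|Y\|_\infty\,\|Z\|_4$, which follows from Hölder's inequality. This gives
\begin{equation*}
    \|A\|_2\le \|O\|\,\|\rho_b^{1/2}\|_4^2=\|O\|\,\|\rho_b\|_2 ,
\end{equation*}
and squaring yields the claim.

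An alternative that avoids Schatten norms uses the positivity of $\rho_b$. Write $\rho_b=\sum_k \lambda_k\ket{e_k}\bra{e_k}$ with $\lambda_k\ge0$. Then
\begin{equation*}
    \Tr(\rho_bO\rho_bO)=\sum_{k,l}\lambda_k\lambda_l\abs{\braket{e_k|O|e_l}}^2 .
\end{equation*}
The AM--GM inequality gives $\lambda_k\lambda_l\le(\lambda_k^2+\lambda_l^2)/2$. Combining this with $\sum_l\abs{\braket{e_k|O|e_l}}^2=\braket{e_k|O^2|e_k}\le\|O\|^2$ yields $\|O\|^2\sum_k\lambda_k^2$. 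Dividing by $\|O\|^2$ then gives $\eta^2\le\Tr(\rho_b^2)$.

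The step I expect to need the most care is the first one, $\sigma^2\le\mathbb{E}|\braket{\varphi_1|O|\varphi_2}|^2$. Two points must be checked there. The real part must satisfy $(\mathfrak{Re}z)^2\le|z|^2$, which is immediate. The second moment must be computed with the phase-adjusted states, and the phases cancel in the modulus. One also has to confirm that $\rho_b$ is built from the unphased $\ket{\phi_i}$, which is harmless because the phases cancel in each rank-one projector. The remaining trace inequality is standard, and I would present the eigenbasis/AM--GM version because it is self-contained.
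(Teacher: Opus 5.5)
Your proof is correct and follows the paper's argument step for step. You bound $\sigma^2\le\mathbb{E}[X^2]\le\sum_{i,j}p_ip_j|\langle\phi_i|O|\phi_j\rangle|^2$, rewrite the sum as $\Tr(\rho_b O\rho_b O)$, and then bound this by $\|O\|^2\Tr(\rho_b^2)$. The one difference is that the paper simply asserts this last inequality from $\|\bar{O}\|=1$, whereas you justify it explicitly with either Hölder or the eigenbasis/AM--GM argument, and both are valid.
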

\begin{proof}
    Let
    \begin{equation}
        \bar{O} = \frac{O}{\|O\|}
    \end{equation}
    be a normalized observable.
    Suppose the full $n$-qubit circuit $C$ is decomposed as 
    \begin{equation}
        C=\sqrt{\xi_\star}\sum_{i=1}^\chi{p_iV_i},\quad \text{where} \quad \sum_{i=1}^\chi{p_i}=1,\ p_i \ge 0,\ V_i\in \mathcal{C}_n.
    \end{equation}
    Here we absorbed global phases into $V_i$. Let's denote $\ket{\phi_i}=V_i\ket{0^n}$.
    
    Let $X$ be our estimator, which is a random outcome of \textsc{SampleValue}. Then,
    \begin{equation}
        \mathbb{P}(X= \mathfrak{Re}(\bra{\phi_i}O\ket{\phi_j})) = p_ip_j
    \end{equation}
    for every $i$ and $j$ from 1 to $\chi$, respectively. Thus, the relative variance $\eta^2 = \sigma^2/\|O\|^2$ satisfies
    \begin{equation}
        \eta^2 = \frac{\sigma^2}{\|O\|^2} \le
        \frac{\mathbb{E}[X^2]}{\|O\|^2} \le
        \mathbb{E}\left[\abs{\bra{\phi_i}\bar{O}\ket{\phi_j}}^2\right] =
        \sum_{i=1}^\chi\sum_{j=1}^\chi{p_ip_j\abs{\bra{\phi_i}\bar{O}\ket{\phi_j}}^2}.
    \end{equation}

    Using the definition of the branch ensemble $\rho_b$ from the lemma statement, we have
    \begin{equation}
        \sum_{i=1}^\chi\sum_{j=1}^\chi{p_ip_j\abs{\bra{\phi_i}\bar{O}\ket{\phi_j}}^2} = \Tr(\rho_b\bar{O}\rho_b\bar{O}) \le \Tr(\rho_b^2)
    \end{equation}
    where the final inequality follows because $\|\bar{O}\| = 1$.
\end{proof}

The bound established in Lemma~\ref{lem:purity-bound} motivates us to track how the purity $\Tr(\rho_b^2)$ degrades as the circuit is applied. By treating the probabilistic selection of stabilizer branches as a noisy quantum channel, we can model this degradation directly.  Specifically, the probabilistic decomposition of a $T$ gate---when conjugated by the surrounding Clifford circuit—maps to a channel defined by a generic $n$---qubit Pauli operator $P$. Thus, evaluating the ensemble's purity requires analyzing this generic Pauli channel. To formalize this, let $\mathcal{P}_n$ denote the set of $n$-qubit Pauli strings omitting global phases.

\begin{lemma}\label{lem:channel-purity}
    For any $n$-qubit Pauli operator $P \in \mathcal{P}_n$, define the quantum channel $\mathcal{E}_P$ as
    \begin{equation}
        \mathcal{E}_P(\rho) = \frac{1}{2} \rho + \frac{1}{2} \left(\frac{I - iP}{\sqrt{2}}\right) \rho \left(\frac{I + iP}{\sqrt{2}}\right).
    \end{equation}
    If $P$ is sampled uniformly from the set of non-identity $n$-qubit Pauli operators, then the expected purity of the output state $\mathcal{E}_P(\rho)$ for any $n$-qubit state $\rho$ is strictly bounded above by
    \begin{equation}
        \frac{1}{2^n} + \frac{3}{4}\left(\Tr(\rho^2) - \frac{1}{2^n}\right).
    \end{equation}
\end{lemma}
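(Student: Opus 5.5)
Let $U_P = (I - iP)/\sqrt{2}$, so that $\mathcal{E}_P(\rho) = \tfrac12\rho + \tfrac12 U_P\rho U_P^\dagger$. The plan is to expand the purity directly,
\begin{equation*}
    \Tr\big(\mathcal{E}_P(\rho)^2\big) = \tfrac14\Tr(\rho^2) + \tfrac14\Tr(\rho^2) + \tfrac12\Tr\big(\rho\, U_P\rho U_P^\dagger\big),
\end{equation*}
where we used unitary invariance of the trace for the second term. Since $P^2=I$, we have $U_P\rho U_P^\dagger = \tfrac12(\rho + P\rho P) + \tfrac{i}{2}[\rho,P]$, and $\Tr(\rho[\rho,P]) = 0$ by cyclicity. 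This gives
\begin{equation*}
    \Tr\big(\mathcal{E}_P(\rho)^2\big) = \tfrac34\Tr(\rho^2) + \tfrac14\Tr(\rho P\rho P).
\end{equation*}
So the whole problem reduces to averaging the single quantity $\Tr(\rho P\rho P)$ over uniformly random non-identity $P$.

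For the average, I will use the Pauli twirl identity $\sum_{P\in\mathcal{P}_n} P\rho P = 2^n \Tr(\rho)\, I$. This is the same completeness relation used in the proof of Theorem~\ref{thm:variance-estimation}. It gives $\sum_{P} \Tr(\rho P\rho P) = 2^n$. Removing the identity term yields
\begin{equation*}
    \mathbb{E}_{P\neq I}\big[\Tr(\rho P\rho P)\big] = \frac{2^n - \Tr(\rho^2)}{4^n-1}.
\end{equation*}
Substituting back, the expected purity is $\tfrac34\Tr(\rho^2) + \tfrac14\,\frac{2^n-\Tr(\rho^2)}{4^n-1}$.

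The remaining step, and the only mildly delicate one, is to compare this with the target $\frac{1}{2^n}+\frac34\big(\Tr(\rho^2)-\frac{1}{2^n}\big) = \frac34\Tr(\rho^2) + \frac{1}{4\cdot 2^n}$. It suffices to show
\begin{equation*}
    \frac{2^n - \Tr(\rho^2)}{4^n-1} < \frac{1}{2^n}.
\end{equation*}
Multiplying out, this is equivalent to $4^n - 2^n\Tr(\rho^2) < 4^n - 1$, that is, $2^n\Tr(\rho^2) > 1$. This holds strictly whenever $\rho$ is not maximally mixed, because $\Tr(\rho^2)\ge 2^{-n}$ with equality only for $I/2^n$.

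In the maximally mixed case the two sides are equal, and both equal $2^{-n}$, since $\mathcal{E}_P$ is unital. So I would either state the bound as "$\le$", with strictness for $\rho \neq I/2^n$, or note this edge case explicitly. This equality case is the main point to handle carefully. The rest is routine algebra.
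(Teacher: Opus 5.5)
Your proof is correct, and it reaches exactly the paper's expected purity by a different route.

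The paper writes $\rho = 2^{-n}\sum_Q c_Q Q$ and splits it into parts $\rho_\pm$ that commute or anticommute with $P$. It shows that $\mathcal{E}_P$ fixes $\rho_+$ and halves the Hilbert--Schmidt weight of $\rho_-$. It then averages using the exact fractions of non-identity Paulis that commute or anticommute with each $Q$, obtaining $2^{-n} + (\Tr\rho^2 - 2^{-n})\frac{3\cdot 4^{n-1}-1}{4^n-1}$.

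You instead derive the per-$P$ identity $\Tr(\mathcal{E}_P(\rho)^2) = \frac34\Tr(\rho^2) + \frac14\Tr(\rho P\rho P)$ by direct operator algebra, and average with the twirl identity. Your $\frac34 p + \frac{2^n - p}{4(4^n-1)}$ and the paper's expression both equal $p\,\frac{3\cdot 4^{n-1}-1}{4^n-1} + \frac{2^{n-2}}{4^n-1}$. The two are really the same identity in different coordinates, since $\Tr(\rho P\rho P) = 2^{-n}\sum_Q \pm c_Q^2$, with sign $+$ exactly when $Q$ commutes with $P$.

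What each buys:
\begin{itemize}
\item Your route is shorter and self-contained. It avoids the commutation counting, and it avoids the check (left implicit in the paper) that the cross term $\Tr(\rho_+\,\mathcal{E}_P(\rho_-))$ vanishes.
\item The paper's route makes visible which Pauli components get damped, which is the intuition behind the sparse-$T$ theorem.
\end{itemize}

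Your caveat on strictness is also right, and the paper glosses over it. At $\rho = I/2^n$ the paper's own formula gives exactly $2^{-n}$, which equals the claimed bound. So the correct statement is $\le$, with strict inequality if and only if $\rho \neq I/2^n$. This is harmless downstream: in the sparse-$T$ theorem, the recursion gives $0.5^n + 0.75^t(1-0.5^n)$, which is strictly below $0.5^n + 0.75^t$ regardless.
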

\begin{proof}
    Any $n$-qubit density matrix $\rho$ can be expressed via its Pauli decomposition
    \begin{equation}
        \rho = \frac{1}{2^n}\sum_{P \in \mathcal{P}_n}{c_P P}, \qquad c_P = \Tr(\rho P) \in \mathbb{R}.
    \end{equation}
    Consequently, the purity of the density matrix can be written simply as the sum of its squared coefficients:
    \begin{equation}
        \Tr(\rho^2)= \frac{1}{2^n}\sum_{P \in \mathcal{P}_n}{c_P^2}.
    \end{equation}
    
    We can decompose $\rho$ into parts that commute ($\rho_+$) and anti-commute ($\rho_-$) with $P$: 
    \begin{equation}
        \rho_+ = \frac{1}{2^n}\sum_{Q: [P, Q] = 0} c_QQ, \qquad \rho_- = \frac{1}{2^n}\sum_{Q: \{P, Q\} = 0} c_QQ.
    \end{equation}
    Then the action of the channel $\mathcal{E}_P$ yields
    \begin{equation}
        \mathcal{E}_P(\rho_+) =  \rho_+, \qquad
        \mathcal{E}_P(\rho_-) = \frac{I-iP}{2}\rho_-. 
    \end{equation}

    Squaring these outcomes gives:
    \begin{equation}\label{eq:rho-plus}
         (\mathcal{E}_P(\rho_+))^2 = \rho_+^2, 
    \end{equation}
    \begin{equation}\label{eq:rho-minus}
        (\mathcal{E}_P(\rho_-))^2 = \frac{1}{4}(\rho_-^2 - i\rho_-P\rho_- -iP\rho_-^2 - P\rho_-P\rho_-) = \frac{1}{2}\rho_-^2.
    \end{equation}
    Therefore, the purity is given as:
    \begin{equation}
        \Tr((\mathcal{E}_P(\rho))^2) = \frac{1}{2^n}\left(\sum_{Q : [P,Q]=0}{c_Q^2} + \frac{1}{2}\sum_{Q : \{P,Q\}=0}{c_Q^2}\right).
    \end{equation}
    The expected purity is:
    \begin{equation}
        \mathbb{E}_{P \sim \mathcal{P}_n \setminus \{I\}}\left[\Tr((\mathcal{E}_P(\rho))^2)\right] = \frac{1}{2^n} \sum_{Q \in \mathcal{P}_n}c_Q^2\left(\mathbb{P}[[P,Q]=0] + \frac{1}{2}\mathbb{P}[\{P,Q\}=0]\right).
    \end{equation}
    Under the uniform distribution $P \sim \mathcal{P}_n \setminus \{I\}$, the probabilities for commutation and anti-commutation are
    \begin{equation}
        \mathbb{P}[[P,Q]=0] = \begin{cases}
            1 & Q = I\\
            \dfrac{2^{2n-1}-1}{4^n-1} & Q \neq I
        \end{cases}, \qquad \mathbb{P}[\{P,Q\}=0]= \begin{cases}
            0 & Q=I \\
            \dfrac{2^{2n-1}}{4^n-1} & Q \neq I
        \end{cases}.
    \end{equation}
    Substituting these probabilities into the expectation gives
    \begin{equation}
        \mathbb{E}_{P\sim\mathcal{P}_n\setminus\set{I}}[\Tr((\mathcal{E}_P(\rho))^2)] =
        \frac{1}{2^n} \left( c_I^2 + \sum_{Q \in \mathcal{P}_n \setminus \set{I}}{c_Q^2 \frac{3\cdot 4^{n-1}-1}{4^n-1}}\right).
    \end{equation}
    Since $c_I = \Tr(\rho) = 1$ is fixed, we can simplify this expression to
    \begin{equation}
        \mathbb{E}_{P\sim\mathcal{P}_n\setminus\set{I}}[\Tr((\mathcal{E}_P(\rho))^2)] =
        \frac{1}{2^n} + \left(\Tr(\rho^2)-\frac{1}{2^n}\right)\frac{3-4^{1-n}}{4-4^{1-n}}.
    \end{equation}
    The lemma immediately follows.
\end{proof}

With the single-step purity bound established in Lemma~\ref{lem:channel-purity}, we are now equipped to bound the relative variance of the full Clifford + $T$ circuit. 

\varianceNISQthm*
\begin{proof}
    By Lemma~\ref{lem:purity-bound}, the relative variance of the estimator is bounded above by the purity of the branch ensemble, $\Tr(\rho_b^2)$. We proceed by evaluating the expected purity of the ensemble under the distribution of the given circuits.
    
    Up to a global phase, the optimal decomposition for the $T_k$-gate is
    \begin{equation}
        T_k= \frac{1}{\sqrt{2+\sqrt{2}}}(I + e^{-i\pi/4}S_k).
    \end{equation}
    Applying this $T_k$-gate probabilistically is equivalent to applying a quantum channel that mixes the identity and $S_k (\cdot) S_k^\dagger$ with equal probability. Because $S_k = (I - iZ_k)/\sqrt{2}$ up to a global phase, this operation corresponds exactly to the channel $\mathcal{E}_{Z_k}$ defined in Lemma \ref{lem:channel-purity}. This allows us to rewrite the entire circuit $C$ as a composite quantum channel $\mathcal{E}_C$:
    \begin{equation}
        \mathcal{E}_C = U^{(t)}\mathcal{E}_{Z_{k_t}}U^{(t-1)}\mathcal{E}_{Z_{k_{t-1}}}\cdots U^{(1)}\mathcal{E}_{Z_{k_1}}U^{(0)}.
    \end{equation}
    By grouping the Clifford gates into cumulative unitaries defined as $W^{(r)} = U^{(r)} U^{(r-1)} \cdots U^{(0)}$, we can rewrite the channel as
    \begin{equation}
        \mathcal{E}_C = 
        W^{(t)}(W^{(t-1)\dagger}\mathcal{E}_{Z_{k_t}}W^{(t-1)})\cdots (W^{(0)\dagger}\mathcal{E}_{Z_{k_1}}W^{(0)}).
    \end{equation}
    
    Conjugating the channel $\mathcal{E}_{Z_k}$ by a Clifford unitary $W$ is equivalent to applying the channel $\mathcal{E}_P$, where $P = W^\dagger Z_k W$ is another Pauli operator.
    By tracking the transformations of the original $Z$ operators through the cumulative Clifford unitaries, we define the Paulis $P^{(r)} = W^{(r-1)\dagger} Z_{k_r} W^{(r-1)}$ to express the entire circuit channel as:
    \begin{equation}
        \mathcal{E}_C = W^{(t)}\mathcal{E}_{P^{(t)}}\mathcal{E}_{P^{(t-1)}}\cdots \mathcal{E}_{P^{(1)}}.
    \end{equation}
    Since conjugating $Z_{k_r}$ by a uniformly random Clifford $W^{(r-1)}$ maps it uniformly across the set of all non-identity Pauli operators with global phases $\pm1$, the $P^{(r)}$ are drawn i.i.d. uniformly from $\pm\mathcal{P}_n \setminus \set{\pm I}$. Note that the purity bound derived in Lemma~\ref{lem:channel-purity} holds identically for $-P$. From the bound obtained in Lemma~\ref{lem:channel-purity}, each channel application dampens the purity such that
    \begin{equation}
        \mathbb{E}_{P \sim \pm\mathcal{P}_n \setminus \set{\pm I}}[\Tr(\mathcal{E}_P(\rho)^2)] < 0.5^n + 0.75(\Tr(\rho^2) - 0.5^n).
    \end{equation}
    Because the expected purity bound obtained in Lemma~\ref{lem:channel-purity} is linear with respect to the input purity $\Tr(\rho^2)$, applying this recursively $t$ times bounds the expected purity of the final state as
    \begin{equation}
        \mathbb{E}_{P_1, P_2, \dots, P_t \overset{i.i.d.}{\sim} \pm\mathcal{P}_n \setminus \set{\pm I}}[\Tr(\rho_b^2)] < 0.5^n + 0.75^t(1 - 0.5^n) < 0.5^n + 0.75^t.
    \end{equation}
    Using the direct relation $\eta^2\le \Tr(\rho_b^2)$ established in Lemma~\ref{lem:purity-bound}, the theorem immediately follows.
\end{proof}

\clearpage
\section{Comparison with Dyadic Channel Simulator} \label{sec:dcs-appendix}
This appendix formalizes the relation between ESSENCE and dyadic channel simulator~(DCS). We first compare the sampling procedures and then prove the near-optimality result for gates satisfying the lifting lemma.

\subsection{Dyadic Channel Simulator}
DCS, which was proposed by Seddon's Ph.D. thesis \cite{DS:2022}, is similar to well-known dyadic frame simulator~(DFS) also made by Seddon \cite{DFS:2020}. DCS decomposes completely positive and trace-preserving~(CPTP) maps dyadically. It utilizes stabilizer-preserving operators~(SPOs), the functions that map every stabilizer states into some stabilizer state, as the fundamental units of updating states.
\begin{equation}
    \text{SPO}_{n,n} = \left\{ K \mid (K \otimes I_n) \ket{\phi} \propto \ket{\phi'} \text{ for some } \ket{\phi'} \in \text{STAB}_{2n}, \text{ for all } \ket{\phi} \in \text{STAB}_{2n} \right\}
\end{equation}

These operators form the building blocks for more complex channel decompositions. Consider a circuit $\mathcal{E} = \mathcal{E}^{(T)}\cdots \mathcal{E}^{(2)}\mathcal{E}^{(1)}$ consisting of a series of channels, where each channel $\mathcal{E}^{(t)}$ has a known decomposition over the set of dyadic stabilizer-preserving maps, denoted by $\text{DSP}_{n,n}$. That is, we can express each channel as $\mathcal{E}^{(t)} = \sum_j \alpha_j^{(t)}\mathcal{T}_j^{(t)}$, where the coefficients $\alpha_j^{(t)} \in \mathbb{C}$ and the gates $\mathcal{T}_j^{(t)} \in \text{DSP}_{n,n}$. The set of dyadic stabilizer-preserving maps is defined as: 
\begin{equation}
    \text{DSP}_{n,n} = \left\{ \rho \mapsto \sum_j L_j \rho R_j^\dagger \ \middle\vert \ L_j, R_j \text{ are SPOs}, \text{ and } \sum_j L_j^\dagger L_j = \sum_j R_j^\dagger R_j = I \right\}.
\end{equation}

To estimate the expectation value
\begin{equation}
    \langle E \rangle = \Tr\left[E\mathcal{E}(|\phi^{(0)}\rangle\langle\phi^{(0)}|)\right]
\end{equation}
of a stabilizer observable $E$ for a state through the circuit from the initial state $\sigma^{(0)} = \ket{\phi^{(0)}}\bra{\phi^{(0)}}$, DCS samples a series of dyadic stabilizer-preserving maps according to its corresponding probability induced by the $\ell_1$-norms of the coefficients $\{\alpha_{j_t}^{(t)}\}$:
\begin{equation}
    \Tr\left[E\mathcal{E}(|\phi^{(0)}\rangle\langle\phi^{(0)}|)\right] = \underbrace{\left(\prod_{t=1}^T\|\boldsymbol{\alpha}^{(t)}\|_1\right)}_{\text{sample range}}\sum_{\{j_t\}}\underbrace{\left(\prod_{t=1}^T\frac{|\alpha_{j_t}^{(t)}|}{\|\boldsymbol{\alpha}^{(t)}\|_1}\right)}_{\text{probability}}\prod_{t=1}^T\frac{\alpha_{j_t}^{(t)}}{|\alpha_{j_t}^{(t)}|}\Tr\left[E\mathcal{T}_{j_T}^{(T)}\cdots\mathcal{T}_{j_1}^{(1)}(\sigma^{(0)})\right]
\end{equation}
The sample range of this process is governed by the $\ell_1$-norms of its decomposition coefficients, defined as the channel's dyadic negativity $\Lambda(\mathcal{E}^{(t)}) = \|\alpha^{(t)}\|_1$, assuming the operator norm $\|E\|$ of the observable is less than or equal to 1. The sub-multiplicativity of dyadic negativity establishes a theoretical lower bound $\Lambda(\mathcal{E}) \le \prod_t \Lambda(\mathcal{E}^{(t)})$ on the range, providing a sufficient sample number of $\mathcal{O}(\Lambda(\mathcal{E})^2)$ for the algorithm's time complexity via Hoeffding's inequality \cite{hoeffding:1963}.

To simulate a map $\mathcal{T} = \sum_j L_j(\cdot)R_j^\dagger$ acting on a density matrix $\sigma = \ket{\phi_L}\bra{\phi_R}$, we must pick each component $L_j(\sigma)R_j^\dagger$ with probability $p_j$ below, where $\|A\|_1=Tr(\sqrt{AA^\dagger})$ is the trace norm,
\begin{equation}
    p_j = \|L_j\ket{\phi_L}\bra{\phi_R}R_j^\dagger\|_1, 
\end{equation}
and returns the zero operator with probability
\begin{equation}
    p_0 = 1 - \sum_{j=1}^{N}{p_j}.
\end{equation}

Since $L_j$ and $R_j$ are not unitary, probabilities $p_j$ depend on $\sigma$ and cannot be pre-calculated. Therefore, these are calculated dynamically, by updating $\ket{\phi_L} \mapsto L\ket{\phi_L}$ and $\ket{\phi_R} \mapsto R\ket{\phi_R}$ for selected $L$ and $R$ which is efficiently calculated with CH-form representation.

Although the DCS framework provides an exceptional degree of generality by encompassing the foundations of both stabilizer propagation \cite{SP:2017} and DFS, its implementation is caught in a fundamental tension between theoretical reach and computational tractability. This tension is most immediately apparent in the scaling of the basis set. As the framework expands to capture more complex maps, finding an optimal decomposition rapidly becomes intractable. Even when restricting the search space from general DSP to projected dyadic stabilizer-preserving maps, a single qubit yields an overwhelming 93,312 extreme points. This creates a harsh trade-off: while a broader optimization space can theoretically drive down dyadic negativity, it simultaneously pushes the search for a minimal decomposition beyond the limits of practical computation.

Compounding this difficulty is the overhead of real-time probability estimation. While calculating individual transition probabilities $p_j$ requires only polynomial time, the overall utility of the simulation hinges on the total number of decomposition terms remaining within $\poly(n)$. If the decomposition size scales exponentially, any theoretical gains achieved through a lower dyadic negativity are entirely erased by the ensuing computational explosion.

Ultimately, while DCS stands as a mathematically elegant framework unifies prior simulation methods, its practical utility remains severely constrained by the dual challenges of basis optimization and real-time execution overhead.

\subsection{ESSENCE as a Specialization of DCS} \label{sec:dcs-equivalence}
For arbitrary unitary gate $U$, denote $\mathcal{U}$ as a unitary channel defined by $\mathcal{U}(\rho)=U \rho U^\dagger$. From the optimal decomposition of $U=\sum_i{\alpha_iK_i}$ with Clifford gates in ESSENCE, we can rewrite the channel $\mathcal{U}$ as
\begin{equation}\label{eqn:xi-and-lambda}
    \mathcal{U}(\cdot)=\sum_{i,j}\alpha_i\alpha_j^\star K_i(\cdot)K_j^\dagger,
\end{equation}
which is a valid dyadic channel decomposition as $K_i(\cdot)K_j^\dagger \in DSP_{n,n}$ for all $i, j$. Its $\ell_1$-norm of the coefficients is 
\begin{equation}
    \sum_{i,j}{|\alpha_i\alpha_j^\star|}=(\sum_i|\alpha_i|)^2 =\xi(U),
\end{equation}
despite the fact that it may not be optimal in dyadic sense.
Thus, the decomposition induced by SoC proves that $\xi(U)\ge \Lambda(\mathcal{U})$.

Regarding the sampling procedure of ESSENCE, the sample and propagate algorithm of ESSENCE is equivalent with the process of DCS with specific decomposition over $\text{DSP}_{n,n}$, as summarized in Table~\ref{tab:dcs-specialization}. It samples two different Cliffords $K_i, K_j$ with probability $|\alpha_i|/\|\alpha\|_1$ and $|\alpha_j|/\|\alpha\|_1$ independently and update two stabilizer states $\ket{\phi_1}$ and $\ket{\phi_2}$, respectively. The algorithm of the DCS can run the same procedure by sampling $K_i(\cdot)K_j^\dagger \in DSP_{n,n}$ with probability $|\alpha_i\alpha_j|/\|\alpha\|_1^2$ and updating dyadic frame $\sigma = \ket{\phi_1}\bra{\phi_2}$ accordingly.

\begin{table}[ht]
    \centering
    \caption{Sampling-step correspondence between ESSENCE and the dyadic channel simulator~(DCS).}
    \label{tab:dcs-specialization}
    \vspace{0.5em}
    \begin{tabular}{lll}
        \toprule
        Component & ESSENCE & DCS equivalent \\
        \midrule
        Samples & $K_i, K_j$ & $K_i(\cdot)K_j^\dagger$ \\
        Sample probability & $|\alpha_i|/\|\boldsymbol{\alpha}\|_1$, $|\alpha_j|/\|\boldsymbol{\alpha}\|_1$ & $|\alpha_i\alpha_j^*|/\|\boldsymbol{\alpha}\|_1^2$ \\
        Updates & $\ket{\phi_1} \mapsto K_i\ket{\phi_1}, \ket{\phi_2} \mapsto K_j\ket{\phi_2}$ & $\ket{\phi_1}\bra{\phi_2} \mapsto K_i\ket{\phi_1}\bra{\phi_2}K_j^\dagger$ \\
        \bottomrule
    \end{tabular}
\end{table}

Thus, ESSENCE can be seen as a practical specialization of DCS. It leverages SoC to reduce the search size to decompose, and also to simplify the calculation of transition probability. The trade-off of this specialization is replacing the range parameter $\Lambda(\mathcal{U})$ to $\xi(U)$ which is always larger than or equal to original. We found a sufficient condition, that if a gate satisfies the lifting lemma from \cite{bravyi:2019} then the equality holds.

\begin{restatable}[Near Optimality]{theorem}{nearoptimalThm} \label{thm:near-optimality}
    Suppose $V$ satisfies the lifting lemma. Then, $\xi(V) = \Lambda(\mathcal{V})$.
\end{restatable}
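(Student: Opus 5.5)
The plan is to prove the two inequalities $\xi(V)\ge\Lambda(\mathcal V)$ and $\Lambda(\mathcal V)\ge\xi(V)$ separately. The first is already established in Appendix~\ref{sec:dcs-equivalence}. There, the optimal SoC decomposition $V=\sum_i\alpha_iK_i$ induces the dyadic decomposition $\mathcal V=\sum_{i,j}\alpha_i\alpha_j^*K_i(\cdot)K_j^\dagger$, whose $\ell_1$-norm is $\xi(V)$. All the work is therefore in the reverse inequality, which I would obtain by pushing a stabilizer input through $\mathcal V$ and lower bounding the dyadic negativity of the output.

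I will read the lifting lemma of \cite{bravyi:2019} in the following form. There is a stabilizer state $\ket{\phi_0}$, possibly on $V$'s register together with an ancilla on which $V$ acts as $V\otimes I$, such that the state extent of $\ket{\psi}=V\ket{\phi_0}$ equals the unitary extent: $\xi(\ket{\psi})=\xi(V)$. For diagonal $V$ one may take $\ket{\phi_0}=\ket{+^{\otimes n}}$. The chain I aim for is
\begin{equation*}
\Lambda(\mathcal V)\;\ge\;\Lambda\bigl(\mathcal V(\ket{\phi_0}\bra{\phi_0})\bigr)\;=\;\Lambda(\ket{\psi}\bra{\psi})\;\ge\;\xi(\ket\psi)\;=\;\xi(V).
\end{equation*}
Here the dyadic negativity of an operator $\rho$ means the minimal $\ell_1$-norm of coefficients in an expansion $\rho=\sum_k c_k\ket{a_k}\bra{b_k}$ with $\ket{a_k},\ket{b_k}\in\mathrm{STAB}$.

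\textbf{Step 1 (monotonicity).} Take any decomposition $\mathcal V=\sum_j\alpha_j\mathcal T_j$ with $\mathcal T_j\in\mathrm{DSP}$, where $\mathcal T_j=\sum_m L_m(\cdot)R_m^\dagger$. Apply it to $\ket{\phi_0}\bra{\phi_0}$. Each $L_m\ket{\phi_0}$ equals $\lambda_m\ket{a_m}$ and each $R_m\ket{\phi_0}$ equals $\mu_m\ket{b_m}$, with $\ket{a_m},\ket{b_m}$ normalized stabilizer states, because $L_m,R_m$ are SPOs (with the identity on the ancilla, as built into the $\mathrm{SPO}_{n,n}$ definition). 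By Cauchy--Schwarz,
\begin{equation*}
\sum_m|\lambda_m\mu_m|\le\Bigl(\sum_m\bra{\phi_0}L_m^\dagger L_m\ket{\phi_0}\Bigr)^{1/2}\Bigl(\sum_m\bra{\phi_0}R_m^\dagger R_m\ket{\phi_0}\Bigr)^{1/2}=1.
\end{equation*}
So $\ket\psi\bra\psi$ has a stabilizer-dyad expansion with $\ell_1$-norm at most $\sum_j|\alpha_j|$. Minimizing over decompositions of $\mathcal V$ gives the first inequality.

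\textbf{Step 2 (pure states).} I would show $\Lambda(\ket\psi\bra\psi)\ge\xi(\ket\psi)$ by duality, following Seddon et al.~\cite{DFS:2020}. The dual characterization of the extent from \cite{bravyi:2019} is
\begin{equation*}
\sqrt{\xi(\ket\psi)}=\max\bigl\{|\braket{\omega|\psi}| : |\braket{\omega|\phi}|\le1\ \forall\ket\phi\in\mathrm{STAB}\bigr\}.
\end{equation*}
Take the witness $W=\ket\omega\bra\omega$ for an optimal $\omega$. For every stabilizer dyad, $|\Tr(W\ket a\bra b)|=|\braket{\omega|a}||\braket{b|\omega}|\le1$. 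Hence any expansion $\ket\psi\bra\psi=\sum_kc_k\ket{a_k}\bra{b_k}$ satisfies
\begin{equation*}
\sum_k|c_k|\ge\Bigl|\sum_kc_k\Tr(W\ket{a_k}\bra{b_k})\Bigr|=|\braket{\omega|\psi}|^2=\xi(\ket\psi).
\end{equation*}
Combining Steps~1 and~2 with the lifting lemma closes the chain.

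\textbf{Main obstacle.} I expect the main difficulty to be making the lifting-lemma hypothesis match exactly what the chain needs. Specifically, the hypothesis must supply a stabilizer input $\ket{\phi_0}$ with $\xi(V\ket{\phi_0})=\xi(V)$, including the case where an ancilla is required. Step~1 must then remain valid with $V\otimes I$ in place of $V$, which I believe holds because SPOs are defined with the ancilla identity built in. If the lemma in \cite{bravyi:2019} is stated only for diagonal $V$ and $\ket{+^{\otimes n}}$, I would restrict the proof to that setting.
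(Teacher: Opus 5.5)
Your proof is correct, and it takes a shorter route than the paper's. The paper's chain is $\xi(V)\ge\Lambda(\mathcal V)\ge\Lambda(\tilde{\mathcal V})\ge\xi(\ket{\Phi_V})\ge\xi(\ket V)$, built in three moves:
\begin{itemize}
\item it pushes the maximally entangled state $\ket{\Phi^+}$ through $\mathcal V\otimes\mathcal I$, using the same Cauchy--Schwarz argument as your Step~1 but relying on the ancilla-extended SPO property;
\item it cites \cite{DFS:2020} for $\Lambda=\xi$ on the pure Choi state;
\item it applies transversal CNOTs to turn $\ket{\Phi_V}$ into $\ket V\otimes\ket{0^n}$, returning to the magic state of the lifting lemma.
\end{itemize}
You instead feed in $\ket{+^{\otimes n}}$ directly. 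This is exactly the input in the paper's statement of the lifting lemma, so the hedge in your ``main obstacle'' paragraph resolves immediately. As a result you never need the Choi state or the CNOT reduction. You only use that SPOs send $n$-qubit stabilizer states to (subnormalized) stabilizer states. Your witness argument with $\ket\omega\bra\omega$ then gives a self-contained proof of the only direction of the pure-state identity that is actually needed.

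A side benefit is that you avoid the ancilla-removal inequality $\xi(\ket V\otimes\ket{0^n})\ge\xi(\ket V)$. The paper justifies this by saying that tensoring a stabilizer ancilla does not increase the extent, which is the opposite direction from what the chain needs. The needed direction is true, for instance by contracting each stabilizer term with $\bra{0^n}$ on the ancilla, but it is an extra step your route does not require.

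What the paper's route buys is a canonical, input-independent lower bound $\Lambda(\mathcal V)\ge\Lambda(\tilde{\mathcal V})$ that holds for any CPTP channel. Your Step~1, run with an arbitrary stabilizer input (ancillas allowed), gives the equally general bound $\Lambda(\mathcal V)\ge\max_{\phi_0}\xi(V\ket{\phi_0})$ for unitary $V$. The paper's Choi bound is then just the special case $\ket{\phi_0}=\ket{\Phi^+}$ with an $n$-qubit ancilla.
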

\begin{corollary}
    Let a $t$-qubit diagonal unitary gate $V$ satisfy the lifting lemma. Then, the simulation of $V$ using ESSENCE is equivalent using DCS with an optimal decomposition of $V$ over $\text{DSP}_{n,n}$.
\end{corollary}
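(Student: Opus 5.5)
The plan is to prove the two inequalities $\xi(V)\ge\Lambda(\mathcal{V})$ and $\xi(V)\le\Lambda(\mathcal{V})$ separately. The first is already established in Appendix~\ref{sec:dcs-equivalence}: the SoC decomposition $V=\sum_i\alpha_iK_i$ induces the dyadic decomposition of Eq.~\eqref{eqn:xi-and-lambda}, whose coefficient $\ell_1$-norm is $(\sum_i|\alpha_i|)^2$. So all the work lies in the reverse inequality $\xi(V)\le\Lambda(\mathcal{V})$. I would obtain it by passing from the channel to a state, using the lifting lemma of \cite{bravyi:2019}. For a (diagonal, $t$-qubit) gate $V$ satisfying it, the lemma states $\xi(V)=\xi(V\ket{+^{\otimes t}})$, i.e.\ the gate extent equals the extent of its associated magic state.

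The chain I aim for is
\begin{equation*}
    \xi(V)=\xi\bigl(V\ket{+^{\otimes t}}\bigr)=\Lambda\bigl(\mathcal{V}(\ket{+^{\otimes t}}\bra{+^{\otimes t}})\bigr)\le\Lambda(\mathcal{V})\,\Lambda\bigl(\ket{+^{\otimes t}}\bra{+^{\otimes t}}\bigr)=\Lambda(\mathcal{V}).
\end{equation*}
\begin{itemize}
    \item The first equality is the lifting lemma.
    \item The second uses the result of Seddon et al.~\cite{DFS:2020} that, for pure states, the dyadic negativity coincides with the stabilizer extent, $\Lambda(\ket{\psi}\bra{\psi})=\xi(\ket{\psi})$.
    \item The inequality is the channel--state submultiplicativity of dyadic negativity.
    \item The last step uses $\Lambda(\sigma)=1$ for a stabilizer projector $\sigma$.
\end{itemize}

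To justify the submultiplicativity step, I take an optimal decomposition $\mathcal{V}=\sum_j\beta_j\mathcal{T}_j$ with $\mathcal{T}_j\in\mathrm{DSP}_{t,t}$, and apply it to the stabilizer projector $\sigma=\ket{+^{\otimes t}}\bra{+^{\otimes t}}$. Each term $\mathcal{T}_j(\sigma)=\sum_k L_k\sigma R_k^\dagger$ is a combination of dyads $L_k\ket{+}\bra{+}R_k^\dagger$. Because $L_k,R_k$ are SPOs, each such dyad is proportional to a stabilizer dyad $\ket{\phi}\bra{\phi'}$, with weight $\|L_k\ket{+}\|\,\|R_k\ket{+}\|$. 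By Cauchy--Schwarz and the completeness relations $\sum_kL_k^\dagger L_k=\sum_kR_k^\dagger R_k=I$, these weights sum to at most $1$. Hence $\mathcal{T}_j(\sigma)$ has a dyadic frame decomposition of $\ell_1$-norm at most $1$, and $\mathcal{V}(\sigma)$ one of norm at most $\sum_j|\beta_j|=\Lambda(\mathcal{V})$. I would likely need to take the decomposition in the extended ($\mathrm{SPO}_{n,n}$, ancilla-tensored) sense so that the frame statement survives; since the input here has no ancilla, this should be harmless.

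I expect the main obstacle to be matching definitions precisely. The pure-state identity $\Lambda(\ket{\psi}\bra{\psi})=\xi(\ket{\psi})$ must hold for the same notion of dyadic negativity used for channels, i.e.\ with dyads $\ket{\phi}\bra{\phi'}$ over stabilizer states, possibly unnormalized or non-Hermitian. I would first check that \cite{DFS:2020} states the identity in exactly this form. If it does not, I would prove the needed direction $\xi(\ket{\psi})\le\Lambda(\ket{\psi}\bra{\psi})$ directly. The route is to test an optimal dyadic decomposition against a witness $\ket{\chi}\bra{\chi}$, where $\ket{\chi}$ is the dual-extent maximizer, i.e.\ a vector attaining $\xi(\ket{\psi})^{1/2}=\max\{|\braket{\chi|\psi}| : \max_{\phi\in\mathrm{STAB}}|\braket{\chi|\phi}|\le1\}$. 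Bounding $|\braket{\chi|\phi}\braket{\phi'|\chi}|\le1$ term by term then yields $\xi(\ket{\psi})=|\braket{\chi|\psi}|^2\le\Lambda$. Once these inequalities are in place, combining them with $\xi(V)\ge\Lambda(\mathcal{V})$ gives equality.
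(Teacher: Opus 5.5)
Your proof is correct, and it reaches $\xi(V)=\Lambda(\mathcal{V})$ by a shorter route than the paper.

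\textbf{The paper's route.} It goes through the Choi state:
\begin{itemize}
\item Lemma~\ref{lemma:choi-negativity} applies each $\mathrm{DSP}$ map to half of $\ket{\Phi^+}$. This is why it needs the ancilla-tensored SPO property.
\item The pure-state identity of \cite{DFS:2020} then turns $\Lambda(\tilde{\mathcal{V}})$ into $\xi(\ket{\Phi_V})$.
\item Diagonality of $V$ and transversal CNOTs (Lemma~\ref{lemma:magic-state}) are needed to descend from $\xi(\ket{\Phi_V})$ to $\xi(\ket{V})$, and the lifting lemma closes the chain.
\end{itemize}

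\textbf{Your route.} You feed the product stabilizer input $\ket{+^{\otimes t}}$ directly. The same Cauchy--Schwarz plus Kraus-completeness estimate then lands on $\ket{V}$ itself, so Lemma~\ref{lemma:magic-state} is not needed. Your inequality $\xi(V\ket{+^{\otimes t}})\le\Lambda(\mathcal{V})\le\xi(V)$ holds for every $V$; diagonality enters only as a hypothesis of the lifting lemma.

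\textbf{Two reassurances.} Your worry about the SPO definition is unnecessary. The paper's $\mathrm{SPO}_{n,n}$ condition, applied to $\ket{+^{\otimes t}}\otimes\ket{0^{\otimes t}}$, already shows that $L_k\ket{+^{\otimes t}}$ is zero or proportional to a stabilizer state. Your dual-witness argument is also sound. It proves exactly the direction $\xi(\ket{\psi})\le\Lambda(\ket{\psi}\bra{\psi})$ that your chain uses, so you do not depend on the precise form of the cited identity.

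\textbf{What each route buys.} The paper's detour yields Lemma~\ref{lemma:choi-negativity}, a general, input-independent channel-to-Choi-state bound valid for arbitrary CPTP maps. For the diagonal-gate statement at hand, it only adds the CNOT reduction, which your choice of input avoids.

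\textbf{One missing sentence.} To conclude the corollary rather than just the theorem, say explicitly that the SoC-induced decomposition \eqref{eqn:xi-and-lambda} has $\ell_1$-norm $\xi(V)=\Lambda(\mathcal{V})$. It is therefore an optimal $\mathrm{DSP}$ decomposition, and by the step-by-step correspondence in Table~\ref{tab:dcs-specialization}, ESSENCE is exactly DCS run on it.
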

For example, $R_Z(\theta)$ and $CCZ$ gates satisfy the lifting lemma, or the conditions (1) being diagonal, (2) their equatorial stabilizer decompositions are known to be optimal. In such cases, ESSENCE is an implementation of DCS, with exact and simple decomposition over $\text{DSP}_{n,n}$. Therefore, it is a practical realization of the pure abstract DCS making the algorithm computable in real life.

\subsection{Near-optimality Theorem}
This section proves that satisfying the condition of the lifting lemma \cite{bravyi:2019} implies the equality $\xi(V) = \Lambda(\mathcal{V})$. We achieve this by establishing two key lower bounds (Lemma~\ref{lemma:choi-negativity} and Lemma~\ref{lemma:magic-state}), which are then used to construct a chain of inequalities that ultimately collapses.

\subsubsection{Prerequisite: The Lifting Lemma}
\begin{lemma} [Lifting lemma \cite{bravyi:2019}]
    Suppose $V$ is a $t$-qubit diagonal unitary gate.
    Suppose $V\ket{+^{\otimes t}}=\ket{V}=\sum_j{\alpha_j\ket{\phi_j}}$ where $\ket{\phi_j}$ are equatorial stabilizer states and this is an optimal decomposition for magic state. Then, $\xi(V) = \xi(\ket{V})$.
\end{lemma}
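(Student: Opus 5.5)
The plan is to prove the two inequalities $\xi(\ket{V}) \le \xi(V)$ and $\xi(V) \le \xi(\ket{V})$ separately. The first holds for any unitary. The second uses diagonality of $V$ together with the equatorial form of the optimal state decomposition.

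\emph{Lower bound.} Take any Clifford decomposition $V = \sum_k d_k V_k$ achieving $\xi(V)$. Applying it to $\ket{+^{\otimes t}}$ gives
\begin{equation*}
\ket{V} = \sum_k d_k V_k\ket{+^{\otimes t}}.
\end{equation*}
Each $V_k\ket{+^{\otimes t}}$ is a stabilizer state, since Cliffords map stabilizer states to stabilizer states. This is therefore a feasible stabilizer decomposition of $\ket{V}$, and the definition of the state extent gives $\xi(\ket{V}) \le (\sum_k |d_k|)^2 = \xi(V)$.

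\emph{Upper bound.} The key structural step is to identify equatorial stabilizer states with diagonal Cliffords acting on $\ket{+^{\otimes t}}$. I would use the standard characterization: an equatorial stabilizer state, meaning one with $|\braket{x|\phi}| = 2^{-t/2}$ for all $x$, has the form
\begin{equation*}
\ket{\phi_j} = 2^{-t/2}\sum_{x\in\{0,1\}^t} i^{\,x^\top A_j x}\ket{x}
\end{equation*}
up to a global phase. Here $A_j$ is a symmetric integer matrix whose diagonal is taken mod $4$ and whose off-diagonal entries are taken mod $2$. The phase $i^{x^\top A_j x}$ is realized by a product of $S$ gates (powers given by the diagonal of $A_j$) and $CZ$ gates (one for each off-diagonal $1$). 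Hence there is a diagonal Clifford $D_j$, with the global phase absorbed into it, such that $\ket{\phi_j} = D_j\ket{+^{\otimes t}}$.

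Now set $W = \sum_j \alpha_j D_j$, which is diagonal, so that $W\ket{+^{\otimes t}} = \ket{V} = V\ket{+^{\otimes t}}$. For any diagonal operator $D$ we have $\braket{x|D|+^{\otimes t}} = 2^{-t/2}D_{xx}$, so a diagonal operator is uniquely determined by its action on $\ket{+^{\otimes t}}$. It follows that $V = W = \sum_j \alpha_j D_j$, which is a Clifford decomposition of $V$. Therefore
\begin{equation*}
\xi(V) \le \Bigl(\sum_j|\alpha_j|\Bigr)^2 = \xi(\ket{V}),
\end{equation*}
where the last equality uses that the given decomposition of $\ket{V}$ is optimal. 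Combining with the lower bound yields $\xi(V) = \xi(\ket{V})$.

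The main obstacle is the characterization of equatorial stabilizer states as $D\ket{+^{\otimes t}}$ with $D$ a diagonal Clifford. I would invoke the known quadratic-form representation of stabilizer state amplitudes: on its affine support the amplitudes are $i^{\text{linear}}(-1)^{\text{quadratic}}$, and for equatorial states the support is all of $\{0,1\}^t$. Linear $i$-phases come from $S$ gates, and quadratic $(-1)$-phases come from $CZ$ and $Z$ gates. The remaining care is bookkeeping: global phases must be absorbed into the $D_j$ without changing $|\alpha_j|$.
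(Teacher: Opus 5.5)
Your proof is correct. The paper gives no proof of its own and imports the lemma from \cite{bravyi:2019}; your argument is essentially the standard one from that source: write each equatorial state as $D_j\ket{+^{\otimes t}}$ with $D_j$ a diagonal Clifford built from $S$ and $CZ$ gates, use that a diagonal operator is determined by its action on $\ket{+^{\otimes t}}$ to conclude $V=\sum_j\alpha_j D_j$, and get the reverse inequality $\xi(\ket{V})\le\xi(V)$ by applying any Clifford decomposition of $V$ to $\ket{+^{\otimes t}}$.
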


\subsubsection{Supporting Lemmas}
Before proving the main theorem, we must establish the relationship between the negativity of a channel and its Choi-state, as well as the stabilizer extent of the magic state.

\begin{lemma}\label{lemma:choi-negativity}
    For any arbitrary CPTP channel $\mathcal{V}$, the dyadic negativity of the channel is lower bounded by the dyadic negativity of its Choi-state $\tilde{\mathcal{V}}$, or
    \begin{equation}
        \Lambda(\mathcal{V}) \ge \Lambda(\tilde{\mathcal{V}}).
    \end{equation}
\end{lemma}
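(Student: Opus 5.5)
The plan is to push the optimal dyadic decomposition of $\mathcal{V}$ through the Choi isomorphism and check that the dyadic negativity can only decrease. First I fix the conventions. The Choi state is $\tilde{\mathcal{V}} = (\mathcal{V}\otimes \mathrm{id}_n)(\ket{\Phi}\bra{\Phi})$, where $\ket{\Phi}=2^{-n/2}\sum_x\ket{x}\ket{x}$ is the maximally entangled state. $\ket{\Phi}$ is a stabilizer state in $\mathrm{STAB}_{2n}$. The dyadic negativity of a state is understood, as in \cite{DFS:2020,DS:2022}, as $\Lambda(\rho)=\min\{\|\boldsymbol{\beta}\|_1 : \rho=\sum_k \beta_k \ket{\phi_k}\bra{\phi'_k},\ \ket{\phi_k},\ket{\phi'_k}\in\mathrm{STAB}_{2n}\}$.

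Next I take an optimal decomposition $\mathcal{V}=\sum_j \alpha_j \mathcal{T}_j$ with $\mathcal{T}_j\in\mathrm{DSP}_{n,n}$ and $\|\boldsymbol{\alpha}\|_1=\Lambda(\mathcal{V})$. Applying $\cdot\otimes\mathrm{id}_n$ and evaluating on $\ket{\Phi}\bra{\Phi}$ gives
\begin{equation*}
\tilde{\mathcal{V}}=\sum_j \alpha_j \tilde{\mathcal{T}}_j, \qquad \tilde{\mathcal{T}}_j=(\mathcal{T}_j\otimes\mathrm{id})(\ket{\Phi}\bra{\Phi}).
\end{equation*}
By subadditivity and homogeneity of $\Lambda$ on states (it is an $\ell_1$-type minimum, so a triangle inequality holds after concatenating decompositions), it suffices to prove $\Lambda(\tilde{\mathcal{T}})\le 1$ for every $\mathcal{T}\in\mathrm{DSP}_{n,n}$.

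The main step is this per-map bound. Write $\mathcal{T}=\sum_k L_k(\cdot)R_k^\dagger$ with $L_k,R_k$ SPOs and $\sum_k L_k^\dagger L_k=\sum_k R_k^\dagger R_k=I$. Then
\begin{equation*}
\tilde{\mathcal{T}}=\sum_k (L_k\otimes I)\ket{\Phi}\bra{\Phi}(R_k\otimes I)^\dagger .
\end{equation*}
By the definition of $\mathrm{SPO}_{n,n}$, $(L_k\otimes I)\ket{\Phi}=a_k\ket{\psi_k}$ and $(R_k\otimes I)\ket{\Phi}=b_k\ket{\chi_k}$, with $\ket{\psi_k},\ket{\chi_k}\in\mathrm{STAB}_{2n}$ and scalars $a_k,b_k$. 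This gives a dyadic decomposition with coefficient norm $\sum_k|a_k||b_k|$. By Cauchy--Schwarz,
\begin{equation*}
\sum_k|a_k||b_k|\le\Big(\sum_k|a_k|^2\Big)^{1/2}\Big(\sum_k|b_k|^2\Big)^{1/2}.
\end{equation*}
Moreover $\sum_k|a_k|^2=\sum_k\bra{\Phi}(L_k^\dagger L_k\otimes I)\ket{\Phi}=\bra{\Phi}\Phi\rangle=1$, using the trace-preservation normalisation, and likewise $\sum_k|b_k|^2=1$. Hence $\Lambda(\tilde{\mathcal{T}})\le 1$, and consequently $\Lambda(\tilde{\mathcal{V}})\le\sum_j|\alpha_j|=\Lambda(\mathcal{V})$.

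The expected obstacle is being careful about definitions rather than any hard estimate. I need to confirm that the paper's dyadic negativity of a state allows arbitrary stabilizer dyads $\ket{\psi}\bra{\chi}$; if it is restricted to normalized dyads, the $a_k,b_k$ already absorb the normalisation. I also need to handle the case where $L_k$ annihilates $\ket{\Phi}$ (so $a_k=0$), where that term simply drops out. Finally, if $\Lambda(\mathcal{V})$ is defined as an infimum that is not attained, I would run the argument with a decomposition achieving $\Lambda(\mathcal{V})+\varepsilon$ and then let $\varepsilon\to0$.
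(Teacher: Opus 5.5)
Your proof is correct and follows the paper's argument essentially step for step. You push the optimal $\mathrm{DSP}_{n,n}$ decomposition of $\mathcal{V}$ through the Choi map, write each $\tilde{\mathcal{T}}$ as a dyadic sum over the unnormalized stabilizer vectors $(L_k\otimes I)\ket{\Phi}$ and $(R_k\otimes I)\ket{\Phi}$, and bound its $\ell_1$-cost by $1$ via Cauchy--Schwarz together with $\sum_k L_k^\dagger L_k=\sum_k R_k^\dagger R_k=I$. Your extra remarks are points the paper leaves implicit: that $\ket{\Phi}$ must itself be a stabilizer state for the SPO definition to apply, that annihilated terms drop out, and the $\varepsilon$-argument for a non-attained infimum.
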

\begin{proof}
    We show that under the transformation of a channel to a Choi-state, any $\text{DSP}_{n,n}$ map gives a valid stabilizer dyad decomposition with an $\ell_1$-norm less than or equal to 1. This technique was proposed in \cite{DS:2022} but we will introduce it again.

    Let $\mathcal{T} \in DSP_{n,n}$ and its Choi-state be $\tilde{\mathcal{T}}=\mathcal{T}\otimes \mathcal{I}(\ket{\Phi^+}\bra{\Phi^+})$.
    $\mathcal{T}$ is defined by pair of Kraus operators $(L_j, R_j)$.
    Thus, $\tilde{\mathcal{T}}=\sum_j{(L_j\otimes I) \ket{\Phi^+}\bra{\Phi^+}(R_j^\dagger\otimes I)}$. By the definition of $\text{SPO}_{n,n}$, the stabilizer states $(L_j\otimes I) \ket{\Phi^+}, (R_j \otimes I)\ket{\Phi^+}$ need not be normalized.
    With normalization into $\ket{L_j}$ and $\ket{R_j}$, respectively, it gives a dyadic stabilizer decomposition:
    \begin{equation}
        \tilde{\mathcal{T}}=\sum_j{v_jw_j\ket{L_j}\bra{R_j}}
    \end{equation}
    where $v_j = \|(L_j \otimes I)\ket{\Phi^+}\|$ and $w_j = \|(R_j \otimes I)\ket{\Phi^+}\|$.
    Using the completeness of Kraus operators ($\sum_j{L_j^\dagger L_j}=\sum_j{R_j^\dagger R_j} = I$), we can show that $\sum_j{v_j^2}=\sum_j{w_j^2}=1$:
    \begin{equation}
        \begin{split}
            \sum_j{v_j^2} &= \sum_j{\|(L_j \otimes I)\ket{\Phi^+}\|^2} = \sum_j{\bra{\Phi^+}(L_j^\dagger \otimes I)(L_j \otimes I)\ket{\Phi^+}} = \bra{\Phi^+}I\ket{\Phi^+}=1\\
            \sum_j{w_j^2} &= \sum_j{\|(R_j \otimes I)\ket{\Phi^+}\|^2} = \sum_j{\bra{\Phi^+}(R_j^\dagger \otimes I)(R_j \otimes I)\ket{\Phi^+}} = \bra{\Phi^+}I\ket{\Phi^+}=1
        \end{split}
    \end{equation}
    By the triangle inequality, $\tilde{\mathcal{T}}$ yields a natural dyadic decomposition with $\ell_1$-norm at most 1:
    \begin{equation}
        \sum_j|v_jw_j| \le \sqrt{(\sum_jv_j^2)(\sum_jw_j^2)} = 1.
    \end{equation}
    
    Let $\mathcal{V}=\sum_k{\alpha_k\mathcal{T}_k}$ be the optimal channel decomposition of $\mathcal{V}$, then its Choi-state $\tilde{\mathcal{V}} = \sum_k{\alpha_k\tilde{\mathcal{T}_k}}$ gives a dyadic decomposition with its $\ell_1$-norm at most $\sum_k{\abs{\alpha_k}} = \Lambda(\mathcal{V})$. Therefore, by the definition of the negativity,
    \begin{equation}
        \Lambda(\mathcal{V}) \ge \Lambda(\tilde{\mathcal{V}}).
    \end{equation}
\end{proof}

\begin{lemma}\label{lemma:magic-state}
    Suppose $V$ is a diagonal unitary operator and $\ket{\Phi_V}=(V_A \otimes I_B)\ket{\Phi^+}$. Then, the stabilizer extent satisfies:
    \begin{equation}
        \xi(\ket{\Phi_V}) \ge \xi(\ket{V}).
    \end{equation}
\end{lemma}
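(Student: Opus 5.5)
The plan is to relate the two states through a Clifford unitary followed by a stabilizer-preserving projection, and to use the fact that the stabilizer extent cannot increase under such operations. Write $\ket{\Phi^+} = 2^{-t/2}\sum_{x\in\{0,1\}^t}\ket{x}_A\ket{x}_B$ on $2t$ qubits. Since $V$ is diagonal, $V\ket{x} = v(x)\ket{x}$ with $|v(x)|=1$. Hence
\begin{equation*}
    \ket{\Phi_V} = 2^{-t/2}\sum_{x} v(x)\ket{x}_A\ket{x}_B, \qquad \ket{V} = V\ket{+^{\otimes t}} = 2^{-t/2}\sum_x v(x)\ket{x}.
\end{equation*}

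\textbf{Step 1 (Clifford disentangling).} Let $W=\prod_{j=1}^t \mathrm{CNOT}_{A_j\to B_j}$, which is Clifford. It maps $\ket{x}_A\ket{x}_B\mapsto\ket{x}_A\ket{0^t}_B$, so
\begin{equation*}
    W\ket{\Phi_V} = \ket{V}_A\otimes\ket{0^{\otimes t}}_B .
\end{equation*}
Clifford unitaries permute $\mathrm{STAB}_{2t}$. Pushing any decomposition through $W$ or $W^\dagger$ therefore preserves its $\ell_1$-norm, which gives $\xi(\ket{\Phi_V}) = \xi(\ket{V}\otimes\ket{0^{\otimes t}})$.

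\textbf{Step 2 (removing the ancilla).} It remains to show $\xi(\ket{V}\otimes\ket{0^{\otimes t}})\ge\xi(\ket{V})$. Take an optimal decomposition $\ket{V}\otimes\ket{0^{\otimes t}}=\sum_k c_k\ket{\phi_k}$ with $\ket{\phi_k}\in\mathrm{STAB}_{2t}$, and apply $I_A\otimes\bra{0^{\otimes t}}_B$ to both sides. This yields
\begin{equation*}
    \ket{V}=\sum_k c_k\ket{\psi_k}, \qquad \ket{\psi_k}=(I\otimes\bra{0^{\otimes t}})\ket{\phi_k}.
\end{equation*}
The key fact is that each $\ket{\psi_k}$ is either zero or of the form $r_k\ket{\phi'_k}$ with $\ket{\phi'_k}\in\mathrm{STAB}_t$ and $0<r_k\le 1$. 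This holds because a computational-basis projection of a subsystem of a stabilizer state yields a stabilizer state up to normalization, which is the standard Gottesman--Knill measurement rule, and $\|\ket{\psi_k}\|\le\|\ket{\phi_k}\|=1$. Absorbing the phases, we get a stabilizer decomposition of $\ket{V}$ with coefficients $c_kr_k$, and
\begin{equation*}
    \Big(\sum_k |c_k r_k|\Big)^2\le\Big(\sum_k|c_k|\Big)^2=\xi(\ket{V}\otimes\ket{0^{\otimes t}}).
\end{equation*}
Minimizing over decompositions of $\ket{V}$ then gives $\xi(\ket{V})\le\xi(\ket{\Phi_V})$.

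The only step needing care is the claim in Step~2 that partial projection onto $\ket{0^{\otimes t}}$ sends stabilizer states to subnormalized stabilizer states (or zero). I would cite the stabilizer measurement update: measuring $Z_{B_j}$ either deterministically yields a state or splits it into two stabilizer branches of weight $1/2$. Iterating this over $j=1,\dots,t$ gives norm $2^{-m/2}$ for some integer $m\ge 0$, or zero. Everything else is a direct computation. The same argument shows the stronger fact that $\xi(\ket{\Phi_V})=\xi(\ket{V})$, since the reverse inequality follows by tensoring a decomposition of $\ket{V}$ with $\ket{0^{\otimes t}}$ and applying $W^\dagger$.
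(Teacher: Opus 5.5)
Your proof is correct and follows the paper's route: the same transversal CNOT maps $\ket{\Phi_V}$ to $\ket{V}\otimes\ket{0^{\otimes t}}$, after which the ancilla has to be removed. Your Step~2 is more careful than the paper's one-line justification ("Clifford operations and tensoring a stabilizer ancilla do not increase the extent"), which as phrased yields $\xi(\ket{V}\otimes\ket{0^{\otimes t}})\le\xi(\ket{V})$, the opposite direction; your projection onto $\bra{0^{\otimes t}}_B$ supplies the inequality the lemma actually needs, and combined with that trivial bound it gives equality.
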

\begin{proof}
    As $V$ is a diagonal operator, we can let
    \begin{equation}
        V=\sum_{x=0}^{2^{n-1}}{e^{i\theta_x}\ket{x}\bra{x}}
    \end{equation}
    Applying this to the maximally entangled state yields,
    \begin{equation}
        \ket{\Phi_V}=2^{-n/2}\sum_{x=0}^{2^n-1} {(V_A \otimes I_B)\ket{x}_A\ket{x}_B}
        = 2^{-n/2}\sum_{x=0}^{2^n-1}{e^{i\theta_x}\ket{x}_A\ket{x}_B}
    \end{equation}
    Let $U=\bigotimes_{k=1}^{n}{\text{CNOT}_{A_k,B_k}}$. $U$ acts as $U\ket{x}_A\ket{y}_B=\ket{x}_A\ket{x\oplus y}_B$. Then,
    \begin{equation}
        U\ket{\Phi_V}=2^{-n/2}\sum_{x=0}^{2^n-1}{e^{i\theta_x}U\ket{x}_A\ket{x}_B}
        = 2^{-n/2}\sum_{x=0}^{2^n-1}{e^{i\theta_x}\ket{x}_A\ket{0^n}_B}
        = \ket{V}_A\ket{0^n}_B
    \end{equation}
    Since applying Clifford operations $U$ and tensoring a stabilizer ancilla do not increase stabilizer extent, we obtain the lower bound 
    \begin{equation}
        \xi(\ket{\Phi_V}) \ge \xi(\ket{V}).
    \end{equation}
\end{proof}
\subsubsection{Near-optimality Theorem}
With the necessary lemmas established, we are now ready to prove our main result.
\nearoptimalThm*
\begin{proof}
    We establish the identity $\xi(V) = \Lambda(\mathcal{V})$ by constructing a chain of inequalities:
    \begin{equation}
        \xi(V) \ge \Lambda(\mathcal{V}) \ge \Lambda(\tilde{\mathcal{V}}) \ge \xi(\ket{\Phi_V}) \ge \xi(\ket{V}),
    \end{equation}
    The Choi state $\tilde{\mathcal{V}}$ of $\mathcal{V}$ is defined as $\tilde{\mathcal{V}} = (\mathcal{V}_A \otimes \mathcal{I}_B)(\ket{\Phi^+}\bra{\Phi^+})$, where the maximally entangled state is $\ket{\Phi^+}=2^{-n/2}\sum_{x=0}^{2^n-1}\ket{x}_A\ket{x}_B$ and $\ket{\Phi_V}=(V_A \otimes I_B)\ket{\Phi^+}$. As $V$ is unitary, $\tilde{\mathcal{V}}=\ket{\Phi_V}\bra{\Phi_V}$ is a pure state. 
    
    We justify each step in the chain as follows:
    \begin{enumerate}
        \item $\xi(V) \ge \Lambda(\mathcal{V})$: The equivalence of stabilizer decompositions presented in Eq.~\eqref{eqn:xi-and-lambda} ensures that the stabilizer extent of the operator $V$ upper bounds the negativity of its associated channel $\mathcal{V}$.
        \item $\Lambda(\mathcal{V}) \ge \Lambda(\tilde{\mathcal{V}})$: This lower bound is established in Lemma~\ref{lemma:choi-negativity}.
        \item $\Lambda(\tilde{\mathcal{V}}) = \xi(\ket{\Phi_V})$: Since $\tilde{\mathcal{V}}$ is a pure state, its negativity corresponds exactly to the stabilizer extent by Lemma 1 in \cite{DFS:2020}.
        \item $\xi(\ket{\Phi_V}) \ge \xi(\ket{V})$: This lower bound is established in Lemma~\ref{lemma:magic-state}.
    \end{enumerate}
    Finally, if the lifting lemma holds, we know that $\xi(\ket{V}) = \xi(V)$. This forces the entire chain to collapse into a series of equalities, directly proving that:
    \begin{equation}
        \xi(V) = \Lambda(\mathcal{V})
    \end{equation}
\end{proof}

\clearpage
\section{Additional Simulation Results}\label{sec:results-appendix}
This appendix reports supplementary simulation data for Section~\ref{sec:results}.

\subsection{Additional Results for QAOA Ansatz} \label{sec:results-appendix-qaoa}
 This section provides further details on the N20D3 and N60D4 benchmark from Section~\ref{sec:results-qaoa}.

\begin{figure}[htbp]
    \centering
    \begin{subfigure}{.48\textwidth}
        \centering
        \phantomsubcaption
        \label{fig:qaoa-extra-n20-xi}
        \qaoaXi{data/details_N20D3.csv}{5}{4}{(a)}
    \end{subfigure}
    \hfill
    \begin{subfigure}{.48\textwidth}
        \centering
        \phantomsubcaption
        \label{fig:qaoa-extra-n60-xi}
        \qaoaXi{data/details_N60D4.csv}{5}{4}{(b)}      
    \end{subfigure}    
    \caption{
    Gate-wise stabilizer extent $\xi_\star$ as a function of $\gamma$ for (a) the N20D3 instance and (b) the N60D4 instance.}
    \label{fig:qaoa-extra}
\end{figure}

\begin{figure}[H]
    \centering
    \graphscatterC{data/details_N20D3.csv}{5}{4}{Sample counts $h_{\mathrm{Hoeffding}}$ and $h_\mathrm{CARVE}$ in the benchmark of N20D3, as functions of $\xi_\star$ on logarithmic axes. Dashed segments indicate the reference scalings $h\propto\xi_\star$ and $h\propto\xi_\star^2$.}
    \label{fig:qaoa-extra-n20-scatter}
\end{figure}

\subsection{Additional Results for Quantum Kernel Estimation} \label{sec:results-appendix-qke}

This section provides further details on the $n=2$ and $n=64$ benchmark from Section~\ref{sec:results-qsvm} alongside a newly simulated $n=5$ instance. All simulations employ the same settings as the main text: the Pauli feature map $U(x)$, $\epsilon=0.2$, and $\delta=0.2$.

\begin{figure}[htbp]
    \centering
    \begin{subfigure}{.48\textwidth}
        \centering        
        \phantomsubcaption
        \label{fig:qke-heatmap-n5-res}
        \qkeResult{data/details_QKE_n5.csv}{4.5}{4.5}{(a)}
    \end{subfigure}
    \hfill
    \begin{subfigure}{.48\textwidth}
        \centering
        \phantomsubcaption
        \label{fig:qke-heatmap-n5-truth}
        \qkeTruth{data/details_QKE_n5.csv}{4.5}{4.5}{(b)}
    \end{subfigure}
    \vspace{1em}

    \begin{subfigure}{.48\textwidth}
        \centering        
        \phantomsubcaption
        \label{fig:qke-heatmap-n64-res}
        \qkeResult{data/details_QKE_n64.csv}{4.5}{4.5}{(c)}
    \end{subfigure}
    \hfill
    \begin{subfigure}{.48\textwidth}
        \centering
        \phantomsubcaption
        \label{fig:qke-heatmap-n2-xi}
        \qkeXi{data/details_QKE_n2.csv}{4.5}{4.5}{(d)}
    \end{subfigure}
    \vspace{1em}
    
    \begin{subfigure}{.48\textwidth}
        \centering
        \phantomsubcaption
        \label{fig:qke-heatmap-n5-xi}
        \qkeXi{data/details_QKE_n5.csv}{4.5}{4.5}{(e)}
    \end{subfigure}
    \hfill
    \begin{subfigure}{.48\textwidth}
        \centering
        \phantomsubcaption
        \label{fig:qke-heatmap-n64-xi}
        \qkeXi{data/details_QKE_n64.csv}{4.5}{4.5}{(f)}
    \end{subfigure}    
    \caption{
    Quantum kernel estimation results for $n = 2, 5$ and $64$. (a) Estimated and (b) exact kernel matrices $K(x_i,x_j)$ for $n=5$. (c) Estimated kernel matrix $K(x_i,x_j)$ for $n=64$. (d)-(f) Gate-wise stabilizer extent $\xi_\star(x_i, x_j)$ matrices for $n=2, 5$, and $64$, respectively.}
    \label{fig:qke-extra}
\end{figure}
\clearpage

\begin{figure}[htbp]
    \centering
    \ref*{sharedlegend2}
    \vspace{0.0cm}
    
    \begin{subfigure}{.48\textwidth}
        \centering
        \phantomsubcaption
        \graphscatterD{data/details_QKE_n2.csv}{5}{4}
        \label{fig:qke-extra-n2-scatter}
    \end{subfigure}
    \begin{subfigure}{.48\textwidth}
        \centering
        \phantomsubcaption
        \graphscatterE{data/details_QKE_n5.csv}{5}{4}
        \label{fig:qke-extra-n5-scatter}
    \end{subfigure}
    \caption{Additional sample counts $h_{\mathrm{Hoeffding}}$ and $h_\mathrm{CARVE}$ as functions of $\xi_\star$ on logarithmic axes. Dashed segments indicate the reference scalings $h\propto\xi_\star$ and $h\propto\xi_\star^2$. Quantum kernel estimation benchmark results for (a) $n=2$ and (b) $n=5$.}
    \label{fig:qke-extra-scatter}
\end{figure}

Figure~\ref{fig:qke-extra} illustrates both the estimated kernel matrices and their corresponding gate-wise stabilizer extents. For $n=5$, comparing the estimated heatmaps with the exact state-vector ground truth visually confirms the accuracy of our approach, with entrywise absolute errors remaining well below $\epsilon = 0.2$. For $n=64$, the kernel heatmap closely resembles an identity matrix. This behavior is numerically anticipated, as independently generated high-dimensional feature states become nearly orthogonal, rendering the off-diagonal entries negligible. 

\end{document}